\documentclass[
aps,
prd,
preprint,
titlepage,
floatfix,
superscriptaddress,
nofootinbib,
longbibliography
]{revtex4-2}

\usepackage{amsmath}
\usepackage{amssymb}
\usepackage{amsthm}
\usepackage{bm}
\usepackage{mathtools}
\usepackage{graphicx}
\usepackage{booktabs}
\usepackage{xcolor}
\usepackage{microtype}

\DeclareMathOperator{\Tr}{Tr}
\DeclareMathOperator{\rank}{rank}
\DeclareMathOperator{\ran}{ran}
\DeclareMathOperator{\diag}{diag}

\newcommand{\QFI}{\mathcal{Q}}
\newcommand{\CFI}{\mathcal{F}}
\newcommand{\dcp}{\delta_{\mathrm{CP}}}
\newcommand{\dd}{\mathrm{d}}

\theoremstyle{plain}
\newtheorem{theorem}{Theorem}[section]

\newtheorem{proposition}[theorem]{Proposition}
\newtheorem{corollary}[theorem]{Corollary}

\theoremstyle{remark}

\usepackage[hidelinks]{hyperref}

\hypersetup{
  pdftitle={
    Operational quantum estimation theory for neutrino oscillations:
    identifiability, attainability, and spectral precision bounds
  },
  pdfauthor={Jianlong Lu},
  pdfsubject={
    Quantum estimation theory, multiparameter identifiability,
    attainability, and detector-level inference for neutrino oscillations
  },
  pdfkeywords={
    quantum estimation theory,
    quantum Fisher information,
    neutrino oscillations,
    multiparameter estimation,
    identifiability,
    Holevo bound,
    spectral information,
    DUNE
  },
  pdfdisplaydoctitle=true
}

\begin{document}

\title{Operational quantum estimation theory for neutrino oscillations: identifiability, attainability, and spectral precision bounds}

\author{Jianlong Lu}
\email{jianlong@nus.edu.sg}
\affiliation{
Department of Mathematics, National University of Singapore,
Singapore 119076
}

\date{\today}

\begin{abstract}
Quantum Fisher information (QFI) bounds state-encoded precision before
measurement choice but does not establish identifiability, joint
attainability, or detector sensitivity.  We develop an operational framework
for three-flavor oscillations that separates state, measurement, and
reconstructed-event information.  At fixed baseline, energy, source flavor,
and matter profile, the propagated state is a pure qutrit, so its
six-coordinate QFI has rank at most four.  Resolved broadband components can
restore rank because the aggregate kernel equals the intersection of their
active kernels.  We analyze joint attainability using quantum curvature,
exact pure-state Holevo costs, and numerical primal--dual brackets, and
propagate information through flavor projection,
detector response, Poisson sampling, and nuisance profiling.  An independent
implementation of the public DUNE GLoBES configuration reproduces the
reference spectra to relative error $3.87\times10^{-16}$ and the profiled
likelihood curvature to $3.26\times10^{-6}$.  Under the declared scaling and
tolerance, its 264-bin event information has numerical rank six at all 176
documented physics points, whereas any four retained rule totals have rank at
most four.  The weakest record has effective rank five at the declared
practical threshold.  A conditional likelihood pilot also shows finite-grid
overcoverage and dependence on the auxiliary-measurement ensemble.  The
framework identifies when local quantum or Fisher bounds do not support
global experimental-sensitivity claims.
\end{abstract}

\makeatletter
\@booleanfalse\preprintsty@sw
\makeatother
\maketitle
\makeatletter
\@booleantrue\preprintsty@sw
\makeatother

\section{Introduction}
\label{sec:introduction}

Neutrino flavor conversion is a coherent interference phenomenon and one of
the principal sources of evidence that neutrinos are massive.  In the standard
three-flavor description, propagation is governed by two independent
mass-squared splittings, three mixing angles, and one Dirac CP phase, with
matter effects introducing an additional dependence on the propagation
environment
\cite{Pontecorvo1958,MakiNakagawaSakata1962,Wolfenstein1978,
MikheyevSmirnov1985,ParticleDataGroup2026}.
Determining these parameters precisely is central to tests of the three-flavor
paradigm and to searches for leptonic CP violation, the neutrino mass ordering,
and deviations from standard propagation.  This inference problem is
intrinsically multiparameter: appearance and disappearance probabilities
depend on correlated combinations of the oscillation coordinates, the phase
$\dcp$ is periodic, $\theta_{23}$ admits an octant ambiguity, and the mass
ordering is a discrete hypothesis.  Current global analyses consequently
exhibit non-Gaussian and ordering-dependent parameter profiles
\cite{NuFIT2024}, while future broadband experiments are designed to resolve
these structures using the joint energy dependence of neutrino and
antineutrino appearance and disappearance samples
\cite{DUNETDR2020,DUNEPhysics2020}.

Conventional sensitivity analyses begin with predicted event spectra and a
statistical model for reconstructed detector observations.  Their local
curvature is described by a classical Fisher information matrix, while
nonlocal confidence regions are obtained from a likelihood or a
Monte Carlo-calibrated likelihood-ratio construction.  Quantum estimation
theory asks a complementary question: how much local information about a
parameter is encoded in a specified family of quantum states before a
particular measurement is imposed?  For every fixed positive-operator-valued
measure (POVM), its classical Fisher information matrix is bounded above by
the symmetric-logarithmic-derivative quantum Fisher information matrix.  More
precisely, along any one-dimensional tangent direction $\bm{v}$, the supremum
of the corresponding scalar classical Fisher information over admissible
measurements is
$\bm{v}^{\mathsf T}\QFI\bm{v}$.  In a multiparameter problem, however, the
measurements optimizing different tangent directions need not coincide.
Consequently, the QFI defines a natural statistical metric and a collection
of directional precision bounds, but its inverse need not be a jointly
attainable covariance matrix
\cite{BraunsteinCaves1994,Paris2009,Holevo2011,Matsumoto2002}.
It is therefore a useful benchmark only after the state family,
parameterization, estimands, number and allocation of state preparations, and
allowed measurements have been fixed.  It is not, by itself, a detector
sensitivity or a globally valid confidence-region calculation.

For neutrino experiments, this distinction leads naturally to a hierarchy of
statistical models,
\begin{equation}
  \rho_{\boldsymbol{\lambda}}(L,E)
  \xrightarrow{\{M_y\}}
  p_y(E|\boldsymbol{\lambda})
  \xrightarrow{\Phi,\,R_{b|y,E},\,\epsilon,\,\boldsymbol{\nu}}
  \mu_b(\boldsymbol{\lambda},\boldsymbol{\nu})
  \xrightarrow{\mathrm{Poisson}+\mathrm{constraints}}
  \mathcal{L}(\boldsymbol{\lambda},\boldsymbol{\nu}),
  \label{eq:introduction_operational_chain}
\end{equation}
where $\rho_{\boldsymbol{\lambda}}$ is the propagated neutrino state,
$\{M_y\}$ is a physical measurement such as flavor projection, $\Phi$ denotes
the incident spectrum and exposure, $R_{b|y,E}$ is the detector response,
$\epsilon$ denotes efficiencies, $\boldsymbol{\nu}$ collects nuisance
parameters, and $\mu_b$ is the expected reconstructed count in bin $b$.
The state QFI, the classical Fisher information associated with the fixed
measurement $\{M_y\}$, the Poisson event Fisher information, and the
nuisance-profiled event information answer different operational questions.
Conflating these layers can turn a valid quantum bound into an invalid claim
about experimentally achievable precision.

Quantum-estimation methods were applied to neutrino oscillations in a
two-flavor setting by Nogueira \emph{et al.}, who compared the information
encoded in oscillating neutrino states with that accessible through flavor
measurements \cite{NogueiraEtAl2017}.  More recent work has extended this
program to three-flavor oscillations, CP-phase estimation, coherent and
incoherent state models, reconstructed accelerator-event spectra, and
multiparameter degeneracy diagnostics
\cite{IgnotiEtAl2025,FrugiueleEtAl2026,ChundawatLi2026,
ChundawatEtAl2026,YadavDegeneracy2026,HuangEtAl2026}.
Frugiuele \emph{et al.} compared state QFI with the classical Fisher
information of ideal flavor measurements for accelerator and reactor states
in vacuum, demonstrating that flavor projection can be close to optimal for
some mixing angles while extracting only a limited fraction of the
information encoded about $\dcp$
\cite{FrugiueleEtAl2026}.  Chundawat \emph{et al.} compared intrinsic quantum
information about $\dcp$ with information in reconstructed T2K and NO$\nu$A
event spectra \cite{ChundawatEtAl2026}.  Yadav \emph{et al.} studied
the three-coordinate submodel
$(\theta_{23},\dcp,\Delta m^2_{31})$ and used fidelity and the QFI matrix to
distinguish quantum states associated with probability-degenerate parameter
points \cite{YadavDegeneracy2026}.  Huang \emph{et al.} considered all six
standard oscillation coordinates, including the role of
parameter-dependent basis transformations and off-diagonal QFI correlations
\cite{HuangEtAl2026}.  Together, these works establish quantum estimation as a
useful language for neutrino parameter encoding.  They also sharpen the need
to distinguish local quantum geometry from simultaneous attainability,
spectral identifiability, and detector-level inference.

The first unresolved issue is identifiability.  At a fixed preparation and
propagation setting, a normalized pure three-flavor state defines a ray in
$\mathbb{CP}^{2}$, whose real dimension is four.  In vacuum, the propagation
setting may be parameterized by $L/E$; in matter, the baseline, energy, and
matter profile must in general be specified separately.  Consequently, the
image of any six-coordinate pure-state oscillation model at one such setting
has dimension at most four, and its QFI matrix has rank at most four.  Its
ordinary inverse therefore cannot represent six simultaneous precision
bounds, irrespective of the numerical size of its diagonal elements.  The
correct local objects are the identifiable tangent space, estimable parameter
functions, and the information metric restricted to that space.  Broadband
observations can change this conclusion because different energies may
possess different null directions.  The information aggregated over resolved
spectral components can acquire a larger rank even though every individual
fixed-energy pure-state QFI remains rank deficient.  Whether this restoration
survives depends additionally on whether the energy label is retained,
coarsened, or traced out.

The second issue is multiparameter attainability.  In a one-parameter regular
model, the symmetric-logarithmic-derivative quantum Cramér--Rao bound is
asymptotically attainable under standard local assumptions.  In a
multiparameter model, the measurements that optimize different coordinates
need not be compatible.  Matrix invertibility therefore does not imply the
existence of a common measurement attaining the inverse-QFI covariance
simultaneously.  The appropriate analysis requires the quantum geometric
curvature or weak-commutativity condition and, when incompatibility remains,
the weighted Holevo Cramér--Rao bound
\cite{Holevo2011,Matsumoto2002,Ragy2016,RagyErratum2019,AlbarelliEtAl2020,
DemkowiczEtAl2020}.
Moreover, scalar precision costs require a declared weight matrix: the
numerical comparison of parameters measured in radians and
$\mathrm{eV}^{2}$ is otherwise coordinate dependent.

A third issue concerns representation covariance.  A passive change of basis
cannot alter the information encoded by a physical state.  If the basis
transformation itself depends on the estimated parameters, however, ordinary
coordinate derivatives acquire a connection term.  Omitting that term changes
the quantum geometric tensor and can produce a spurious basis dependence.
Related care is required for antineutrino propagation.  CP conjugation acts as
a parameter-space pullback, including the Jacobian associated with
$\dcp\mapsto-\dcp$, while matter propagation also reverses the sign of the
matter potential.  Diagonal tensor components alone do not reveal the
resulting mixed-index sign transformations.

In this work, we address these issues in a single operational framework.  Our
main contributions are as follows.

\begin{enumerate}
  \item We formulate neutrino parameter estimation as an explicit
  state--measurement--event hierarchy.  State QFI, fixed-measurement flavor
  information, reconstructed-event Fisher information, Gaussian
  auxiliary-constraint information, and nuisance-profiled likelihood
  curvature are defined separately and connected only through declared
  physical maps.

  \item We prove representation covariance of the projected quantum geometric
  tensor under parameter-dependent passive basis transformations when the
  required connection is included.  We also derive the full tensor pullback
  under CP conjugation and distinguish this coordinate transformation from
  the physical reversal of the matter potential.

  \item We prove the pure-state rank bound
  $\rank\QFI\leq 2(d-1)$ and apply it to the six-coordinate pure qutrit model.
  For broadband information matrices
  $\QFI_{\mathrm{broad}}=\sum_k N_k\QFI_k$ with positive weights, we establish
  \begin{equation}
    \ker \QFI_{\mathrm{broad}}
    = \bigcap_k \ker \QFI_k ,
    \label{eq:introduction_kernel_intersection}
  \end{equation}
  which gives an exact criterion for spectral rank restoration.  We then
  distinguish a resolved classical energy label from an incoherent state
  obtained by tracing over that label and quantify the associated information
  loss.

  \item We evaluate multiparameter attainability on the identifiable tangent
  space.  This includes compatibility diagnostics, exact pure-state Holevo
  costs where available, and finite-dimensional semidefinite programs with
  tolerance-qualified numerical primal--dual brackets for mixed spectral
  states.  These calculations separate a small inverse-SLD cost from a
  precision cost that is actually attainable under the stated asymptotic
  measurement model.

  \item We independently reproduce and reassess the fixed-energy
  six-coordinate calculation of Ref.~\cite{HuangEtAl2026}, and then carry the
  resulting framework to reconstructed detector spectra.  For the
  checksum-locked public DUNE GLoBES configuration
  \cite{HuberEtAl2005,KoppEtAl2007,DUNEPublicConfig2021}, our independent
  forward model reproduces all 640 GLoBES-exported rule/bin signal and
  background entries with relative $\ell_2$ error $3.87\times10^{-16}$.  Its
  nuisance-profiled likelihood curvature agrees with the independent GLoBES
  calculation with relative $\ell_2$ error $3.26\times10^{-6}$.  The joint
  264-bin event information has numerical rank six at the declared relative
  $10^{-10}$ threshold at all 176 points of the documented physics envelope,
  which forms the physics subset of a 226-record robustness ledger, whereas
  replacing the spectra by four rule totals gives numerical rank four.  The
  weakest spectral direction is nevertheless effectively rank deficient at a
  declared effective-information threshold of $10^{-6}$ after the coordinate
  scaling, demonstrating that numerical rank and useful precision are not
  equivalent.
\end{enumerate}

The novelty of the present work is therefore not the broad observation that
QFI can differ from flavor-measurement information, nor the use of Fisher
information for a single oscillation coordinate.  It is the integrated
connection between representation-covariant quantum geometry, exact
identifiability, attainable multiparameter costs, spectral-label processing,
and a validated detector likelihood.  This connection also clarifies the
scope of every numerical result: theorem-level statements and exact analytic
results are kept separate from tolerance-qualified numerical computations,
public-configuration benchmarks, and diagnostic likelihood pilots.

The experimental study deliberately uses the simplified public configuration
accompanying the DUNE simulation note.  That configuration was released for
phenomenological studies and produces sensitivities similar, but not
identical, to the official Technical Design Report analysis
\cite{DUNEPublicConfig2021}.  It contains named normalization uncertainties
rather than the full official covariance and near-detector constraint model.
Accordingly, all DUNE quantities reported here are local
public-configuration benchmarks.  We do not interpret the inverse event Fisher
matrix as a globally valid confidence region or quote an official ordering or
CP sensitivity.  An exact conditional likelihood bank, paired
pseudo-experiment pilot, and expanded four-coordinate Asimov profile are
instead used to diagnose finite-grid resolution, dependence on the
auxiliary-measurement ensemble, and failures of automatic Wilks thresholds.
The expanded profile remains conditional on fixed solar coordinates.  Such
caution is required for periodic parameters, discrete orderings, boundaries,
and constrained nuisances, as emphasized independently by the NO$\nu$A
profile-construction study \cite{NOvAProfiledFC2022}.

The remainder of the paper is organized as follows.  Section~\ref{sec:model}
defines the operational statistical model and fixes the oscillation,
antineutrino, matter, and parameter conventions.  Section~\ref{sec:geometry}
develops the projected quantum geometric tensor, representation covariance,
and CP-conjugation laws.  Section~\ref{sec:identifiability} proves the
fixed-energy rank and broadband kernel-intersection results and analyzes loss
of the spectral label.  Section~\ref{sec:attainability} treats
multiparameter compatibility, exact pure-state Holevo results, and numerical
primal--dual Holevo brackets.  Section~\ref{sec:detector_likelihood} derives
the flavor, detector, Poisson, and nuisance-profiled information layers.
Section~\ref{sec:reassessment} presents the independent reproduction and
geometric interpretation of the fixed-energy calculation.  Section~\ref{sec:dune} gives
the public-DUNE validation, information analysis, robustness study, and
conditional likelihood diagnostics.  Finally,
Secs.~\ref{sec:discussion} and~\ref{sec:conclusions} discuss the distinction
between mathematical and practical identifiability, state the limitations of
the present experimental model, and summarize the results.

\section{Operational statistical model and conventions}
\label{sec:model}

Quantum Fisher information is meaningful only relative to a specified state
family, set of estimands, and class of admissible measurements.  Experimental
precision requires still more structure: source spectra, interaction rates,
detector response, backgrounds, nuisance parameters, and a sampling model.
We therefore define the complete inference problem before introducing its
quantum geometry.  This separation prevents quantities associated with
different statistical experiments from being compared as if they were
interchangeable.

\subsection{Estimands, controls, and the operational model}

The oscillation-parameter vector is
\begin{equation}
  \bm{\lambda}
  =
  \left(
    \theta_{12},
    \theta_{13},
    \theta_{23},
    \dcp,
    \Delta m^2_{21},
    \Delta m^2_{31}
  \right)^{\mathsf T}.
  \label{eq:parameter-vector}
\end{equation}
Angles and phases are expressed in radians and the mass-squared differences
in $\mathrm{eV}^2$.  We take $\Delta m^2_{21}>0$; the sign of
$\Delta m^2_{31}$ distinguishes normal and inverted ordering.  The CP phase
is periodic, $\dcp\equiv\dcp+2\pi$, so Eq.~\eqref{eq:parameter-vector}
defines a local coordinate chart rather than a globally Euclidean parameter
space.

We distinguish three kinds of quantities throughout:

\begin{enumerate}
  \item the estimands $\bm{\lambda}$ whose local precision is being studied;
  \item fixed controls $\bm{c}$, such as baseline, true energy, source flavor,
        horn polarity, matter profile, and exposure; and
  \item nuisance coordinates $\bm{\nu}$ describing uncertain flux,
        cross-section, response, calibration, background, and normalization
        parameters.
\end{enumerate}

For a fixed experimental specification, the operational model may be written
schematically as
\begin{equation}
  \mathfrak{M}
  =
  \left(
    \{\rho_{\alpha,0}\},
    \mathcal{E}_{\bm{c},\bm{\lambda}},
    \{M_y\},
    \mathcal{X},
    R,
    \epsilon,
    B,
    \mathcal{L}_{\mathrm{aux}}
  \right).
  \label{eq:operational-model}
\end{equation}
Here $\rho_{\alpha,0}$ is the source state of flavor $\alpha$,
$\mathcal{E}_{\bm{c},\bm{\lambda}}$ is the propagation channel,
$\{M_y\}$ is a declared quantum measurement, $\mathcal{X}$ collects exposure,
flux, cross-section, and true-bin factors, $R$ is the classical migration
kernel, $\epsilon$ denotes efficiencies, $B$ the background model, and
$\mathcal{L}_{\mathrm{aux}}$ the likelihood for auxiliary constraints on
$\bm{\nu}$.

Equation~\eqref{eq:operational-model} defines a sequence
\begin{equation}
  \rho_{\alpha,0}
  \xrightarrow{\ \mathcal{E}_{\bm{c},\bm{\lambda}}\ }
  \rho_{\alpha k}
  \xrightarrow{\ \{M_y\}\ }
  p_{y|\alpha k}
  \xrightarrow{\ \mathcal{X},R,\epsilon,B\ }
  \mu_b
  \xrightarrow{\ \mathrm{Poisson}\ }
  n_b ,
  \label{eq:operational-chain}
\end{equation}
where $k$ labels true-energy or other preparation bins, $y$ labels quantum
measurement outcomes, $b$ is a compound reconstructed-event index, $\mu_b$
is the predicted event mean, and $n_b$ is the observed count.  Each arrow
answers a different statistical question and may discard information.

A passive representation change, including a parameter-dependent choice of
basis, is not an additional arrow in Eq.~\eqref{eq:operational-chain}.  It must
transform the states, measurements, and their derivatives consistently and
cannot change an operational information tensor.  By contrast, physically
applying a parameter-dependent unitary while keeping the measurement fixed
defines a new encoding channel and hence a different model
$\mathfrak{M}$.  The representation-covariance proof is given in
Sec.~\ref{sec:geometry}.

\subsection{Three-flavor propagation conventions}

We use the standard three-flavor mixing convention
\cite{Pontecorvo1958,MakiNakagawaSakata1962}
\begin{equation}
  U
  =
  R_{23}U_{13}(\dcp)R_{12},
  \label{eq:pmns-factorization}
\end{equation}
or explicitly
\begin{equation}
  U =
  \begin{pmatrix}
    c_{12}c_{13}
      & s_{12}c_{13}
      & s_{13}e^{-i\dcp}
      \\[2pt]
    -s_{12}c_{23}-c_{12}s_{23}s_{13}e^{i\dcp}
      & c_{12}c_{23}-s_{12}s_{23}s_{13}e^{i\dcp}
      & s_{23}c_{13}
      \\[2pt]
    s_{12}s_{23}-c_{12}c_{23}s_{13}e^{i\dcp}
      & -c_{12}s_{23}-s_{12}c_{23}s_{13}e^{i\dcp}
      & c_{23}c_{13}
  \end{pmatrix},
  \label{eq:pmns-matrix}
\end{equation}
where $s_{ij}=\sin\theta_{ij}$ and $c_{ij}=\cos\theta_{ij}$.
An irrelevant common mass has been removed by setting
\begin{equation}
  D_m
  =
  \diag
  \left(0,\Delta m^2_{21},\Delta m^2_{31}\right).
  \label{eq:mass-convention}
\end{equation}

In the flavor basis, neutrino propagation through matter satisfies
\begin{equation}
  i\frac{\dd}{\dd x}|\psi(x)\rangle
  =
  H_\nu(E,x;\bm{\lambda})|\psi(x)\rangle ,
  \label{eq:schrodinger-equation}
\end{equation}
with
\begin{align}
  H_\nu(E,x;\bm{\lambda})
  &=
  \frac{1}{2E}
  \left[
    U D_m U^\dagger
    +
    \diag\!\left(a(E,x),0,0\right)
  \right],
  \label{eq:neutrino-hamiltonian}
  \\
  a(E,x)
  &=
  2E V_{\mathrm{CC}}(x)
  =
  2\sqrt{2}\,G_{\mathrm F}N_e(x)E .
  \label{eq:matter-potential}
\end{align}
This is the standard coherent forward-scattering potential
\cite{Wolfenstein1978,MikheyevSmirnov1985}.  For density
$\rho_{\mathrm m}$ in $\mathrm{g\,cm^{-3}}$, electron fraction $Y_e$, and
energy in $\mathrm{GeV}$, the effective mass-squared contribution is
approximately
\begin{equation}
  a
  =
  1.526\times10^{-4}\,
  Y_e
  \left(\frac{\rho_{\mathrm m}}{\mathrm{g\,cm^{-3}}}\right)
  \left(\frac{E}{\mathrm{GeV}}\right)
  \mathrm{eV}^2 .
  \label{eq:numerical-matter-potential}
\end{equation}

Antineutrino propagation applies both required transformations,
\begin{equation}
  H_{\bar{\nu}}(E,x;\bm{\lambda})
  =
  \frac{1}{2E}
  \left[
    U^*D_mU^{\mathsf T}
    -
    \diag\!\left(a(E,x),0,0\right)
  \right].
  \label{eq:antineutrino-hamiltonian}
\end{equation}
Thus CP conjugation is not implemented by complex conjugating $U$ alone in
matter.

For constant density, the propagated state from source flavor $\alpha$ is
\begin{equation}
  |\psi_{\alpha k}(\bm{\lambda})\rangle
  =
  S_k(\bm{\lambda})|\nu_\alpha\rangle,
  \qquad
  S_k
  =
  \exp\!\left[-iH(E_k;\bm{\lambda})L\right].
  \label{eq:propagated-state}
\end{equation}
Equation~\eqref{eq:propagated-state} is written in natural units,
$\hbar=c=1$.  Whenever $L$ is supplied in kilometres, it is converted
consistently to inverse-energy units before the matrix exponential is
evaluated.

In vacuum, when $L$ is in $\mathrm{km}$, $E$ in $\mathrm{GeV}$, and
$\Delta m^2$ in $\mathrm{eV}^2$, the relative amplitude phases are written
as
\begin{equation}
  \exp\!\left[
    -2i\kappa
    \frac{\Delta m^2 L}{E}
  \right],
  \qquad
  \kappa=1.267
  \label{eq:numerical-vacuum-phase}
\end{equation}
in the controlled reproduction calculations.  The public-experiment
calculation instead uses the conversion constants of the independently
validated GLoBES implementation
\cite{HuberEtAl2005,KoppEtAl2007,DUNEPublicConfig2021}; conversion prescriptions are not mixed
within an analysis.

\subsection{State and fixed-measurement information}

Let $\rho(\bm{\lambda})$ be a differentiable state model.  The symmetric
logarithmic derivatives $L_i$ satisfy
\begin{equation}
  \partial_i\rho
  =
  \frac{1}{2}\left(L_i\rho+\rho L_i\right),
  \qquad
  \partial_i\equiv\frac{\partial}{\partial\lambda_i}.
  \label{eq:sld-definition}
\end{equation}
The corresponding SLD quantum Fisher information matrix
\cite{Helstrom1976,Holevo2011} is
\begin{equation}
  \QFI_{ij}
  =
  \frac{1}{2}
  \Tr\!\left[
    \rho\left(L_iL_j+L_jL_i\right)
  \right].
  \label{eq:sld-qfi}
\end{equation}
For a normalized pure state $\rho=|\psi\rangle\langle\psi|$, this becomes
\begin{equation}
  \QFI_{ij}
  =
  4\,\operatorname{Re}
  \left[
    \langle\partial_i\psi|
    \left(\mathbb{I}-|\psi\rangle\langle\psi|\right)
    |\partial_j\psi\rangle
  \right].
  \label{eq:pure-state-qfi}
\end{equation}
The projector removes the unobservable phase direction.  The associated
imaginary tensor, which controls local multiparameter compatibility, is
introduced in Sec.~\ref{sec:geometry}.

For a fixed POVM $\{M_y\}$,
\begin{equation}
  p_y(\bm{\lambda})
  =
  \Tr[\rho(\bm{\lambda})M_y],
  \qquad
  \sum_y p_y=1,
  \label{eq:povm-probabilities}
\end{equation}
and its classical Fisher information is
\begin{equation}
  \CFI^{(M)}_{ij}
  =
  \sum_{y:p_y>0}
  \frac{
    \partial_i p_y\,\partial_j p_y
  }{p_y}.
  \label{eq:measurement-cfi}
\end{equation}
At a zero-probability point, terms in
Eq.~\eqref{eq:measurement-cfi} must not be discarded solely because both the
numerator and denominator vanish.  If the model admits a finite,
path-independent interior limit, that limit defines the corresponding local
contribution.  If the limit is direction dependent or divergent, the point is
nonregular and must instead be analyzed using directional limits,
higher-order expansions, or a nonlocal likelihood construction.

For any fixed POVM,
\begin{equation}
  \CFI^{(M)}
  \preceq
  \QFI ,
  \label{eq:qfi-cfi-order}
\end{equation}
in the positive-semidefinite order
\cite{BraunsteinCaves1994,Holevo2011}.  This inequality means that
$\bm{v}^{\mathsf T}\CFI^{(M)}\bm{v}
\leq\bm{v}^{\mathsf T}\QFI\bm{v}$ for every local tangent direction
$\bm{v}$.  Equality is measurement- and model-dependent and does not imply
simultaneous attainability of the inverse SLD matrix in a multiparameter
problem.

The ideal flavor measurement is
\begin{equation}
  M_\beta
  =
  |\nu_\beta\rangle\langle\nu_\beta|,
  \qquad
  p_{\beta|\alpha k}
  =
  |\langle\nu_\beta|\psi_{\alpha k}\rangle|^2 .
  \label{eq:flavor-measurement}
\end{equation}
If $N_k$ parameter-independent probes are prepared in resolved bin $k$, the
exposure-weighted state and flavor information matrices are
\begin{equation}
  \QFI_{\mathrm{state}}
  =
  \sum_k N_k\QFI^{(k)},
  \qquad
  \CFI_{\mathrm{flavor}}
  =
  \sum_k N_k\CFI_{\mathrm{flavor}}^{(k)} .
  \label{eq:resolved-information}
\end{equation}
Equation~\eqref{eq:resolved-information} describes an experiment in which the
preparation label $k$ is retained.  It also conditions on the source model.
If the production rate or the bin weights depend on an estimand, their rate
information must be included separately rather than being silently absorbed
into the conditional state QFI.

\subsection{Forward-folded event likelihood}

A general reconstructed-event mean can be expressed as
\begin{equation}
  \mu_b(\bm{\lambda},\bm{\nu})
  =
  \sum_{r,\alpha,\beta,k}
  \mathcal{X}_{r\alpha\beta k}(\bm{\nu})\,
  \epsilon_{r\beta k}(\bm{\nu})\,
  R_{b|r\beta k}(\bm{\nu})\,
  P_{\alpha\rightarrow\beta}
    (E_k;\bm{\lambda},\bm{\nu})
  +
  B_b(\bm{\nu}).
  \label{eq:event-mean}
\end{equation}
The index $r$ collects running mode, interaction channel, and analysis rule.
The factor $\mathcal{X}$ contains exposure, flux, cross-section, bin width,
and any fixed channel normalization.  Conditional on an event being
reconstructed and selected, the response is normalized as
\begin{equation}
  R_{b|r\beta k}\geq0,
  \qquad
  \sum_b R_{b|r\beta k}=1,
  \label{eq:response-normalization}
\end{equation}
while inefficiency is represented explicitly by $\epsilon$.  Equivalent
conventions using a column-substochastic response are allowed provided that
the choice is declared and used consistently.

For independent Poisson counts, the log likelihood is
\begin{equation}
  \ell_{\mathrm{data}}
  =
  \sum_b
  \left[
    n_b\ln\mu_b-\mu_b-\ln\Gamma(n_b+1)
  \right].
  \label{eq:poisson-log-likelihood}
\end{equation}
Introduce the joint coordinate vector
\begin{equation}
  \bm{\zeta}
  =
  \begin{pmatrix}
    \bm{\lambda}\\
    \bm{\nu}
  \end{pmatrix}.
\end{equation}
Using $q$ as the compound reconstructed-event index, the data Fisher matrix is
\begin{equation}
  F^{\mathrm{data}}_{ac}
  =
  \sum_q
  \frac{1}{\mu_q}
  \frac{\partial\mu_q}{\partial\zeta_a}
  \frac{\partial\mu_q}{\partial\zeta_c}.
  \label{eq:poisson-fisher}
\end{equation}

Auxiliary measurements or Gaussian constraints contribute
\begin{equation}
  F^{\mathrm{tot}}
  =
  F^{\mathrm{data}}
  +
  F^{\mathrm{aux}}.
  \label{eq:total-fisher}
\end{equation}
For nuisance pulls standardized to one constraint standard deviation,
$F^{\mathrm{aux}}_{\nu\nu}=\mathbb{I}$.  In a frequentist ensemble,
$F^{\mathrm{aux}}$ represents auxiliary data that may itself fluctuate.  It
should not automatically be interpreted as an immutable Bayesian prior; the
two constructions have the same local quadratic algebra but different
repeated-sampling meanings.

\subsection{Nuisance profiling and nonlocal likelihoods}

Partition the total Fisher matrix as
\begin{equation}
  F^{\mathrm{tot}}
  =
  \begin{pmatrix}
    A & B\\
    B^{\mathsf T} & C
  \end{pmatrix},
  \label{eq:fisher-blocks}
\end{equation}
where $A$ refers to the physics coordinates and $C$ to nuisance coordinates.
If the nuisance block $C$ is nonsingular, or after explicit restriction to
reduced identifiable nuisance coordinates, the local information remaining
after nuisance profiling is the Schur complement
\begin{equation}
  F^{\mathrm{prof}}
  =
  A-BC^{-1}B^{\mathsf T}
  \preceq A .
  \label{eq:profiled-fisher}
\end{equation}
If $C$ is singular, the generalized Schur complement is
\begin{equation}
  F^{\mathrm{prof}}
  =
  A-BC^{+}B^{\mathsf T},
  \label{eq:generalized-profiled-fisher}
\end{equation}
provided that
\begin{equation}
  \ran(B^{\mathsf T})
  \subseteq
  \ran(C),
  \label{eq:schur-range-condition}
\end{equation}
where $C^{+}$ is the Moore--Penrose inverse.  This range condition holds
automatically when the complete block matrix
$F^{\mathrm{tot}}$ is positive semidefinite.  The generalized expression
restricts profiling to the identifiable nuisance subspace; a pseudoinverse
does not by itself repair an unidentifiable statistical model or make
nonestimable parameter functions estimable.

The Fisher construction is local.  For finite parameter displacements we use
the exact Poisson deviance.  Relative to an Asimov truth
$\bm{\zeta}_0=(\bm{\lambda}_0,\bm{0})$, it is
\begin{align}
  D_{\mathrm A}(\bm{\lambda},\bm{\nu})
  &=
  2\sum_b
  \left[
    \mu_b(\bm{\lambda},\bm{\nu})
    -
    \mu_{b,0}
    +
    \mu_{b,0}
    \ln
    \frac{\mu_{b,0}}{\mu_b(\bm{\lambda},\bm{\nu})}
  \right]
  \nonumber\\
  &\quad
  +
  \bm{\nu}^{\mathsf T}
  \Pi_\nu
  \bm{\nu},
  \label{eq:asimov-deviance}
\end{align}
with the usual continuous convention for a zero Asimov count.  Here
$\Pi_\nu\succeq0$ is the precision matrix of the Gaussian auxiliary
constraint; for independent standardized nuisance pulls,
$\Pi_\nu=\mathbb{I}$.  The profiled deviance is
\begin{equation}
  D_{\mathrm A}^{\mathrm{prof}}(\bm{\lambda})
  =
  \min_{\bm{\nu}}D_{\mathrm A}(\bm{\lambda},\bm{\nu}).
  \label{eq:profiled-deviance}
\end{equation}
At a regular interior truth,
\begin{equation}
  \left.
  \frac{\partial^2 D_{\mathrm A}^{\mathrm{prof}}}
       {\partial\lambda_i\partial\lambda_j}
  \right|_{\bm{\lambda}_0}
  =
  2F^{\mathrm{prof}}_{ij}.
  \label{eq:asimov-fisher-hessian}
\end{equation}
Periodicity, ordering or octant degeneracies, physical boundaries, and weakly
identified directions can invalidate a Gaussian extrapolation even when
Eq.~\eqref{eq:asimov-fisher-hessian} holds locally.

\subsection{Coordinate covariance and scale-independent comparisons}

Let $\bm{\eta}$ be a new local parameter chart and define
\begin{equation}
  J_{ia}
  =
  \frac{\partial\lambda_i}{\partial\eta_a}.
  \label{eq:coordinate-jacobian}
\end{equation}
Every state, measurement, or event information matrix transforms by
congruence,
\begin{equation}
  F^{(\eta)}
  =
  J^{\mathsf T}F^{(\lambda)}J.
  \label{eq:fisher-pullback}
\end{equation}
Rectangular $J$ restricts the model to a submanifold.  If $J$ is square and
invertible, the local asymptotic covariance tensor transforms as
\begin{equation}
  C^{(\eta)}
  =
  J^{-1}C^{(\lambda)}J^{-\mathsf T}.
  \label{eq:covariance-transformation}
\end{equation}
This law is exact for an affine coordinate change and is the first-order
delta-method law for the finite-sample covariance under a nonlinear change of
coordinates.

Nuisance profiling respects the same covariance.  Let
$\bm{\lambda}=J\bm{\eta}$ and $\bm{\nu}=K\bm{\xi}$, with $J$ and $K$
invertible, and define $T=\diag(J,K)$.  Then
\begin{align}
  \operatorname{Schur}_{\bm{\xi}}
  \left(T^{\mathsf T}F^{\mathrm{tot}}T\right)
  &=
  J^{\mathsf T}
  \operatorname{Schur}_{\bm{\nu}}
  \left(F^{\mathrm{tot}}\right)
  J .
  \label{eq:schur-pullback}
\end{align}
Indeed, the transformed blocks are
$A'=J^{\mathsf T}AJ$, $B'=J^{\mathsf T}BK$, and
$C'=K^{\mathsf T}CK$, for which
\begin{equation}
  B'(C')^{-1}(B')^{\mathsf T}
  =
  J^{\mathsf T}BC^{-1}B^{\mathsf T}J.
\end{equation}
Thus a nuisance reparameterization cannot change the physics information if
the auxiliary constraint is transformed consistently.
For singular $C$, the same covariance statement holds for the generalized
Schur complement under the range condition in
Eq.~\eqref{eq:schur-range-condition}; it is then the intrinsic shorted
information operator.  One must transform the complete block matrix rather
than assume that a Moore--Penrose inverse alone obeys the ordinary inverse
transformation law under an arbitrary nonsingular congruence.

For relative comparisons within an information hierarchy, let
$Q\succeq0$ be a reference information matrix and suppose that
$0\preceq F\preceq Q$.  We first restrict the comparison to the
reference-identifiable space
\begin{equation}
  \mathcal{S}_Q
  =
  \ran(Q).
  \label{eq:reference-identifiable-space}
\end{equation}
The restriction of $Q$ to $\mathcal{S}_Q$ is positive definite, and the
relative information spectrum is defined by
\begin{equation}
  F\bm{v}_a
  =
  r_a Q\bm{v}_a,
  \qquad
  \bm{v}_a\in\mathcal{S}_Q.
  \label{eq:relative-information-spectrum}
\end{equation}
A direction identifiable in $Q$ but lost by $F$ appears as $r_a=0$ and must
not be removed by restricting the comparison to $\ran(F)$.  For an invertible
coordinate transformation, the unordered spectrum $\{r_a\}$ is invariant
under the simultaneous pullback of $F$ and $Q$.

A scalar covariance cost requires a declared positive semidefinite weight
matrix $W$:
\begin{equation}
  C_W(F)
  =
  \operatorname{tr}\!\left(WF^{-1}\right),
  \label{eq:weighted-covariance-cost}
\end{equation}
where the inverse is taken on the declared estimable space.  The cost is
coordinate invariant when the weight is transformed with the same covariant
law,
\begin{equation}
  W^{(\eta)}
  =
  J^{\mathsf T}W^{(\lambda)}J.
  \label{eq:weight-pullback}
\end{equation}
A useful dimensionless diagnostic is obtained by choosing $W=Q$:
\begin{equation}
  C_Q(F)
  =
  \operatorname{tr}(QF^{-1})
  =
  \sum_a\frac{1}{r_a},
  \label{eq:q-weighted-cost}
\end{equation}
provided that $F$ is positive definite on $\ran(Q)$.  If any
reference-identifiable direction has $r_a=0$, then
$C_Q(F)=+\infty$.  A Moore--Penrose inverse must not be used to conceal the
complete loss of such a direction.  This construction avoids assigning
scientific significance to the arbitrary numerical units of heterogeneous
coordinates.

Exact matrix rank is invariant under every nonsingular reparameterization,
but numerical rank is not meaningful until coordinate scales and a tolerance
are declared.  We introduce a nonsingular diagonal scale matrix
\begin{equation}
  \bm{\lambda}=S\bm{x},
  \qquad
  F^{(x)}=S^{\mathsf T}F^{(\lambda)}S,
  \label{eq:coordinate-scaling}
\end{equation}
and diagnose eigenvalues and null directions in the dimensionless
$\bm{x}$ coordinates.  Our default local scales are
\begin{equation}
  S
  =
  \diag
  \left(
    1,\,
    1,\,
    1,\,
    1,\,
    10^{-4},\,
    10^{-3}
  \right),
  \label{eq:default-scales}
\end{equation}
where the last two entries carry units of $\mathrm{eV}^2$.  For any covariant
two-index information tensor $A$, we define the coordinate-scaled Frobenius
norm by
\begin{equation}
  \|A\|_{S,F}
  :=
  \left\|S^{\mathsf T}AS\right\|_F .
  \label{eq:scaled-frobenius-norm}
\end{equation}
Unless stated otherwise, the controlled geometry and covariance benchmarks
use a relative eigenvalue tolerance of $10^{-10}$ after the scaling in
Eq.~\eqref{eq:default-scales}.  Results using another numerical-rank
threshold, including practical-rank diagnostics, are reported together with
the scaling and the applied threshold.

As a regression test of Eqs.~\eqref{eq:fisher-pullback} and
\eqref{eq:schur-pullback}, we used the controlled 19-bin analytic benchmark
defined in the numerical analysis, containing six physics coordinates and six
standardized nuisance coordinates.  The source spectrum, propagation model,
response, efficiencies, backgrounds, and auxiliary likelihood were held
fixed.  We applied an invertible nonorthogonal reparameterization to all six
physics coordinates and an independent invertible reparameterization to the
six nuisance coordinates, transforming the auxiliary-constraint tensor
consistently.  This is a coordinate change of the same statistical model and
is not the public-DUNE calculation, which contains nine nuisance parameters.

The generalized information spectra of the state, flavor, smeared-event,
detector, and profiled layers were invariant to at worst
$1.0\times10^{-14}$.  Profiling before or after the joint physics--nuisance
pullback agreed to relative Frobenius error $2.4\times10^{-15}$.  The largest
residual among the inverse-information covariance costs was
$9.3\times10^{-11}$ and arose from the weakest operational direction.  These
numbers are numerical covariance checks, not phenomenological sensitivity
claims.

\subsection{Conditional information hierarchy}

When the source weights and the classical processing maps are independent of
the physics estimands at fixed nuisance values, and every layer is represented
on the same per-exposure sample space, quantum and classical data processing
imply
\begin{equation}
  \QFI_{\mathrm{state}}
  \succeq
  \CFI_{\mathrm{flavor}}
  \succeq
  F_{\mathrm{smeared}}
  \succeq
  F_{\mathrm{detector}}
  \succeq
  F_{\mathrm{profiled}}.
  \label{eq:information-hierarchy}
\end{equation}
The first inequality compares the information of a declared flavor
measurement with the SLD quantum-information upper bound for the same state
family and preparation allocation.  Energy coarsening and
parameter-independent response migration produce the second.  Inefficiency
and the loss of signal/background labels produce the third, while nuisance
profiling gives the final Schur-complement inequality.  Here
$F_{\mathrm{detector}}$ denotes the physics block obtained when nuisance
coordinates are locally held fixed, whereas $F_{\mathrm{profiled}}$ is the
information remaining after their correlations are included and profiled.

Equation~\eqref{eq:information-hierarchy} is conditional, not universal.
A parameter-dependent source rate, estimand-dependent response,
parameter-dependent measurement, or added physical encoding changes the
statistical model and requires its derivative contributions to be included
explicitly.  Nuisance dependence of the response is permitted, but it must be
represented in the joint physics--nuisance information matrix before
profiling.  Moreover, positive-semidefinite ordering is a matrix statement:
it does not imply entrywise inequalities between off-diagonal elements or
inverse matrices.  These conventions will be used in the following sections
to separate geometric identifiability, measurement accessibility, and
detector-level precision.

\section{Quantum geometry and representation covariance}
\label{sec:geometry}

The operational hierarchy of Sec.~\ref{sec:model} requires a geometric object
that depends only on the physical state family and its parameterization, not
on an arbitrary choice of Hilbert-space coordinates.  This section develops
that object and establishes its transformation laws under gauge changes,
parameter-coordinate changes, moving-basis representations, and
neutrino--antineutrino conjugation.  These distinctions are essential because
a parameter-dependent change of representation can otherwise be mistaken for
a new physical encoding.

\subsection{Projected quantum geometric tensor}
\label{subsec:qgt}

Consider a smooth family of normalized pure states
\begin{equation}
  |\psi(\bm{\lambda})\rangle\in\mathbb{C}^{d},
  \qquad
  \langle\psi|\psi\rangle=1 .
  \label{eq:normalized-state-family}
\end{equation}
The tangent $|\partial_i\psi\rangle$ contains an unobservable component
parallel to $|\psi\rangle$.  The orthogonal projector onto the physical
tangent space is
\begin{equation}
  P_\perp
  =
  \mathbb{I}-|\psi\rangle\langle\psi|.
  \label{eq:horizontal-projector}
\end{equation}
The projected quantum geometric tensor is
\begin{align}
  G_{ij}
  &=
  \langle\partial_i\psi|
  P_\perp
  |\partial_j\psi\rangle
  \nonumber\\
  &=
  \langle\partial_i\psi|\partial_j\psi\rangle
  -
  \langle\partial_i\psi|\psi\rangle
  \langle\psi|\partial_j\psi\rangle .
  \label{eq:quantum-geometric-tensor}
\end{align}
It is Hermitian and positive semidefinite:
\begin{equation}
  G_{ji}=G_{ij}^{*},
  \qquad
  \bm{z}^{\dagger}G\bm{z}\geq0 .
  \label{eq:qgt-hermitian-positive}
\end{equation}
The tensor supplies both the metric and the local incompatibility structure
\cite{ProvostVallee1980,Berry1984,BraunsteinCaves1994}:
\begin{equation}
  \QFI_{ij}
  =
  4\,\operatorname{Re}G_{ij},
  \qquad
  \mathcal{D}_{ij}
  =
  4\,\operatorname{Im}G_{ij}.
  \label{eq:qgt-decomposition}
\end{equation}
Thus $\QFI$ is real, symmetric, and positive semidefinite, whereas
$\mathcal{D}$ is real and antisymmetric.

For the pure-state SLDs of Eq.~\eqref{eq:sld-definition},
\begin{equation}
  \mathcal{D}_{ij}
  =
  \frac{1}{2i}
  \Tr\!\left(\rho[L_i,L_j]\right).
  \label{eq:sld-commutator-tensor}
\end{equation}
Let $\mathcal{T}_{\mathrm{id}}$ denote the identifiable tangent space, namely
the quotient of the parameter tangent space by the kernel of $\QFI$.
The coordinate-independent pure-state weak-commutativity condition is
\begin{equation}
  \left.
  \mathcal{D}
  \right|_{\mathcal{T}_{\mathrm{id}}}
  =
  0 .
  \label{eq:weak-commutativity-condition}
\end{equation}
Equivalently, $\bm{u}^{\mathsf T}\mathcal{D}\bm{v}=0$ for all identifiable
tangent vectors $\bm{u}$ and $\bm{v}$.  Nonzero curvature on this space
excludes joint saturation of the SLD matrix bound.  Vanishing curvature does
not, by itself, establish global identifiability, finite-sample attainability,
or accessibility by a specified detector measurement; those questions are
treated separately in subsequent sections
\cite{Matsumoto2002,Ragy2016,RagyErratum2019}.

\subsubsection{Gauge invariance}

A parameter-dependent phase changes the state vector but not its physical ray:
\begin{equation}
  |\psi'(\bm{\lambda})\rangle
  =
  e^{i\chi(\bm{\lambda})}
  |\psi(\bm{\lambda})\rangle .
  \label{eq:gauge-state}
\end{equation}
Its derivative is
\begin{equation}
  |\partial_i\psi'\rangle
  =
  e^{i\chi}
  \left(
    |\partial_i\psi\rangle
    +
    i\,\partial_i\chi\,|\psi\rangle
  \right).
  \label{eq:gauge-derivative}
\end{equation}
The ray projector is unchanged,
\begin{equation}
  |\psi'\rangle\langle\psi'|
  =
  |\psi\rangle\langle\psi|,
\end{equation}
and hence the corresponding horizontal projector remains $P_\perp$.
Because $P_\perp|\psi\rangle=0$,
\begin{equation}
  P_\perp|\partial_i\psi'\rangle
  =
  e^{i\chi}P_\perp|\partial_i\psi\rangle,
\end{equation}
and therefore
\begin{equation}
  G'_{ij}=G_{ij},
  \qquad
  \QFI'=\QFI,
  \qquad
  \mathcal{D}'=\mathcal{D}.
  \label{eq:gauge-qgt-invariance}
\end{equation}
The projected tensor thus removes both a constant phase and an arbitrary
parameter-dependent phase convention.

\subsubsection{Parameter-coordinate covariance}

Let $\bm{\eta}$ be a new parameter chart, with
\begin{equation}
  \lambda_i=f_i(\bm{\eta}),
  \qquad
  J_{ia}
  =
  \frac{\partial\lambda_i}{\partial\eta_a}.
  \label{eq:qgt-coordinate-map}
\end{equation}
The new state derivatives are
\begin{equation}
  |\partial_a^{(\eta)}\psi\rangle
  =
  \sum_i J_{ia}|\partial_i^{(\lambda)}\psi\rangle.
  \label{eq:state-derivative-pullback}
\end{equation}
Substitution into Eq.~\eqref{eq:quantum-geometric-tensor} gives
\begin{equation}
  G^{(\eta)}
  =
  J^{\mathsf T}G^{(\lambda)}J.
  \label{eq:qgt-coordinate-pullback}
\end{equation}
Its real and imaginary parts therefore obey the same tensor law:
\begin{equation}
  \QFI^{(\eta)}
  =
  J^{\mathsf T}\QFI^{(\lambda)}J,
  \qquad
  \mathcal{D}^{(\eta)}
  =
  J^{\mathsf T}\mathcal{D}^{(\lambda)}J.
  \label{eq:metric-curvature-pullback}
\end{equation}
For rectangular $J$, these equations restrict the state model to the
corresponding parameter submanifold.  For square invertible $J$, they describe
a change of coordinates on the same statistical manifold.  Neither case
should be confused with a parameter-dependent change of basis in the Hilbert
space, which is considered next.

\subsection{Parameter-dependent Hilbert-space bases}
\label{subsec:moving-basis}

Let $B(\bm{\lambda})$ be a unitary matrix whose columns are a
parameter-dependent orthonormal basis expressed in a fixed reference
representation:
\begin{equation}
  B^\dagger B=\mathbb{I}.
  \label{eq:moving-basis-unitarity}
\end{equation}
The coefficient vector of the same physical state in this moving basis is
\begin{equation}
  \bm{c}
  =
  B^\dagger|\psi\rangle.
  \label{eq:moving-state-coordinates}
\end{equation}
Differentiating the coefficients gives
\begin{equation}
  \partial_i\bm{c}
  =
  (\partial_iB^\dagger)|\psi\rangle
  +
  B^\dagger|\partial_i\psi\rangle.
  \label{eq:ordinary-coefficient-derivative}
\end{equation}
Define the moving-frame connection
\begin{equation}
  \Gamma_i
  =
  B^\dagger\partial_iB.
  \label{eq:moving-basis-connection}
\end{equation}
Differentiating $B^\dagger B=\mathbb{I}$ shows that
\begin{equation}
  \Gamma_i^\dagger=-\Gamma_i.
  \label{eq:connection-antihermitian}
\end{equation}
Moreover,
\begin{equation}
  \partial_iB^\dagger
  =
  -\Gamma_iB^\dagger,
\end{equation}
so Eq.~\eqref{eq:ordinary-coefficient-derivative} becomes
\begin{equation}
  \partial_i\bm{c}
  =
  B^\dagger|\partial_i\psi\rangle-\Gamma_i\bm{c}.
  \label{eq:ordinary-versus-physical-tangent}
\end{equation}

The physical tangent represented in the moving frame is therefore the
covariant derivative
\begin{align}
  \nabla_i\bm{c}
  &=
  \partial_i\bm{c}+\Gamma_i\bm{c}
  \nonumber\\
  &=
  B^\dagger|\partial_i\psi\rangle.
  \label{eq:covariant-moving-tangent}
\end{align}
The sign of the connection follows from the convention that the columns of
$B$ are the new basis vectors expressed in the old fixed basis.  A different
frame convention may reverse intermediate signs but must reproduce the
identity in the second line of Eq.~\eqref{eq:covariant-moving-tangent}.

\begin{proposition}[Passive moving-basis covariance]
\label{prop:passive-moving-basis-covariance}
Let $B(\bm{\lambda})$ be a parameter-dependent orthonormal frame and let
$\nabla_i\bm{c}$ be the covariant tangent defined in
Eq.~\eqref{eq:covariant-moving-tangent}.  The projected QGT computed in the
moving frame is identical to the QGT computed in the fixed reference
representation.  Consequently, both $\QFI$ and $\mathcal{D}$ are invariant
under the passive frame transformation.
\end{proposition}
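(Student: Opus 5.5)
The plan is to write every ingredient of the moving-frame QGT as the image of the corresponding fixed-frame object under the unitary $B^\dagger$, and then to use unitarity to cancel $B$ from all inner products.

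\emph{Step 1 (definition).} I will fix what ``the projected QGT computed in the moving frame'' means. It is the tensor built from the coefficient vector $\bm{c}=B^\dagger|\psi\rangle$, whose norm is one by Eq.~\eqref{eq:moving-basis-unitarity}. The physical tangents are the covariant derivatives $\nabla_i\bm{c}$, and the horizontal projector is $P^{(c)}_\perp=\mathbb{I}-\bm{c}\bm{c}^\dagger$. Concretely,
\begin{equation*}
  G^{(c)}_{ij}
  =
  (\nabla_i\bm{c})^\dagger P^{(c)}_\perp\nabla_j\bm{c}.
\end{equation*}

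\emph{Step 2 (transport of tangents and projector).} The tangents transform by the second line of Eq.~\eqref{eq:covariant-moving-tangent}, $\nabla_i\bm{c}=B^\dagger|\partial_i\psi\rangle$. For the projector,
\begin{equation*}
  \bm{c}\bm{c}^\dagger=B^\dagger|\psi\rangle\langle\psi|B,
  \qquad\text{so}\qquad
  P^{(c)}_\perp=B^\dagger P_\perp B,
\end{equation*}
where I use $B^\dagger B=BB^\dagger=\mathbb{I}$. In finite dimension a square matrix with $B^\dagger B=\mathbb{I}$ also satisfies $BB^\dagger=\mathbb{I}$, so this is immediate.

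\emph{Step 3 (substitution).} Substituting the two identities gives
\begin{equation*}
  G^{(c)}_{ij}
  =
  \langle\partial_i\psi|BB^\dagger P_\perp BB^\dagger|\partial_j\psi\rangle
  =
  G_{ij}.
\end{equation*}
Taking real and imaginary parts via Eq.~\eqref{eq:qgt-decomposition} then gives invariance of both $\QFI$ and $\mathcal{D}$. I will also note that the unprojected inner products $\langle\partial_i\psi|\partial_j\psi\rangle$ and the Berry-type term $\langle\partial_i\psi|\psi\rangle$ are separately preserved, so the second form of Eq.~\eqref{eq:quantum-geometric-tensor} is matched term by term.

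\emph{Step 4 (role of the connection).} The point needing care is not algebraic but conceptual: the identity holds only because the covariant tangent is used. I will add a remark that replacing $\nabla_i\bm{c}$ by $\partial_i\bm{c}$ introduces the error $-\Gamma_i\bm{c}$, by Eq.~\eqref{eq:ordinary-versus-physical-tangent}. The component of this error along $\bm{c}$ is a pure phase direction. This follows because $\Gamma_i$ is anti-Hermitian (Eq.~\eqref{eq:connection-antihermitian}), so $\bm{c}^\dagger\Gamma_i\bm{c}$ is imaginary, and that part is removed by the projector as in the gauge argument of Eqs.~\eqref{eq:gauge-derivative}--\eqref{eq:gauge-qgt-invariance}. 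The component orthogonal to $\bm{c}$ generally survives, however, and is what produces spurious basis dependence. I expect this remark to be the only subtle point; the invariance proof itself is just the three substitutions above.
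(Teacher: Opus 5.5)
Your proposal is correct and matches the paper's proof essentially step for step. Like the paper, you write the moving-frame projector as $P_{\bm{c}}=B^\dagger P_\perp B$, substitute $\nabla_i\bm{c}=B^\dagger|\partial_i\psi\rangle$, cancel $BB^\dagger$ to recover $G_{ij}$, and then take real and imaginary parts. Your Step~4 remark is also accurate and consistent with the paper's later discussion of the active-encoding family: the $\bm{c}$-parallel part of $-\Gamma_i\bm{c}$ is a pure phase direction removed by the projector, while the orthogonal part survives.
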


\begin{proof}
Let
\begin{equation}
  P_{\bm{c}}
  =
  \mathbb{I}-\bm{c}\bm{c}^\dagger
  \label{eq:moving-coordinate-projector}
\end{equation}
be the horizontal projector in moving coordinates.  Since
$\bm{c}=B^\dagger|\psi\rangle$ and $B$ is unitary,
\begin{equation}
  P_{\bm{c}}
  =
  B^\dagger P_\perp B.
  \label{eq:projector-basis-transformation}
\end{equation}
The QGT computed from the covariant tangents therefore satisfies
\begin{align}
  G^{(B)}_{ij}
  &=
  (\nabla_i\bm{c})^\dagger
  P_{\bm{c}}
  (\nabla_j\bm{c})
  \nonumber\\
  &=
  \langle\partial_i\psi|
  B P_{\bm{c}}B^\dagger
  |\partial_j\psi\rangle
  \nonumber\\
  &=
  \langle\partial_i\psi|
  P_\perp
  |\partial_j\psi\rangle
  =
  G_{ij}.
  \label{eq:moving-basis-qgt-invariance}
\end{align}
Taking the real and imaginary parts gives
\begin{equation}
  \QFI^{(B)}=\QFI,
  \qquad
  \mathcal{D}^{(B)}=\mathcal{D}.
  \label{eq:moving-basis-tensor-invariance}
\end{equation}
\end{proof}

The proposition assumes an orthonormal frame.  In a nonorthonormal coordinate
frame, the Hilbert-space Gram matrix and its derivatives must also be
transformed; using the Euclidean inner product for such coordinates would
define a different geometry.

A passive transformation must also act on the measurement representation.
For a POVM element $M_y$, define
\begin{equation}
  M_y^{(B)}
  =
  B^\dagger M_yB.
  \label{eq:measurement-basis-transformation}
\end{equation}
Then
\begin{equation}
  p_y
  =
  \langle\psi|M_y|\psi\rangle
  =
  \bm{c}^\dagger M_y^{(B)}\bm{c}.
  \label{eq:measurement-probability-invariance}
\end{equation}
When $B$ depends on the estimands, derivatives of both $\bm{c}$ and
$M_y^{(B)}$ must be included.  Indeed,
\begin{equation}
  \partial_i M_y^{(B)}
  =
  -\Gamma_iM_y^{(B)}
  +
  M_y^{(B)}\Gamma_i
  +
  B^\dagger(\partial_iM_y)B,
  \label{eq:moving-measurement-derivative}
\end{equation}
where the final term vanishes when the physical POVM is parameter independent
in the fixed representation.  Substitution of
Eqs.~\eqref{eq:ordinary-versus-physical-tangent} and
\eqref{eq:moving-measurement-derivative} into
$\partial_i(\bm{c}^\dagger M_y^{(B)}\bm{c})$ cancels the
connection-dependent terms, leaving the same physical probability
derivatives and the same fixed-measurement CFI.

\subsubsection{Omitting the connection defines an active encoding}

Suppose instead that the ordinary coefficient derivative
$\partial_i\bm{c}$ is inserted directly into the fixed-basis QGT formula.
The resulting tensor is
\begin{equation}
  \widetilde{G}_{ij}
  =
  (\partial_i\bm{c})^\dagger
  P_{\bm{c}}
  (\partial_j\bm{c}).
  \label{eq:ordinary-coefficient-qgt}
\end{equation}
In general,
\begin{equation}
  \widetilde{G}\neq G.
  \label{eq:ordinary-qgt-not-invariant}
\end{equation}
This does not mean that Eq.~\eqref{eq:ordinary-coefficient-qgt} is
mathematically meaningless.  It is precisely the QGT of the distinct
fixed-representation state family
\begin{equation}
  |\widetilde{\psi}(\bm{\lambda})\rangle
  =
  B^\dagger(\bm{\lambda})
  |\psi(\bm{\lambda})\rangle,
  \label{eq:active-encoded-family}
\end{equation}
whose derivative is
\begin{equation}
  |\partial_i\widetilde{\psi}\rangle
  =
  (\partial_iB^\dagger)|\psi\rangle
  +
  B^\dagger|\partial_i\psi\rangle.
  \label{eq:active-encoded-derivative}
\end{equation}
Equation~\eqref{eq:active-encoded-family} represents a
parameter-dependent active encoding when its output is interpreted in a fixed
Hilbert-space basis and the measurement is not transformed with it.

A parameter-independent unitary has $\Gamma_i=0$ and cannot change the QGT.
A parameter-dependent unitary can change the QGT because its generator carries
additional parameter dependence.  Such a change cannot be treated as a free
metrological improvement: the physical interaction, external reference, or
adaptive control implementing the encoding must be included in the
operational model.  A parameter-dependent global phase is an exceptional
case because its generator is parallel to the state and is removed by
$P_\perp$.

\subsection{Numerical representation-covariance test}
\label{subsec:representation-validation}

We test the preceding identities using the six-coordinate vacuum
benchmark at $L/E=520~\mathrm{km/GeV}$ with a muon-flavor source.  The PMNS
matrix $U(\bm{\lambda})$ is used as a
nontrivial moving basis.  Tensor comparisons and rank diagnostics in this
subsection use the coordinate scaling of
Eq.~\eqref{eq:default-scales}; frame-algebra residuals such as unitarity and
anti-Hermiticity are evaluated in the ordinary Frobenius norm.  Numerical
ranks use the relative eigenvalue tolerance $10^{-10}$ declared in
Sec.~\ref{sec:model}.

The basis unitarity residual is $2.66\times10^{-17}$, the largest
anti-Hermiticity residual of $\Gamma_i$ is $2.09\times10^{-16}$, and
Eq.~\eqref{eq:covariant-moving-tangent} is satisfied to relative error
$5.60\times10^{-17}$.  The complete passive QGT agrees with the fixed flavor
representation to $2.14\times10^{-16}$ in the coordinate-scaled relative
Frobenius norm.

\begin{table}[t]
  \caption{
    Representation-covariance benchmark.  Tensor changes are
    coordinate-scaled Frobenius-norm differences relative to the fixed flavor
    representation.  Ranks are evaluated after the scaling in
    Eq.~\eqref{eq:default-scales} with relative eigenvalue tolerance
    $10^{-10}$.  All four cases are pure qutrit models, so their equal rank
    does not imply equal geometry.
  }
  \label{tab:representation-covariance}
  \centering
  \begin{tabular}{@{}lccc@{}}
    \toprule
    State-family description
      & $\|\Delta\QFI\|_{S,F}/\|\QFI\|_{S,F}$
      & $\|\Delta\mathcal{D}\|_{S,F}/\|\mathcal{D}\|_{S,F}$
      & Rank \\
    \midrule
    Fixed flavor representation
      & $0$
      & $0$
      & $4$ \\
    Passive PMNS frame with connection
      & $2.26\times10^{-16}$
      & $3.39\times10^{-17}$
      & $4$ \\
    Parameter-dependent global phase
      & $2.11\times10^{-16}$
      & $5.07\times10^{-17}$
      & $4$ \\
    Ordinary coefficients as active family
      & $7.67\times10^{-1}$
      & $5.41\times10^{-1}$
      & $4$ \\
    \bottomrule
  \end{tabular}
\end{table}

Omitting the PMNS-frame connection changes the full complex QGT by
\begin{equation}
  \frac{
    \|\widetilde{G}-G\|_{S,F}
  }{
    \|G\|_{S,F}
  }
  =
  0.735345 .
  \label{eq:omitted-connection-qgt-error}
\end{equation}
A separate five-point numerical differentiation agrees with the analytic
ordinary derivative of
$U^\dagger(\bm{\lambda})|\psi(\bm{\lambda})\rangle$ to relative error
$1.63\times10^{-10}$.  The discrepancy is therefore neither
finite-difference noise nor a convention change: it is the geometry of the
distinct active family in Eq.~\eqref{eq:active-encoded-family}.

We also apply a nonorthogonal invertible reparameterization to all six
coordinates.  Direct differentiation in the transformed chart agrees with
the complex pullback $J^{\mathsf T}GJ$ to relative coordinate-scaled
Frobenius error $1.88\times10^{-16}$.  These checks jointly test the gauge,
parameter-chart, and Hilbert-frame transformation laws.

\subsection{Neutrino--antineutrino tensor pullbacks}
\label{subsec:cp-pullback}

After identifying the neutrino and antineutrino flavor Hilbert spaces through
their canonical flavor-coordinate bases, the PDG mixing convention gives the
vacuum propagated-state relation
\begin{equation}
  |\overline{\psi}_{\alpha}
    (\bm{\lambda})\rangle
  =
  |\psi_{\alpha}
    (f(\bm{\lambda}))\rangle,
  \label{eq:vacuum-antineutrino-state-relation}
\end{equation}
where
\begin{equation}
  f(\bm{\lambda})
  =
  \left(
    \theta_{12},
    \theta_{13},
    \theta_{23},
    -\dcp,
    \Delta m^2_{21},
    \Delta m^2_{31}
  \right)^{\mathsf T}.
  \label{eq:cp-reflection-map}
\end{equation}
The Jacobian of this reflection is
\begin{equation}
  J_f
  =
  \diag(1,1,1,-1,1,1).
  \label{eq:cp-reflection-jacobian}
\end{equation}

The antineutrino QGT is consequently the pullback
\begin{equation}
  \overline{G}(\bm{\lambda})
  =
  J_f^{\mathsf T}
  G(f(\bm{\lambda}))
  J_f,
  \label{eq:antineutrino-qgt-pullback}
\end{equation}
and the same law applies separately to its real and imaginary parts:
\begin{align}
  \overline{\QFI}(\bm{\lambda})
  &=
  J_f^{\mathsf T}
  \QFI(f(\bm{\lambda}))
  J_f,
  \label{eq:antineutrino-qfim-pullback}
  \\
  \overline{\mathcal{D}}(\bm{\lambda})
  &=
  J_f^{\mathsf T}
  \mathcal{D}(f(\bm{\lambda}))
  J_f.
  \label{eq:antineutrino-curvature-pullback}
\end{align}
Thus, for an index $i$ corresponding to a coordinate other than $\dcp$,
\begin{align}
  \overline{\QFI}_{i\dcp}(\dcp)
  &=
  -\QFI_{i\dcp}(-\dcp),
  &
  \overline{\QFI}_{\dcp\dcp}(\dcp)
  &=
  \QFI_{\dcp\dcp}(-\dcp),
  \label{eq:antineutrino-metric-components}
  \\
  \overline{\mathcal{D}}_{i\dcp}(\dcp)
  &=
  -\mathcal{D}_{i\dcp}(-\dcp),
  &
  \overline{\mathcal{D}}_{\dcp\dcp}
  &=
  0 .
  \label{eq:antineutrino-curvature-components}
\end{align}
Components with neither index corresponding to $\dcp$ acquire no sign from
the reflection Jacobian, although their values are evaluated at the reflected
phase.

Equations~\eqref{eq:antineutrino-qgt-pullback}--%
\eqref{eq:antineutrino-curvature-pullback} are tensor pullbacks, not
entrywise complex-conjugation prescriptions.  In particular, mixed entries
with exactly one $\dcp$ index acquire the reflection-Jacobian sign.  At the
vacuum benchmark, direct antineutrino propagation gives
$\overline{\QFI}_{\theta_{23},\dcp}=+0.0135636126$, whereas the corresponding
neutrino tensor evaluated at the reflected phase, before applying $J_f$,
contains $\QFI_{\theta_{23},\dcp}(-\dcp)=-0.0135636126$.
Direct and pullback calculations agree to machine precision for both
$\QFI$ and $\mathcal{D}$.

\subsubsection{Matter-potential reversal}

In matter, reflection of the PMNS phase is not sufficient.  Using the
Hamiltonians of Eqs.~\eqref{eq:neutrino-hamiltonian} and
\eqref{eq:antineutrino-hamiltonian},
\begin{equation}
  H_{\bar{\nu}}(\bm{\lambda};V_{\mathrm{CC}})
  =
  H_{\nu}
  \left(
    f(\bm{\lambda});
    -V_{\mathrm{CC}}
  \right).
  \label{eq:matter-hamiltonian-cp-relation}
\end{equation}
For an arbitrary matter profile, the sign reversal
$V_{\mathrm{CC}}(x)\longrightarrow -V_{\mathrm{CC}}(x)$ is understood
pointwise.
The complete tensor law is therefore
\begin{equation}
  \overline{G}
  (\bm{\lambda};V_{\mathrm{CC}})
  =
  J_f^{\mathsf T}
  G
  \left(
    f(\bm{\lambda});
    -V_{\mathrm{CC}}
  \right)
  J_f.
  \label{eq:matter-qgt-cp-law}
\end{equation}
Equivalently,
\begin{align}
  \overline{\QFI}
  (\bm{\lambda};V_{\mathrm{CC}})
  &=
  J_f^{\mathsf T}
  \QFI
  \left(
    f(\bm{\lambda});
    -V_{\mathrm{CC}}
  \right)
  J_f,
  \label{eq:matter-qfim-cp-law}
  \\
  \overline{\mathcal{D}}
  (\bm{\lambda};V_{\mathrm{CC}})
  &=
  J_f^{\mathsf T}
  \mathcal{D}
  \left(
    f(\bm{\lambda});
    -V_{\mathrm{CC}}
  \right)
  J_f.
  \label{eq:matter-curvature-cp-law}
\end{align}

We verify these equations by direct constant-density propagation at
$L=1300~\mathrm{km}$, $E=2.5~\mathrm{GeV}$,
$\rho_{\mathrm m}=2.848~\mathrm{g\,cm^{-3}}$, and $Y_e=0.5$.  The direct
antineutrino metric and curvature agree with the combined phase-reflection and
potential-reversal prediction to machine precision; the binary64
implementation gives zero residual at the reported precision.  If the
matter-potential sign reversal is omitted, the relative coordinate-scaled
errors are
\begin{equation}
  \frac{
    \|\Delta\QFI\|_{S,F}
  }{
    \|\overline{\QFI}\|_{S,F}
  }
  =
  0.422527,
  \qquad
  \frac{
    \|\Delta\mathcal{D}\|_{S,F}
  }{
    \|\overline{\mathcal{D}}\|_{S,F}
  }
  =
  0.213949.
  \label{eq:matter-sign-omission-errors}
\end{equation}
The charged-current sign is therefore an essential part of the antineutrino
state model rather than a secondary convention.

\subsection{Consequences for the subsequent analysis}

The results of this section establish four guardrails.

First, the projected QGT is invariant under arbitrary phase conventions and
smooth parameter reparameterizations.  Second, a parameter-dependent
orthonormal basis requires its connection; using only ordinary coefficient
derivatives changes the state family.  Third, a change in QFI caused by a
parameter-dependent active unitary is a change in the physical encoding and
must be justified operationally.  Fourth, neutrino--antineutrino tensors obey
the complete Jacobian pullback, supplemented in matter by reversal of the
charged-current potential.

Representation covariance does not imply that the chosen coordinates are
identifiable.  It ensures only that any rank deficiency, null direction, or
curvature obstruction is a property of the physical state family rather than
an artifact of its representation.  Section~\ref{sec:identifiability} now
uses this invariant geometry to derive the pure-qutrit rank bound and the
spectral kernel-intersection theorem.

\section{Identifiability and spectral rank restoration}
\label{sec:identifiability}

Representation covariance ensures that the quantum geometric tensor is
independent of arbitrary phase, basis, and coordinate conventions.  It does
not ensure that every coordinate used to label the state family is locally
identifiable.  Identifiability is instead determined by the rank and kernel
of the information matrix.  In this section we first prove a dimension bound
for any single pure-state qutrit model, then establish the kernel-intersection
criterion by which spectrally resolved data can restore rank.  Finally, we
distinguish retaining a classical energy label from tracing it out.

\subsection{Local identifiability and estimable functions}
\label{subsec:local-identifiability}

Let $F(\bm{\lambda})$ denote any local information matrix for a specified
statistical model, regarded intrinsically as a positive-semidefinite bilinear
form on the parameter tangent space
$T_{\bm{\lambda}}\Theta$.  At a regular differentiability-in-quadratic-mean
point, a real parameter displacement
$\bm{v}\in T_{\bm{\lambda}}\Theta$ is first-order locally unidentifiable when
\begin{equation}
  F\bm{v}=0.
  \label{eq:information-null-direction}
\end{equation}
Equivalently,
\begin{equation}
  \bm{v}^{\mathsf T}F\bm{v}=0.
\end{equation}
For a positive-semidefinite information matrix, these two conditions are
equivalent.  At a nonregular point, Fisher nullity need not exclude
higher-order or nonlocal identifiability; the statements below concern the
regular first-order tangent model.

Intrinsically, the identifiable tangent model is the quotient space
\begin{equation}
  \mathcal{T}_{\mathrm{id}}
  =
  T_{\bm{\lambda}}\Theta/\ker F.
  \label{eq:identifiable-tangent-space}
\end{equation}
The corresponding estimable cotangent space is the annihilator of the null
space,
\begin{equation}
  \mathcal{T}^{*}_{\mathrm{est}}
  =
  (\ker F)^{\circ}
  =
  \ran F,
  \label{eq:estimable-cotangent-space}
\end{equation}
where $(\ker F)^{\circ}$ denotes the set of covectors that vanish on every
vector in $\ker F$.  After an auxiliary coordinate inner product or scaling
has been declared, the quotient in
Eq.~\eqref{eq:identifiable-tangent-space} may be represented by the
orthogonal complement $(\ker F)^\perp$.  Such a representative is useful
numerically but is not itself an intrinsic identification of tangent vectors
with covectors.

A scalar function $g(\bm{\lambda})$ is locally estimable only if its
differential belongs to the estimable cotangent space.  In coordinates this
condition is
\begin{equation}
  \nabla g
  \in
  \ran F,
  \label{eq:estimable-function-condition}
\end{equation}
or, equivalently,
\begin{equation}
  \bm{v}^{\mathsf T}\nabla g=0
  \qquad
  \text{for every }\bm{v}\in\ker F.
  \label{eq:estimable-function-null-condition}
\end{equation}
When Eq.~\eqref{eq:estimable-function-condition} holds, a generalized local
Cramér--Rao expression for a locally unbiased estimator may be written as
\begin{equation}
  \operatorname{Var}(\widehat g)
  \geq
  \nabla g^{\mathsf T}F^{+}\nabla g,
  \label{eq:pseudoinverse-estimable-bound}
\end{equation}
where $F^{+}$ is the Moore--Penrose inverse in the declared coordinates.  For
an estimable differential, the quadratic form in
Eq.~\eqref{eq:pseudoinverse-estimable-bound} represents the inverse
information on the identifiable quotient.  If
Eq.~\eqref{eq:estimable-function-condition} fails, a locally unbiased
estimator of $g$ cannot exist in all model directions.  A finite diagonal
entry of $F^{+}$ must then not be interpreted as a finite variance bound for
the corresponding coordinate.

Exact rank is invariant under every nonsingular reparameterization.  Numerical
rank nevertheless requires the coordinate scaling and tolerance prescribed in
Sec.~\ref{sec:model}.  We therefore distinguish throughout:

\begin{enumerate}
  \item exact or structural rank, determined by the model manifold;
  \item numerical rank at a declared floating-point tolerance; and
  \item effective rank at a scientifically chosen information threshold.
\end{enumerate}

For the eigenvalues of a coordinate-scaled information matrix, ordered as
\begin{equation}
  0\leq q_1\leq q_2\leq\cdots\leq q_p,
\end{equation}
we define the effective rank at relative threshold $\tau$ as
\begin{equation}
  r_{\mathrm{eff}}(\tau)
  =
  \#\left\{a:q_a>\tau q_p\right\}.
  \label{eq:effective-rank}
\end{equation}
Unlike exact rank, $r_{\mathrm{eff}}$ deliberately quantifies practical
anisotropy and therefore depends on the declared coordinate scales and
threshold.

\subsection{Pure-state dimension bound}
\label{subsec:pure-state-rank}

Consider a pure-state model
\begin{equation}
  |\psi(\bm{\lambda})\rangle\in\mathbb{C}^{d}
\end{equation}
with $p$ real coordinates.  Define the projected tangents
\begin{equation}
  |h_i\rangle
  =
  P_\perp|\partial_i\psi\rangle,
  \qquad
  P_\perp
  =
  \mathbb{I}-|\psi\rangle\langle\psi|.
  \label{eq:projected-pure-tangents}
\end{equation}
The QFIM is
\begin{equation}
  \QFI_{ij}
  =
  4\,\operatorname{Re}\langle h_i|h_j\rangle.
  \label{eq:qfim-horizontal-gram}
\end{equation}

Introduce the real projected tangent map
\begin{equation}
  T
  =
  2
  \begin{pmatrix}
    \operatorname{Re}h_1 & \cdots & \operatorname{Re}h_p\\
    \operatorname{Im}h_1 & \cdots & \operatorname{Im}h_p
  \end{pmatrix},
  \label{eq:real-projected-tangent-map}
\end{equation}
where the real and imaginary Hilbert-space components are stacked vertically.
Then
\begin{equation}
  \QFI=T^{\mathsf T}T.
  \label{eq:qfim-tangent-factorization}
\end{equation}
Consequently,
\begin{equation}
  \ker\QFI=\ker T,
  \qquad
  \rank\QFI=\rank T.
  \label{eq:qfim-tangent-rank}
\end{equation}

\begin{theorem}[Pure-state rank bound]
\label{thm:pure-state-rank-bound}
For a differentiable pure-state model with $p$ real coordinates in a complex
Hilbert space of dimension $d$,
\begin{equation}
  \rank\QFI
  \leq
  \min\!\left[p,\,2(d-1)\right].
  \label{eq:pure-state-rank-theorem}
\end{equation}
\end{theorem}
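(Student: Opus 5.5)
The plan is to use the factorization $\QFI=T^{\mathsf T}T$ of Eq.~\eqref{eq:qfim-tangent-factorization} together with $\rank\QFI=\rank T$ from Eq.~\eqref{eq:qfim-tangent-rank}, and to bound the rank of the real map $T$ in two independent ways. The map $T:\mathbb{R}^p\to\mathbb{R}^{2d}$ has $p$ columns, so $\rank T\le p$ immediately. The substantive bound $2(d-1)$ comes from controlling the column space of $T$ rather than its nominal codomain $\mathbb{R}^{2d}$.

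For the second bound, every projected tangent $|h_i\rangle=P_\perp|\partial_i\psi\rangle$ lies in the complex subspace $\mathcal{H}_\perp=\{|\phi\rangle:\langle\psi|\phi\rangle=0\}$. This is the range of the rank-$(d-1)$ projector $P_\perp$, so $\dim_{\mathbb{C}}\mathcal{H}_\perp=d-1$. Regarded as a real vector space, $\mathcal{H}_\perp$ has real dimension $2(d-1)$.

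Next I make precise that the columns of $T$ are the images of the $h_i$ under the real-linear isometry $\iota:\mathbb{C}^d\to\mathbb{R}^{2d}$, $\phi\mapsto(\operatorname{Re}\phi,\operatorname{Im}\phi)$, scaled by $2$. Since $\iota$ is a real-linear injection, $\iota(\mathcal{H}_\perp)$ is a real subspace of $\mathbb{R}^{2d}$ of dimension $2(d-1)$. The column space of $T$ is contained in it, so $\rank T\le 2(d-1)$. The consistency check is that $\iota(\phi)\cdot\iota(\chi)=\operatorname{Re}\langle\phi|\chi\rangle$, which recovers Eq.~\eqref{eq:qfim-horizontal-gram}.

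Combining the two bounds gives Eq.~\eqref{eq:pure-state-rank-theorem}. For the qutrit, $d=3$, this gives $\rank\QFI\le 4$ for the six-coordinate model, which is the rank deficiency cited earlier. Geometrically, the real span of the projected tangents is the tangent space of $\mathbb{CP}^{d-1}$ at the ray, which has real dimension $2(d-1)$.

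The main obstacle is conceptual rather than computational. The projection removes the direction $i|\psi\rangle$ as well as $|\psi\rangle$, which is why the bound is $2d-2$ and not $2d-1$. Normalization alone only kills $\operatorname{Re}\langle\psi|\partial_i\psi\rangle$, and the parameter-dependent phase direction contributes nothing only because of $P_\perp$. I will make sure that $P_\perp$ annihilates the full complex line spanned by $|\psi\rangle$, so that both real directions $\iota(\psi)$ and $\iota(i\psi)$ are excluded.
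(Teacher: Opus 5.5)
Your proof is correct and is essentially the paper's argument: you factor $\QFI=T^{\mathsf T}T$, then bound $\rank T$ by its $p$ columns and by the real dimension $2(d-1)$ of the horizontal subspace $\{\phi:\langle\psi|\phi\rangle=0\}$, which contains every $|h_i\rangle$. The one imprecision is in your geometric aside: the real span of the projected tangents lies in the tangent space of $\mathbb{CP}^{d-1}$ at the ray but need not equal it, and equality holds only for a locally complete model as in Corollary~\ref{cor:complete-pure-qutrit-cost}.
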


\begin{proof}
The condition
\begin{equation}
  \langle\psi|h_i\rangle=0
\end{equation}
places every $|h_i\rangle$ in the complex $(d-1)$-dimensional horizontal
subspace orthogonal to $|\psi\rangle$.  This horizontal subspace has real
dimension $2(d-1)$.  The image of the real map $T$ therefore has dimension no
larger than $2(d-1)$ and, because $T$ has $p$ columns, no larger than $p$.
Equation~\eqref{eq:qfim-tangent-rank} then gives
Eq.~\eqref{eq:pure-state-rank-theorem}.
\end{proof}

The same result follows geometrically because normalized pure states modulo
global phase form the complex projective manifold
\begin{equation}
  \mathbb{CP}^{d-1},
\end{equation}
whose real dimension is $2(d-1)$
\cite{ProvostVallee1980,BraunsteinCaves1994,Paris2009}.

For a three-flavor neutrino at one fixed preparation and propagation setting,
the propagated pure state is a qutrit.  In vacuum this setting may be
parameterized by $L/E$; in matter, the baseline, energy, and matter profile
must be fixed separately.  Hence
\begin{equation}
  d=3
  \quad\Longrightarrow\quad
  \rank\QFI\leq4.
  \label{eq:pure-qutrit-rank-bound}
\end{equation}
A six-coordinate QFIM for one pure qutrit must therefore have
\begin{equation}
  \dim\ker\QFI\geq2.
  \label{eq:pure-qutrit-nullity}
\end{equation}
This conclusion is independent of vacuum versus coherent matter propagation,
the PMNS representation, and the particular nonsingular coordinates used to
label the state.

The theorem does not apply unchanged when the statistical state includes an
observed energy label, several independently prepared channels, or a
full-rank mixed density matrix.  Those are higher-dimensional statistical
models even if every conditional component is individually a pure qutrit.

\subsection{Numerical verification of the pure-qutrit bound}
\label{subsec:pure-qutrit-validation}

We evaluate the analytically differentiated projected tangent map for a
muon-flavor source propagating through constant matter with
\begin{equation}
  L=1300~\mathrm{km},
  \qquad
  \rho_{\mathrm m}=2.848~\mathrm{g\,cm^{-3}},
  \qquad
  Y_e=0.5,
  \label{eq:identifiability-matter-configuration}
\end{equation}
at 19 true energies uniformly spaced from $0.5$ to $5.0~\mathrm{GeV}$.
The oscillation benchmark is
\begin{align}
  (\theta_{12},\theta_{13},\theta_{23},\dcp)
  &=
  (33.76^\circ,\,8.62^\circ,\,43.29^\circ,\,212^\circ),
  \nonumber\\
  (\Delta m^2_{21},\Delta m^2_{31})
  &=
  (7.537\times10^{-5},\,2.511\times10^{-3})~\mathrm{eV}^2.
  \label{eq:identifiability-oscillation-benchmark}
\end{align}
Angular coordinates are converted to radians before differentiation.  All six
oscillation coordinates of Eq.~\eqref{eq:parameter-vector} are
differentiated.

After applying the coordinate scale matrix in
Eq.~\eqref{eq:default-scales}, every tangent map $T_kS$ has numerical rank
four at relative singular-value tolerance $10^{-10}$.  Moreover, the
factorization
\begin{equation}
  \QFI_k=T_k^{\mathsf T}T_k
  \label{eq:numerical-tangent-factorization}
\end{equation}
holds with maximum relative coordinate-scaled Frobenius residual
\begin{equation}
  \max_k
  \frac{
    \left\|
      \QFI_k-T_k^{\mathsf T}T_k
    \right\|_{S,F}
  }{
    \|\QFI_k\|_{S,F}
  }
  =
  9.28\times10^{-16}.
  \label{eq:tangent-factorization-residual}
\end{equation}
Theorem~\ref{thm:pure-state-rank-bound} guarantees at least two exact null
directions for every six-coordinate pure-qutrit matrix.  The numerical
rank-four result indicates that this structural upper bound is saturated at
each of the 19 sampled energies, with no additional null direction resolved
at the declared tolerance.

\subsection{Spectral kernel-intersection theorem}
\label{subsec:kernel-intersection}

Suppose that an energy or channel label $k$ is observed and that independent
positive expected counts $N_k$ contribute information matrices $\QFI_k$.
The aggregate state QFI is
\begin{equation}
  \QFI_{\mathrm{tot}}
  =
  \sum_{k:N_k>0}N_k\QFI_k.
  \label{eq:spectral-qfi-sum}
\end{equation}

\begin{theorem}[Positive-weight kernel intersection]
\label{thm:positive-weight-kernel-intersection}
For positive-semidefinite matrices $\QFI_k$ and strictly positive active
weights,
\begin{equation}
  \ker\QFI_{\mathrm{tot}}
  =
  \bigcap_{k:N_k>0}\ker\QFI_k.
  \label{eq:kernel-intersection-theorem}
\end{equation}
\end{theorem}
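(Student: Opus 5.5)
The plan is to prove the two inclusions separately, using only positive semidefiniteness of each $\QFI_k$ and strict positivity of the active weights $N_k$.

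\emph{Easy inclusion.} If $\bm{v}\in\bigcap_{k:N_k>0}\ker\QFI_k$, then $\QFI_k\bm{v}=0$ for every active $k$. Linearity of Eq.~\eqref{eq:spectral-qfi-sum} then gives $\QFI_{\mathrm{tot}}\bm{v}=\sum_k N_k\QFI_k\bm{v}=0$. This direction does not need positivity at all.

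\emph{Reverse inclusion.} Take $\bm{v}\in\ker\QFI_{\mathrm{tot}}$ and contract with $\bm{v}^{\mathsf T}$:
\begin{equation}
  0=\bm{v}^{\mathsf T}\QFI_{\mathrm{tot}}\bm{v}
  =\sum_{k:N_k>0}N_k\,\bm{v}^{\mathsf T}\QFI_k\bm{v}.
\end{equation}
Each summand is a strictly positive weight times a nonnegative quadratic form. Hence every term vanishes, so $\bm{v}^{\mathsf T}\QFI_k\bm{v}=0$ for all active $k$.

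The key step is to pass from a vanishing quadratic form to a vanishing matrix action, $\QFI_k\bm{v}=0$. This is the equivalence already noted after Eq.~\eqref{eq:information-null-direction}: for $\QFI_k\succeq0$, write $\QFI_k=\QFI_k^{1/2}\QFI_k^{1/2}$. Then $\bm{v}^{\mathsf T}\QFI_k\bm{v}=\|\QFI_k^{1/2}\bm{v}\|^2=0$ forces $\QFI_k^{1/2}\bm{v}=0$, and therefore $\QFI_k\bm{v}=0$. For the pure-state components one could instead use the factorization $\QFI_k=T_k^{\mathsf T}T_k$ of Eq.~\eqref{eq:qfim-tangent-factorization} together with Eq.~\eqref{eq:qfim-tangent-rank}. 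The square-root argument is preferable because it covers arbitrary PSD blocks, including mixed-state or classical Fisher components.

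I do not expect a real obstacle. The only points needing care are these:
\begin{itemize}
  \item restricting to the active set $\{k:N_k>0\}$, since a zero weight would permit a nonzero quadratic form in that block;
  \item relying on PSD, since for indefinite summands cancellations would break the argument.
\end{itemize}
Both are stated hypotheses of the theorem. As a corollary, $\rank\QFI_{\mathrm{tot}}=p-\dim\bigcap_k\ker\QFI_k$, which gives the criterion for spectral rank restoration.
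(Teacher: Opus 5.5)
Your proposal is correct and follows essentially the same argument as the paper: the trivial inclusion by linearity, then the vanishing of a sum of nonnegative terms $N_k\,\bm{v}^{\mathsf T}\QFI_k\bm{v}$, closed by the positive-semidefinite equivalence $\bm{v}^{\mathsf T}\QFI_k\bm{v}=0\Leftrightarrow\QFI_k\bm{v}=0$. The only difference is that you justify this equivalence explicitly via the square root $\QFI_k^{1/2}$, whereas the paper states it without proof and notes the stacked-tangent factorization as an alternative route.
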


\begin{proof}
If $\bm{v}$ lies in every active conditional kernel, then
$\QFI_{\mathrm{tot}}\bm{v}=0$.  Conversely, suppose
$\bm{v}\in\ker\QFI_{\mathrm{tot}}$.  Positivity gives
\begin{equation}
  0
  =
  \bm{v}^{\mathsf T}\QFI_{\mathrm{tot}}\bm{v}
  =
  \sum_{k:N_k>0}
  N_k\,
  \bm{v}^{\mathsf T}\QFI_k\bm{v},
  \label{eq:kernel-proof-sum}
\end{equation}
where every term is nonnegative.  Each term must therefore vanish.  For a
positive-semidefinite matrix,
\begin{equation}
  \bm{v}^{\mathsf T}\QFI_k\bm{v}=0
  \quad\Longleftrightarrow\quad
  \QFI_k\bm{v}=0.
\end{equation}
Thus $\bm{v}$ belongs to every active conditional kernel.
\end{proof}

The condition $N_k>0$ is essential.  A zero-weight bin contributes no
information and must be excluded from the intersection.  Negative weights
have no interpretation as independent Fisher information and invalidate the
positivity proof.

The theorem also follows directly from the tangent factorization.  Define
\begin{equation}
  T_{\mathrm{stack}}
  =
  \begin{pmatrix}
    \sqrt{N_1}\,T_1\\
    \sqrt{N_2}\,T_2\\
    \vdots
  \end{pmatrix},
  \label{eq:stacked-tangent-map}
\end{equation}
where only bins with $N_k>0$ are included.  Then
\begin{equation}
  \QFI_{\mathrm{tot}}
  =
  T_{\mathrm{stack}}^{\mathsf T}T_{\mathrm{stack}},
  \qquad
  \ker T_{\mathrm{stack}}
  =
  \bigcap_{k:N_k>0}\ker T_k.
  \label{eq:stacked-tangent-kernel}
\end{equation}
Spectral rank restoration therefore occurs exactly when the active
conditional null spaces cease to share a common direction.

\subsection{Energy-pair complementarity}
\label{subsec:energy-complementarity}

Each single-energy model in our 19-point grid has a two-dimensional numerical
kernel in the six-dimensional scaled parameter space at relative tolerance
$10^{-10}$.  Specifically, the kernel bases and principal angles below are
computed in the dimensionless coordinates $\bm{x}$ defined by
$\bm{\lambda}=S\bm{x}$.  Although the existence and dimension of an exact
kernel intersection are invariant under nonsingular scaling, the numerical
values of nonzero principal angles depend on the declared coordinate metric.

For two energies $E_a$ and $E_b$, let
\begin{equation}
  0\leq\phi_1\leq\phi_2\leq\frac{\pi}{2}
  \label{eq:kernel-principal-angles}
\end{equation}
be the principal angles between their two-dimensional numerical null spaces.
An exactly zero principal angle would signal a shared null direction.  A small
but numerically nonzero angle instead diagnoses approximate alignment at the
declared resolution.

We scan all $\binom{19}{2}=171$ distinct energy pairs using normalized equal
weights,
\begin{equation}
  \QFI_{ab}
  =
  \frac{1}{2}
  \left(
    \QFI_a+\QFI_b
  \right).
  \label{eq:equally-weighted-pair-qfi}
\end{equation}
Every pair has a numerically trivial kernel intersection and numerical rank
six at relative tolerance $10^{-10}$.  The aggregate-kernel dimension obtained
from $\QFI_{ab}$ agrees with the independently stacked
conditional-kernel calculation for every pair, with no dimension mismatch.
The stacked-tangent factorization is satisfied to a maximum relative
coordinate-scaled residual of $1.21\times10^{-15}$.

Numerical rank restoration varies greatly in strength.
Table~\ref{tab:energy-pair-complementarity} compares the most E-optimal pair
on the scanned grid with the weakest pair.

\begin{table}[t]
  \caption{
    Extremal normalized, equally weighted energy pairs among the 171 scanned
    pairs.  ``Best'' refers only to the declared 19-point grid and maximizes
    the smallest eigenvalue of
    $S^{\mathsf T}\QFI_{ab}S$.  The condition number is computed from all six
    scaled eigenvalues.  Numerical rank uses relative tolerance $10^{-10}$.
  }
  \label{tab:energy-pair-complementarity}
  \centering
  \begin{tabular}{@{}lcccc@{}}
    \toprule
    Pair [GeV]
      & $(\phi_1,\phi_2)$
      & $q_{\min}$
      & $\kappa$
      & $r_{\mathrm{eff}}(10^{-6})$ \\
    \midrule
    $(0.50,\,2.75)$
      & $(5.306^\circ,\,39.670^\circ)$
      & $1.4161\times10^{-2}$
      & $1.59\times10^{3}$
      & $6$ \\
    $(4.75,\,5.00)$
      & $(0.0320^\circ,\,0.2290^\circ)$
      & $1.0021\times10^{-7}$
      & $4.75\times10^{7}$
      & $4$ \\
    \bottomrule
  \end{tabular}
\end{table}

The weakest pair is numerically rank six at the declared tolerance, and
neither computed principal angle vanishes at the reported numerical
resolution.  This floating-point result is not, by itself, a symbolic proof
that the exact angles are nonzero.  Nevertheless, both kernels are clearly
nearly aligned at the declared scale, producing two extremely weak restored
directions.  Numerical full rank therefore answers whether a direction is
resolved at a stated tolerance, not whether it carries useful information at
a finite information scale.

Across all pairs, the correlation between $\phi_1$ and
$\log_{10}q_{\min}$ is $0.518$.  Null-space rotation is an important but not
exclusive determinant of spectral complementarity: information strength in
the non-null directions also affects the smallest aggregate eigenvalue.

The effective-rank distribution is summarized in
Table~\ref{tab:pair-effective-ranks}.

\begin{table}[t]
  \caption{
    Effective ranks of the 171 energy pairs that have numerical rank six at
    relative tolerance $10^{-10}$.  These counts use the scaled eigenvalues
    and the definition in Eq.~\eqref{eq:effective-rank}.
  }
  \label{tab:pair-effective-ranks}
  \centering
  \begin{tabular}{@{}lrrr@{}}
    \toprule
    Threshold & Rank 4 & Rank 5 & Rank 6 \\
    \midrule
    $\tau=10^{-6}$ & $6$  & $25$ & $140$ \\
    $\tau=10^{-4}$ & $63$ & $30$ & $78$ \\
    \bottomrule
  \end{tabular}
\end{table}

Define the normalized uniformly weighted 19-energy aggregate by
\begin{equation}
  \QFI_{\mathrm{19\,bin}}
  =
  \frac{1}{19}
  \sum_{k=1}^{19}\QFI_k.
  \label{eq:uniform-19-bin-qfi}
\end{equation}
Its scaled eigenvalues are
\begin{equation}
  \begin{split}
  \operatorname{eig}
  \left(
    S^{\mathsf T}\QFI_{\mathrm{19\,bin}}S
  \right)
  ={}&
  \bigl(
    0.004189,\,
    0.023149,\,
    0.165448,\,
    5.40775,\,
    6.92602,\,
    8.78010
  \bigr).
  \end{split}
  \label{eq:uniform-spectrum-eigenvalues}
\end{equation}
It has condition number $\kappa=2.10\times10^{3}$ and remains effective rank
six even at $\tau=10^{-4}$.  Broadband coverage can
therefore do more than resolve common null directions numerically:
sufficiently complementary energies can also stabilize the restored tangent
model.

\subsection{Retaining versus tracing the energy label}
\label{subsec:energy-label}

Spectral aggregation depends on whether the energy-bin label is retained as
part of the observation.  Let $K$ be a parameter-independent classical label
with probabilities $w_k$ and conditional states $\rho_k$.  The resolved
classical--quantum state is
\begin{equation}
  \rho_{KQ}
  =
  \sum_k
  w_k
  |k\rangle\langle k|
  \otimes
  \rho_k.
  \label{eq:resolved-classical-quantum-state}
\end{equation}
Because the label sectors are orthogonal and the weights are fixed controls,
\begin{equation}
  \QFI(\rho_{KQ})
  =
  \sum_k w_k\QFI(\rho_k).
  \label{eq:resolved-label-qfi}
\end{equation}
If the weights depend on the estimands, the complete QFI becomes
\begin{equation}
  \QFI_{ij}(\rho_{KQ})
  =
  \sum_{k:w_k>0}
  \frac{
    \partial_iw_k\,\partial_jw_k
  }{w_k}
  +
  \sum_{k:w_k>0}
  w_k\QFI_{ij}(\rho_k).
  \label{eq:parameter-dependent-weight-information}
\end{equation}
The first term is the classical source-weight information.  The formula is an
interior, support-regular identity: a component whose weight vanishes requires
the separate boundary analysis described in
Sec.~\ref{subsubsec:zero_probability_boundaries}.  Such a parameter-dependent
preparation model is distinct from the fixed-weight model used below.

If the label is discarded, the remaining state is
\begin{equation}
  \rho_Q
  =
  \Tr_K\rho_{KQ}
  =
  \sum_k w_k\rho_k.
  \label{eq:energy-traced-state}
\end{equation}
Partial trace is a parameter-independent quantum channel, so QFI monotonicity
gives
\begin{equation}
  \QFI(\rho_Q)
  \preceq
  \QFI(\rho_{KQ}).
  \label{eq:energy-trace-qfi-loss}
\end{equation}
The unresolved state is generally mixed even when every $\rho_k$ is pure.
A full-rank qutrit density matrix belongs to an eight-real-dimensional
manifold, so its QFIM may have rank six without contradicting the
four-dimensional pure-qutrit bound.

The model in Eq.~\eqref{eq:resolved-classical-quantum-state} represents a
classically resolved incoherent spectrum.  It is not a coherent energy
wavepacket.  A source with experimentally relevant coherence between energy
components would require a different Hilbert-space model and a declared
measurement capable of accessing that coherence.

\subsubsection{Flavor measurement with and without the label}

For a flavor measurement, retaining energy produces joint probabilities
\begin{equation}
  p_{k\beta}
  =
  w_k p_{\beta|k}.
  \label{eq:resolved-flavor-probabilities}
\end{equation}
With parameter-independent weights, the corresponding CFI is
\begin{equation}
  \CFI_{\mathrm{resolved}}
  =
  \sum_k w_k\CFI^{(k)}_{\mathrm{flavor}}.
  \label{eq:resolved-flavor-cfi}
\end{equation}
If energy is discarded before inference, the probabilities become
\begin{equation}
  p_\beta
  =
  \sum_k w_kp_{\beta|k},
  \label{eq:unresolved-flavor-probabilities}
\end{equation}
and classical data processing gives
\begin{equation}
  \CFI_{\mathrm{unresolved}}
  \preceq
  \CFI_{\mathrm{resolved}}.
  \label{eq:flavor-label-data-processing}
\end{equation}

A categorical distribution with $m$ outcomes has only $m-1$ independent
probabilities.  Therefore
\begin{equation}
  \rank\CFI_{\mathrm{unresolved}}
  \leq
  m-1.
  \label{eq:categorical-rank-bound}
\end{equation}
For an unresolved three-flavor measurement,
\begin{equation}
  \rank\CFI_{\mathrm{unresolved}}
  \leq2,
  \label{eq:unresolved-flavor-rank-bound}
\end{equation}
regardless of how many oscillation coordinates enter the three probabilities.

\subsection{Numerical energy-label comparison}
\label{subsec:energy-label-results}

Using the same baseline, matter configuration, source flavor, and oscillation
benchmark as in
Eqs.~\eqref{eq:identifiability-matter-configuration} and
\eqref{eq:identifiability-oscillation-benchmark}, we compare the five energies
\begin{equation}
  E=(1.5,\,2.0,\,2.5,\,3.0,\,3.5)~\mathrm{GeV}
\end{equation}
with fixed normalized weights
\begin{equation}
  w=(0.1,\,0.2,\,0.4,\,0.2,\,0.1).
\end{equation}
The energy-traced density matrix has eigenvalues
\begin{equation}
  \operatorname{eig}(\rho_Q)
  =
  \left(
    4.8772\times10^{-5},\,
    0.139994,\,
    0.859958
  \right),
  \label{eq:mixed-state-density-eigenvalues}
\end{equation}
and is numerically full rank with a minimum eigenvalue well separated from
binary64 roundoff.

\begin{table}[t]
  \caption{
    Numerical and effective information ranks in the controlled five-energy
    model.  Numerical rank uses relative tolerance $10^{-10}$, while
    effective rank uses $\tau=10^{-6}$; both are evaluated after the
    coordinate scaling in Eq.~\eqref{eq:default-scales}.
  }
  \label{tab:energy-label-ranks}
  \centering
  \begin{tabular}{@{}lccc@{}}
    \toprule
    Information layer
      & Numerical rank
      & Effective rank
      & Smallest scaled eigenvalue \\
    \midrule
    Energy-resolved state QFI
      & $6$ & $6$ & $3.64\times10^{-4}$ \\
    Energy-unresolved state QFI
      & $6$ & $6$ & $3.27\times10^{-4}$ \\
    Energy-resolved flavor CFI
      & $6$ & $5$ & $1.07\times10^{-8}$ \\
    Energy-unresolved flavor CFI
      & $2$ & $2$ & numerical zero \\
    \bottomrule
  \end{tabular}
\end{table}

The generalized eigenvalues of the unresolved mixed-state QFI relative to the
resolved state QFI are
\begin{equation}
  \left(
    0.419934,\,
    0.879166,\,
    0.935465,\,
    0.991895,\,
    0.999296,\,
    0.999974
  \right).
  \label{eq:energy-label-retention-spectrum}
\end{equation}
Thus tracing the label preserves nearly all state information in several
generalized directions but removes approximately $58.0\%$ of the information
in the most affected direction.

By the data-processing result in
Eq.~\eqref{eq:energy-trace-qfi-loss}, the difference
\begin{equation}
  \QFI_{\mathrm{resolved}}
  -
  \QFI_{\mathrm{unresolved}}
\end{equation}
is positive semidefinite.  Numerically, its minimum scaled eigenvalue is
\begin{equation}
  2.92\times10^{-6}.
  \label{eq:state-label-psd-margin}
\end{equation}
Likewise,
\begin{equation}
  \CFI_{\mathrm{resolved}}
  -
  \CFI_{\mathrm{unresolved}}
  \succeq0
\end{equation}
by classical data processing, and its computed minimum scaled eigenvalue is
\begin{equation}
  1.05\times10^{-8}.
  \label{eq:flavor-label-psd-margin}
\end{equation}

The resolved flavor CFI has numerical rank six at relative tolerance
$10^{-10}$, but its smallest eigenvalue is sufficiently weak that it has
effective rank five at $\tau=10^{-6}$ and a condition number of approximately
\begin{equation}
  8.8\times10^{8}.
\end{equation}
This is a concrete example in which spectral labels restore numerical rank at
a fine tolerance without producing robust simultaneous precision.

\subsection{Identifiability conclusions}

The results of this section establish the following hierarchy of statements:

\begin{enumerate}
  \item A single pure three-flavor state at a fixed preparation and
        propagation setting has QFIM rank at most four, irrespective of the
        number of coordinates used to label it.

  \item Spectrally resolved information restores a direction exactly when that
        direction is absent from the intersection of the active conditional
        kernels.

  \item A trivial common kernel establishes formal first-order local
        identifiability of the regular tangent model but does not control the
        magnitude of the restored eigenvalues.  A numerical calculation
        establishes this conclusion only at its declared scaling and
        tolerance.

  \item Retaining an energy label and tracing it out are different statistical
        models related by parameter-independent data processing.

  \item A mixed qutrit can have rank-six QFI, whereas an unresolved
        three-outcome flavor measurement has CFI rank at most two.

  \item A pseudoinverse constrains only functions whose differentials belong
        to the estimable cotangent space.  It does not manufacture information
        in exact null directions.
\end{enumerate}

Identifiability is necessary but not sufficient for simultaneous precision.
Even a full-rank aggregate QFIM may possess nonzero curvature and therefore
fail the multiparameter compatibility condition.
Section~\ref{sec:attainability} next restricts the model to its identifiable
tangent space and evaluates attainable multiparameter bounds.

\section{Multiparameter attainability}
\label{sec:attainability}

The rank analysis of Sec.~\ref{sec:identifiability} determines which local
parameter combinations are estimable, but it does not determine whether the
inverse SLD quantum Fisher information matrix can be attained simultaneously.
That second question is genuinely multiparameter: measurements that are
optimal for distinct tangent directions need not be compatible.  In this
section we therefore restrict every calculation to an identifiable tangent
model before testing compatibility or evaluating an attainable bound.

\subsection{SLD compatibility and weighted precision costs}
\label{subsec:sld_compatibility}

Let $L_i$ denote the symmetric logarithmic derivative (SLD),
\begin{equation}
  \partial_i\rho
  =
  \frac{1}{2}\left(\rho L_i+L_i\rho\right).
  \label{eq:sld_definition_attainability}
\end{equation}
With the conventions introduced above, the SLD QFIM and the SLD commutator
tensor are
\begin{align}
  \QFI_{ij}
  &=
  \operatorname{Re}\Tr\!\left(\rho L_iL_j\right),
  \label{eq:qfim_sld_attainability}
  \\
  \mathcal{D}_{ij}
  &=
  \operatorname{Im}\Tr\!\left(\rho L_iL_j\right)
   =
  \frac{1}{2i}\Tr\!\left(\rho[L_i,L_j]\right).
  \label{eq:curvature_sld_attainability}
\end{align}
We refer to $\mathcal{D}$ as the curvature tensor in our convention.  Other
definitions of the mean Uhlmann curvature may differ from
Eq.~\eqref{eq:curvature_sld_attainability} by an overall sign or numerical
factor.  For a pure state, the present convention gives
\begin{equation}
  \mathcal{D}_{ij}
  =
  4\,\operatorname{Im}\langle h_i|h_j\rangle,
  \qquad
  |h_i\rangle
  =
  P_\perp|\partial_i\psi\rangle,
  \label{eq:pure_state_curvature_horizontal}
\end{equation}
where the horizontal tangents were defined in
Eq.~\eqref{eq:projected-pure-tangents}.

All compatibility statements below are made after restriction to the
identifiable tangent model.  The coordinate-independent weak-commutativity
condition is
\begin{equation}
  \left.
  \mathcal{D}
  \right|_{\mathcal{T}_{\mathrm{id}}}
  =
  0.
  \label{eq:weak_commutativity}
\end{equation}
Equivalently,
$\bm{u}^{\mathsf T}\mathcal{D}\bm{v}=0$ for every pair of identifiable
tangent vectors $\bm{u}$ and $\bm{v}$.  This condition is necessary for joint
saturation of the SLD matrix bound and, for regular pure-state models, is also
sufficient for attaining that bound locally
\cite{Matsumoto2002,Ragy2016,RagyErratum2019}.  Weak commutativity must not be confused with
identifiability: an invertible QFIM may be incompatible, while a singular
QFIM may contain a lower-dimensional compatible tangent model.

For an estimator with covariance $V$, introduce a positive-definite weight
matrix $W$ on the identifiable tangent model and the scalar loss
\begin{equation}
  \mathcal{C}(W,V)=\operatorname{tr}(WV),
  \label{eq:weighted_covariance_cost}
\end{equation}
where $\operatorname{tr}$ acts on parameter space and $\Tr$ acts on the
quantum Hilbert space.  The scalar SLD bound is
\begin{equation}
  C_{\mathrm{S}}(W)
  =
  \operatorname{tr}\!\left(W\QFI^{-1}\right).
  \label{eq:sld_scalar_cost}
\end{equation}
Although $C_{\mathrm{S}}$ is always a lower bound on the locally unbiased
cost of an identifiable model, it need not be attainable.

A scalar multiparameter statement is incomplete until $W$ has been specified.
Under a nonsingular reparameterization, $W$ transforms covariantly with
$\QFI$, while $V$ transforms contravariantly, leaving
$\operatorname{tr}(WV)$ invariant.  We use
\begin{equation}
  W=\QFI
  \label{eq:metric_weight}
\end{equation}
for intrinsic geometric comparisons.  For a $p$-dimensional identifiable
tangent, this convention gives
\begin{equation}
  C_{\mathrm{S}}(\QFI)=p.
  \label{eq:metric_weight_sld_cost}
\end{equation}
It does not replace a scientifically chosen loss function in a
phenomenological forecast.

\subsection{Holevo bound and its hierarchy}
\label{subsec:holevo_definition}

Let $X_i$ be Hermitian influence operators satisfying the
local-unbiasedness conditions
\begin{align}
  \Tr(\rho X_i)&=0,
  \label{eq:holevo_zero_mean}
  \\
  \Tr\!\left[(\partial_j\rho)X_i\right]&=\delta_{ij}.
  \label{eq:holevo_local_unbiasedness}
\end{align}
Define their complex Gram matrix
\begin{equation}
  Z_{ij}(\bm{X})
  =
  \Tr(\rho X_iX_j).
  \label{eq:holevo_gram}
\end{equation}
The Holevo Cram\'er--Rao bound is
\begin{equation}
  C_{\mathrm{H}}(W)
  =
  \min_{\bm{X}}
  \left[
    \operatorname{tr}\!\left(W\,\operatorname{Re}Z(\bm{X})\right)
    +
    \left\|
      \sqrt{W}\,\operatorname{Im}Z(\bm{X})\sqrt{W}
    \right\|_1
  \right],
  \label{eq:holevo_bound}
\end{equation}
where $\|\cdot\|_1$ is the trace norm
\cite{Holevo2011,Matsumoto2002}.  For a regular i.i.d.\ model and a locally
unbiased or locally asymptotically unbiased sequence of estimators,
\begin{equation}
  \liminf_{n\rightarrow\infty}
  n\,\operatorname{tr}\!\left(WV^{(n)}\right)
  \geq C_{\mathrm{H}}(W).
  \label{eq:holevo_asymptotic_bound}
\end{equation}
The bound is asymptotically attainable under the standard local regularity
conditions, generally allowing collective measurements
\cite{YamagataFujiwaraGill2013}.  For regular pure-state models, the Holevo
value can instead be attained with separable measurements, although the
locally optimal POVM may depend on the expansion point and may require a
preliminary localization or adaptive stage
\cite{Matsumoto2002,WangChenYuan2026PRL,WangChenYuan2026PRA}.

A useful feasible upper expression is obtained by choosing the influence
operators generated by the SLD tangent,
\begin{equation}
  X_i^{(\mathrm{S})}
  =
  \sum_j
  \left(\QFI^{-1}\right)_{ij}L_j.
  \label{eq:sld_generated_influence_operators}
\end{equation}
The resulting cost, often called the $D$-invariant upper expression, is
\begin{equation}
  C_{\mathrm{D}}(W)
  =
  C_{\mathrm{S}}(W)
  +
  \left\|
    \sqrt{W}\,
    \QFI^{-1}\mathcal{D}\QFI^{-1}
    \sqrt{W}
  \right\|_1.
  \label{eq:d_invariant_upper}
\end{equation}
For a general model, Eq.~\eqref{eq:d_invariant_upper} is an upper bound on the
Holevo value obtained from a particular feasible choice of influence
operators.  It equals $C_{\mathrm{H}}$ when the model is $D$-invariant.  More
generally,
\begin{equation}
  C_{\mathrm{S}}(W)
  \leq
  C_{\mathrm{H}}(W)
  \leq
  C_{\mathrm{D}}(W)
  \leq
  2C_{\mathrm{S}}(W),
  \label{eq:holevo_factor_two_hierarchy}
\end{equation}
and the final factor of two is sharp
\cite{TsangAlbarelliDatta2020}.  Equation
\eqref{eq:holevo_factor_two_hierarchy} also provides a consistency check on
every numerical calculation below.

\subsection{Exact pure-qutrit results}
\label{subsec:pure_qutrit_holevo}

For a two-parameter identifiable pure-state model, define the invariant
incompatibility coefficient
\begin{equation}
  \beta
  =
  \max_{\zeta\in
  \operatorname{spec}(\QFI^{-1}\mathcal{D})}
  |\operatorname{Im}\zeta|
  =
  \frac{|\mathcal{D}_{12}|}{\sqrt{\det\QFI}},
  \qquad
  0\leq\beta\leq1.
  \label{eq:beta_definition}
\end{equation}
Let
\begin{equation}
  \Omega
  =
  \QFI^{-1/2}W\QFI^{-1/2},
  \qquad
  \omega_1
  =
  \lambda_1(\Omega)
  \geq
  \omega_2
  =
  \lambda_2(\Omega),
  \qquad
  \eta=\frac{1}{2}\arcsin\beta.
  \label{eq:canonical_pair_weight}
\end{equation}
The exact attainable two-parameter pure-state cost is
\begin{equation}
  C_{\mathrm{H}}(W)
  =
  \min_{0\leq\varphi\leq\eta}
  \left[
    \frac{\omega_1}
         {\cos^2(\varphi-\eta)}
    +
    \frac{\omega_2}
         {\cos^2(\varphi+\eta)}
  \right]
  \label{eq:general_two_parameter_pure_bound}
\end{equation}
\cite{Matsumoto2002,YungYungConlonAssad2025}.  For the metric weight
$W=\QFI$, this becomes
\begin{align}
  C_{\mathrm{S}}(\QFI)&=2,
  \nonumber\\
  C_{\mathrm{H}}(\QFI)
  &=
  \frac{4}{1+\sqrt{1-\beta^2}},
  \label{eq:metric_weight_pair_holevo}
  \\
  \frac{C_{\mathrm{H}}}{C_{\mathrm{S}}}
  &=
  \frac{2}{1+\sqrt{1-\beta^2}}.
  \label{eq:metric_weight_pair_ratio}
\end{align}
Thus $\beta=0$ gives the compatible SLD cost, whereas $\beta=1$ gives the
maximal factor-two penalty.

Recent general pure-state tradeoff relations provide tight analytical bounds
and constructive optimal measurements for an arbitrary number of identifiable
parameters
\cite{WangChenYuan2026PRL,WangChenYuan2026PRA}.  For completeness, and to
make the specialization to the qutrit geometry explicit, we give direct
derivations of the three- and four-dimensional metric-weighted costs used
below.

\begin{theorem}[Three-dimensional pure-qutrit cost]
\label{thm:three-dimensional-pure-qutrit-cost}
Let a regular three-parameter pure-state model in $\mathbb{C}^{3}$ have an
invertible QFIM.  For the metric weight $W=\QFI$,
\begin{equation}
  C_{\mathrm{S}}(\QFI)=3,
  \qquad
  C_{\mathrm{H}}(\QFI)=5,
  \qquad
  \frac{C_{\mathrm{H}}}{C_{\mathrm{S}}}
  =
  \frac{5}{3}.
  \label{eq:three_parameter_qutrit_exact}
\end{equation}
\end{theorem}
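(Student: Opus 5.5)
The plan is to reduce the statement to a normal form for the horizontal tangents and then evaluate the pure-state Holevo program directly. Every object in Eq.~\eqref{eq:holevo_bound} transforms covariantly under a change of chart, and the metric weight $W=\QFI$ is pulled back together with $\QFI$ by Eqs.~\eqref{eq:metric-curvature-pullback} and \eqref{eq:weight-pullback}. I may therefore choose coordinates in which $\QFI=\mathbb{I}_3$, so that $C_{\mathrm{S}}(\QFI)=\operatorname{tr}\mathbb{I}_3=3$ follows immediately from Eq.~\eqref{eq:sld_scalar_cost}. Any further orthogonal rotation of this frame preserves $W=\QFI=\mathbb{I}$. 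For a pure state I will use the standard reduction of the influence operators to horizontal vectors $|x_i\rangle=P_\perp X_i|\psi\rangle$ in the complex two-dimensional space $\mathcal{H}_\perp\cong\mathbb{R}^4$. In this reduction the local-unbiasedness conditions become real duality conditions $2\operatorname{Re}\langle x_i|h_j\rangle=\delta_{ij}$, and $Z_{ij}$ becomes $\langle x_i|x_j\rangle$. This is the same reduction that underlies Eq.~\eqref{eq:general_two_parameter_pure_bound}.

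The key geometric step uses the three real-orthonormal tangents $h_1,h_2,h_3$ after rescaling. They span a real three-dimensional subspace $V\subset\mathbb{R}^4$. Since $V$ and $iV$ are both three-dimensional in $\mathbb{R}^4$, the intersection $V\cap iV$ has real dimension at least two. It cannot be three-dimensional, because then $V$ would be a complex subspace, which is impossible for odd real dimension. Hence $V$ contains exactly one complex line $L$. By rotating the orthonormal frame, I will take $h_1,h_2$ to span $L$ and $h_3=u$ to be real-orthogonal to $L$. Because $iL=L$, the vector $u$ is complex-orthogonal to $L$. Moreover $iu$ is orthogonal to $V$, so it spans the missing fourth real direction. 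In this frame $\mathcal{D}$ has a single block: the $(1,2)$ pair has $\beta=1$ in Eq.~\eqref{eq:beta_definition}, and every $\mathcal{D}_{i3}$ vanishes.

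Next I will parametrize the feasible set. Write $x_i=l_i+a_iu$ with $l_i\in L$ and $a_i\in\mathbb{C}$. The duality constraints against $h_1,h_2$ fix the real-linear action of $l_i$ on $L$, which has real dimension two. This forces $l_3=0$ and determines $l_1,l_2$ as the dual basis of the pair. The constraints against $u$ require $\operatorname{Re}a_1=\operatorname{Re}a_2=0$ and $\operatorname{Re}a_3$ to equal the normalization constant. What remains is a finite-dimensional optimization over the imaginary parts of $a_1,a_2,a_3$. For the upper bound I will set $a_1=a_2=0$ and $\operatorname{Im}a_3=0$. The cost then splits as the $\beta=1$ pair cost $4$ from Eq.~\eqref{eq:metric_weight_pair_holevo}, plus $1$ for the third coordinate, giving $C_{\mathrm{H}}\le5$. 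This choice is exactly the optimal pair measurement combined with the SLD vector for $u$.

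The main obstacle is the matching lower bound: the free imaginary components $a_i u$ might couple the third coordinate to the incompatible pair and reduce the trace-norm term. To rule this out, I first note that these components only add the positive term $\operatorname{Re}(a^*a^{\mathsf T})$ to $\operatorname{tr}\operatorname{Re}Z$. The lower bound then needs $\|\operatorname{Im}Z\|_1$ to be at least its value at $a=0$ minus this added real cost. The $3\times3$ antisymmetric matrix $\operatorname{Im}Z$ has trace norm $2|\bm{w}|$, where $\bm{w}$ is its axial vector. Its $(1,2)$ entry equals the fixed pair value $\pm2$ plus $\operatorname{Im}(a_1^*a_2)$, and this correction vanishes because $a_1,a_2$ are purely imaginary. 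Hence $2|\bm{w}|\ge2|w_3|$, which is unchanged, while $\operatorname{tr}\operatorname{Re}Z$ can only increase. The pair part of the problem, restricted to $x_1,x_2\in L$, is exactly the $\beta=1$ problem with value $4$, so the bound $C_{\mathrm{H}}\ge4+1=5$ follows. As a cross-check I would compare against the general pure-state tradeoff relations \cite{WangChenYuan2026PRL,WangChenYuan2026PRA} and against the window $3\le5\le C_{\mathrm{D}}\le6$ from Eq.~\eqref{eq:holevo_factor_two_hierarchy}.
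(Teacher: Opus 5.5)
Your proposal is correct and follows essentially the paper's proof: you whiten by $\QFI$, bring the three tangents to the normal form $(1,0),(i,0),(0,i)$ in the horizontal $\mathbb{C}^{2}$ (you get it from the complex line $V\cap iV$, the paper from $U(2)$-transitivity on the unit sphere), parametrize the feasible influence vectors by three real coefficients along the normal direction $iu$, and show $\operatorname{tr}\operatorname{Re}Z\geq 3$ and $\|\operatorname{Im}Z\|_1\geq 2$, with equality at the SLD choice. One small slip: in the normalization where the $\beta=1$ pair costs $4$, the fixed $(1,2)$ entry of $\operatorname{Im}Z$ is $\pm1$, not $\pm2$, so $2|w_3|=2$; taking it as $\pm2$ would give a lower bound of $7$, contradicting your own upper bound of $5$.
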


\begin{proof}
Define the QFIM-whitened horizontal tangents
\begin{equation}
  |t_a\rangle
  =
  2\sum_i
  \left(\QFI^{-1/2}\right)_{ia}
  |h_i\rangle,
  \qquad
  |h_i\rangle
  =
  \left(
    \mathbb{I}-|\psi\rangle\langle\psi|
  \right)
  |\partial_i\psi\rangle.
  \label{eq:whitened_pure_tangents}
\end{equation}
They satisfy
\begin{equation}
  \operatorname{Re}\langle t_a|t_b\rangle
  =
  \delta_{ab}.
  \label{eq:whitened_tangent_orthonormality}
\end{equation}
Equip the horizontal space with the real inner product
\begin{equation}
  (\bm{u},\bm{v})_{\mathbb{R}}
  =
  \operatorname{Re}\langle u|v\rangle.
  \label{eq:horizontal_real_inner_product}
\end{equation}
For a qutrit, the horizontal space is isomorphic to $\mathbb{C}^{2}$ and
therefore has real dimension four.  The three whitened tangents span a real
hyperplane.  The group $U(2)$ acts transitively on the real unit sphere in
$\mathbb{C}^{2}$, so its unit normal can be mapped to
\begin{equation}
  |n\rangle=(0,1).
\end{equation}
An orthogonal transformation of the whitened parameter coordinates, which
preserves the metric weight, then brings the tangent basis to
\begin{equation}
  |t_1\rangle=(1,0),
  \qquad
  |t_2\rangle=(i,0),
  \qquad
  |t_3\rangle=(0,i).
  \label{eq:canonical_qutrit_hyperplane}
\end{equation}

In the pure-state influence-vector reduction of the Holevo problem
\cite{Matsumoto2002}, the horizontal vectors $|x_a\rangle$ obey
\begin{equation}
  \operatorname{Re}\langle x_a|t_b\rangle
  =
  \delta_{ab}.
  \label{eq:pure_influence_local_unbiasedness}
\end{equation}
Because the $|t_a\rangle$ span the real hyperplane orthogonal to $|n\rangle$,
the most general solution is
\begin{equation}
  |x_a\rangle
  =
  |t_a\rangle+a_a|n\rangle,
  \qquad
  a_a\in\mathbb{R}.
  \label{eq:hyperplane_influence_vectors}
\end{equation}
For $W=\QFI$, the parameter coordinates are whitened and the real Gram
contribution is
\begin{equation}
  \operatorname{tr}\operatorname{Re}Z
  =
  3+a_1^2+a_2^2+a_3^2.
  \label{eq:hyperplane_real_gram_cost}
\end{equation}
The imaginary Gram matrix is
\begin{equation}
  \operatorname{Im}Z
  =
  \begin{pmatrix}
    0 & 1 & a_1\\
    -1 & 0 & a_2\\
    -a_1 & -a_2 & 0
  \end{pmatrix}.
  \label{eq:hyperplane_imaginary_gram}
\end{equation}
Its singular values are
\begin{equation}
  \left(
    \sqrt{1+a_1^2+a_2^2},\,
    \sqrt{1+a_1^2+a_2^2},\,
    0
  \right),
\end{equation}
so the Holevo objective reduces to
\begin{equation}
  h(\bm{a})
  =
  3+a_1^2+a_2^2+a_3^2
  +
  2\sqrt{1+a_1^2+a_2^2}.
  \label{eq:hyperplane_holevo_objective}
\end{equation}
Every term added to the constant value five is nonnegative, and the minimum
is attained at $\bm{a}=0$.  Hence
\begin{equation}
  \min_{\bm{a}}h(\bm{a})=5.
\end{equation}
Since the metric-weighted SLD cost equals the identifiable dimension,
$C_{\mathrm{S}}=3$, which proves
Eq.~\eqref{eq:three_parameter_qutrit_exact}.
\end{proof}

The same canonical geometry implies
\begin{equation}
  \operatorname{spec}_{\mathrm{sing}}
  \left(
    \QFI^{-1/2}\mathcal{D}\QFI^{-1/2}
  \right)
  =
  (1,1,0),
  \label{eq:whitened_curvature}
\end{equation}
where $\operatorname{spec}_{\mathrm{sing}}$ denotes the ordered singular-value
spectrum.

\begin{corollary}[Locally complete pure-qutrit cost]
\label{cor:complete-pure-qutrit-cost}
For a regular four-parameter pure-qutrit model whose tangent spans the full
real tangent space of $\mathbb{CP}^{2}$, the metric-weighted costs are
\begin{equation}
  C_{\mathrm{S}}(\QFI)=4,
  \qquad
  C_{\mathrm{H}}(\QFI)=8,
  \qquad
  \frac{C_{\mathrm{H}}}{C_{\mathrm{S}}}=2.
  \label{eq:complete_qutrit_holevo}
\end{equation}
\end{corollary}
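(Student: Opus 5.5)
The plan is to reuse the whitening construction from the proof of Theorem~\ref{thm:three-dimensional-pure-qutrit-cost}. The difference is that the four whitened tangents now span the entire real four-dimensional horizontal space, so no free normal component remains. This leaves no minimization to perform, and the cost can be evaluated directly.

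\emph{Step 1: whiten and fix the cost $C_{\mathrm{S}}$.} Define $|t_a\rangle=2\sum_i(\QFI^{-1/2})_{ia}|h_i\rangle$ as in Eq.~\eqref{eq:whitened_pure_tangents}. These vectors satisfy $\operatorname{Re}\langle t_a|t_b\rangle=\delta_{ab}$. By hypothesis they span the horizontal space $\cong\mathbb{C}^2\cong\mathbb{R}^4$, so they form a real orthonormal basis for the inner product~\eqref{eq:horizontal_real_inner_product}. With $W=\QFI$ and $p=4$, Eq.~\eqref{eq:metric_weight_sld_cost} gives $C_{\mathrm{S}}(\QFI)=4$ immediately.

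\emph{Step 2: put the basis in canonical form.} Any two real orthonormal bases of $\mathbb{R}^4$ are related by an element of $O(4)$. Rather than act on Hilbert space, I will absorb this transformation as an orthogonal change of the whitened parameter coordinates. Such a change leaves the metric-weighted Holevo cost invariant, since $W$ transforms with $\QFI$ and stays the identity in whitened coordinates. We may therefore take
\begin{equation*}
  |t_1\rangle=(1,0),\qquad |t_2\rangle=(i,0),\qquad |t_3\rangle=(0,1),\qquad |t_4\rangle=(0,i).
\end{equation*}
The whitened curvature $\operatorname{Im}\langle t_a|t_b\rangle$ is then block diagonal, with two blocks $\bigl(\begin{smallmatrix}0&1\\-1&0\end{smallmatrix}\bigr)$. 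Its singular values are $(1,1,1,1)$ and its trace norm is $4$. This is the four-dimensional analogue of Eq.~\eqref{eq:whitened_curvature}.

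\emph{Step 3: evaluate the Holevo cost.} In the pure-state influence-vector reduction, the conditions $\operatorname{Re}\langle x_a|t_b\rangle=\delta_{ab}$ have the unique horizontal solution $|x_a\rangle=|t_a\rangle$, because the $t_b$ form a real orthonormal basis of the whole space. This gives $\operatorname{tr}\operatorname{Re}Z=4$ and $\|\operatorname{Im}Z\|_1=4$, hence the value $8$.

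The main obstacle is justifying that no larger feasible set lowers this value. The Holevo minimization in general allows influence operators beyond the horizontal span, for example through extensions. To close this gap I will argue instead through $D$-invariance. The tangent span is all of the horizontal space, so it is closed under multiplication by $i$, which is the pure-state form of $D$-invariance. Hence $C_{\mathrm{H}}=C_{\mathrm{D}}$. By Eq.~\eqref{eq:d_invariant_upper} with whitened coordinates,
\begin{equation*}
  C_{\mathrm{D}}=4+\|\QFI^{-1/2}\mathcal{D}\QFI^{-1/2}\|_1=4+4=8 .
\end{equation*}
This saturates the upper end of Eq.~\eqref{eq:holevo_factor_two_hierarchy}, $C_{\mathrm{D}}\le 2C_{\mathrm{S}}=8$, which serves as a consistency check. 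It yields $C_{\mathrm{H}}/C_{\mathrm{S}}=2$.
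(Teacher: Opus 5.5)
Your Steps 1--3 are correct and coincide with the paper's proof. The paper stops once local unbiasedness fixes the horizontal influence vectors uniquely, which gives $\operatorname{tr}\operatorname{Re}Z=4$ and $\|\operatorname{Im}Z\|_1=4$.

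The ``main obstacle'' you raise is not present for the bound as defined in Eq.~\eqref{eq:holevo_bound}, because the influence-vector reduction there is exact:
\begin{itemize}
  \item the $X_i$ act on $\mathbb{C}^3$, and both $Z_{ij}=\langle X_i\psi|X_j\psi\rangle$ and the constraints depend on $X_i$ only through $|x_i\rangle=X_i|\psi\rangle$;
  \item the zero-mean condition is exactly $\langle\psi|x_i\rangle=0$;
  \item every horizontal vector is realized by the Hermitian operator $X=|x\rangle\langle\psi|+|\psi\rangle\langle x|$.
\end{itemize}
Your uniqueness argument in Step~3 therefore already settles $C_{\mathrm H}=8$. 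An ancilla extension could not lower the cost either. It only adds a positive-semidefinite Gram term that the constraints do not see, and the objective $\min_{V\succeq Z}\operatorname{tr}(WV)$ is monotone in $Z$.

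Your $D$-invariance closure is nonetheless valid. The tangent span is the whole complex horizontal space, so it is invariant under multiplication by $i$. Hence $C_{\mathrm H}=C_{\mathrm D}=4+\|\QFI^{-1/2}\mathcal D\QFI^{-1/2}\|_1=8$. Compared with the paper, this trades a self-contained uniqueness argument for reliance on the external fact that $C_{\mathrm H}=C_{\mathrm D}$ for $D$-invariant models. In return, it shows directly why the sharp upper end of Eq.~\eqref{eq:holevo_factor_two_hierarchy} is reached. You can therefore present it as a cross-check rather than as a repair.
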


\begin{proof}
After QFIM whitening, the four tangents form a real orthonormal basis of the
complete horizontal space $\mathbb{C}^{2}$.  The local-unbiasedness conditions
therefore fix the horizontal influence vectors uniquely.  Their real Gram
matrix is $\mathbb{I}_4$, while the imaginary Gram matrix is the canonical
complex structure on $\mathbb{R}^{4}$.  Its four singular values are all
equal to one.  The two terms in the Holevo objective are consequently four
and four, giving $C_{\mathrm{H}}=8$.  The metric-weighted SLD cost is the
tangent dimension, $C_{\mathrm{S}}=4$.
\end{proof}

The locally complete pure-qutrit model therefore realizes the sharp
factor-two separation in Eq.~\eqref{eq:holevo_factor_two_hierarchy}.

\subsection{Finite-dimensional semidefinite formulation}
\label{subsec:holevo_sdp}

We independently evaluate incompatible models with the finite-dimensional
semidefinite formulation of Ref.~\cite{AlbarelliFrielDatta2019}.  Let $R$ be
a square-root factor of $\rho$ restricted to its support, normalized by
$RR^\dagger=\rho$.  If the Hilbert-space dimension is $d$ and the state rank
is $r$, then $R$ may be represented as a $d\times r$ matrix.  Define
\begin{equation}
  |y_i\rangle
  =
  \operatorname{vec}(X_iR),
  \qquad
  Y
  =
  (|y_1\rangle,\ldots,|y_p\rangle).
  \label{eq:holevo_sdp_vectors}
\end{equation}
Then
\begin{equation}
  Y^\dagger Y=Z(\bm{X}).
  \label{eq:holevo_sdp_gram}
\end{equation}
Introducing a real symmetric matrix $V$, the Holevo optimization can be
written
\begin{align}
  C_{\mathrm{H}}(W)
  &=
  \min_{\bm{X},V}\operatorname{tr}(WV),
  \label{eq:holevo_sdp_objective}
  \\
  \text{subject to}\qquad
  &
  \begin{pmatrix}
    V & Y^\dagger\\
    Y & \mathbb{I}_{dr}
  \end{pmatrix}
  \succeq0,
  \label{eq:holevo_sdp_lmi}
\end{align}
together with Eqs.~\eqref{eq:holevo_zero_mean} and
\eqref{eq:holevo_local_unbiasedness}.  Minimizing over the real matrix $V$
reproduces the trace-norm term in Eq.~\eqref{eq:holevo_bound}.  Indeed, the
Schur complement gives $V\succeq Y^\dagger Y=Z$, and the minimum real
covariance majorant satisfies
$\min_{V\in\mathbb S^p:\,V\succeq Z}\operatorname{tr}(WV)
=\operatorname{tr}(W\operatorname{Re}Z)
+\|\sqrt W\operatorname{Im}Z\sqrt W\|_1$.

Our implementation expands each $X_i$ in an orthonormal Hermitian basis,
whitens the parameter coordinates by the QFIM, eliminates the affine
local-unbiasedness constraints, and removes directions that annihilate the
support of $\sqrt{\rho}$.  A log-determinant barrier then solves the
rank-reduced LMI.  In exact arithmetic, feasible primal and dual points obey
the weak-duality bracket
\begin{equation}
  C_{\mathrm{dual}}
  \leq C_{\mathrm{H}}
  \leq C_{\mathrm{primal}}.
  \label{eq:holevo_primal_dual_interval}
\end{equation}
In floating-point calculations, we retain a numerical result only when the
primal and dual points satisfy the declared affine, cone, and
local-unbiasedness tolerances specified in
Appendix~\ref{app:holevo-certification}.  The resulting interval is therefore
reported as a tolerance-qualified numerical primal--dual bracket, not as an
exact-arithmetic or interval-arithmetic certificate.  The recorded diagnostics
include the primal--dual gap, local-unbiasedness residual, KKT residual, dual
affine residual, and minimum eigenvalues of the primal and dual LMIs.

For the three-parameter scans below, five requested barrier accuracies were
tested independently.  Only attempts satisfying the declared numerical
primal and dual feasibility tolerances were eligible, and the
tolerance-qualified numerical bracket with the smallest gap was retained.
This procedure avoids interpreting a numerically accurate primal value as a
numerically bracketed optimum when the corresponding dual iterate fails its
feasibility test.

\subsection{Exhaustive attainability atlas}
\label{subsec:attainability_atlas}

We evaluated all $2^6-1=63$ nonempty coordinate subsets at each of two
pure-state benchmark points:

\begin{enumerate}
  \item vacuum propagation at
        $L/E=520~\mathrm{km/GeV}$;
  \item constant-density propagation at
        $L=1300~\mathrm{km}$,
        $E=2.5~\mathrm{GeV}$,
        $\rho_{\mathrm m}=2.848~\mathrm{g\,cm^{-3}}$, and
        $Y_e=0.5$.
\end{enumerate}

The source is $\nu_\mu$, and the oscillation parameters are the benchmark
specified in Eq.~\eqref{eq:identifiability-oscillation-benchmark}.  Coordinates
outside a listed subset are held fixed; the pairwise and low-dimensional
results are therefore conditional submodel bounds, not nuisance-profiled
bounds.

Numerical rank is diagnosed in the dimensionless coordinates defined by
\begin{equation}
  \bm{\lambda}=S\bm{x},
  \qquad
  S
  =
  \diag
  \left(
    1,1,1,1,10^{-4},10^{-3}
  \right),
  \label{eq:attainability_coordinate_scales}
\end{equation}
with relative eigenvalue tolerance $10^{-10}$.  The final two scale entries
carry units of $\mathrm{eV}^{2}$.  Scaling does not alter exact rank.  It fixes
the numerical meaning of rank diagnostics and selects a representative
estimable complement when a requested coordinate model is singular.

Table~\ref{tab:attainability_atlas} gives the exhaustive numerical
classification.  Every requested subset with at most four coordinates has
numerical rank equal to its dimension at both benchmark points.  Every five-
or six-coordinate request has numerical rank four.  Independently of the
numerical calculation, Theorem~\ref{thm:pure-state-rank-bound} guarantees that
such a pure-qutrit request has rank at most four and therefore admits no joint
locally unbiased bound for all requested coordinates.

\begin{table}[t]
  \centering
  \caption{
    Exhaustive attainability classification at each pure-state benchmark.
    The number of subsets is per benchmark point, and the rank column reports
    numerical rank at relative tolerance $10^{-10}$ after the scaling in
    Eq.~\eqref{eq:attainability_coordinate_scales}.  Ratios use $W=\QFI$.
    A dash indicates that the requested coordinate model is singular.  The
    entries in parentheses for five and six coordinates apply only after
    restriction to a scale-declared four-dimensional estimable tangent.
  }
  \label{tab:attainability_atlas}
  \begin{tabular}{ccclcc}
    \toprule
    Requested
    & Number
    & Numerical
    & Evaluation
    & Vacuum
    & Matter
    \\
    coordinates
    & of subsets
    & rank
    & route
    & $C_{\mathrm H}/C_{\mathrm S}$
    & $C_{\mathrm H}/C_{\mathrm S}$
    \\
    \midrule
    1 & 6  & 1
      & compatible SLD
      & $1$
      & $1$
      \\
    2 & 15 & 2
      & exact pure-state pair
      & $1$--$1.812689$
      & $1.000001$--$1.762325$
      \\
    3 & 20 & 3
      & Theorem~\ref{thm:three-dimensional-pure-qutrit-cost}
        plus SDP
      & $5/3$
      & $5/3$
      \\
    4 & 15 & 4
      & Corollary~\ref{cor:complete-pure-qutrit-cost}
      & $2$
      & $2$
      \\
    5 & 6  & 4
      & singular; reduced tangent
      & -- $(2)$
      & -- $(2)$
      \\
    6 & 1  & 4
      & singular; reduced tangent
      & -- $(2)$
      & -- $(2)$
      \\
    \bottomrule
  \end{tabular}
\end{table}

For all 40 three-coordinate submodels across the two benchmarks, the
numerically computed singular values of
$\QFI^{-1/2}\mathcal{D}\QFI^{-1/2}$ agree with $(1,1,0)$ to better than
$4.4\times10^{-14}$.  Every independently computed tolerance-qualified
numerical primal--dual interval contains the exact value
$C_{\mathrm H}=5$ from
Theorem~\ref{thm:three-dimensional-pure-qutrit-cost}.  The largest retained
interval width is $1.50\times10^{-6}$, and most are below
$4\times10^{-7}$.

Table~\ref{tab:pairwise_holevo_atlas} reports the complete two-coordinate
atlas.  The most incompatible pair at both points is
$(\theta_{23},\Delta m^2_{31})$, followed by
$(\theta_{13},\dcp)$.  At relative curvature tolerance $10^{-10}$, the vacuum
calculation classifies two pairs as weakly commuting:
$(\theta_{12},\Delta m^2_{31})$ and
$(\Delta m^2_{21},\Delta m^2_{31})$.  No constant-matter pair is classified
as weakly commuting at that tolerance, although several ratios are
numerically close to unity.  These tolerance-based classifications are not
claims that the corresponding curvature components have been proven
symbolically to vanish or remain nonzero.

\begin{table*}[t]
  \centering
  \small
  \caption{
    Exact pairwise pure-state Holevo formula evaluated numerically for
    $W=\QFI_{\mathrm{pair}}$.  Coordinates outside each pair are fixed.  The
    coefficient $\beta$ is defined in Eq.~\eqref{eq:beta_definition}; displayed
    zeros are rounded numerical values and weak-commutativity classifications
    use relative tolerance $10^{-10}$.
  }
  \label{tab:pairwise_holevo_atlas}
  \begin{tabular}{lcccc}
    \toprule
    &
    \multicolumn{2}{c}{Vacuum}
    &
    \multicolumn{2}{c}{Constant matter}
    \\
    \cmidrule(lr){2-3}
    \cmidrule(lr){4-5}
    Parameter pair
    & $\beta$
    & $C_{\mathrm H}/C_{\mathrm S}$
    & $\beta$
    & $C_{\mathrm H}/C_{\mathrm S}$
    \\
    \midrule
    $(\theta_{12},\theta_{13})$
      & 0.562836 & 1.094949 & 0.573574 & 1.099412 \\
    $(\theta_{12},\theta_{23})$
      & 0.725698 & 1.184825 & 0.711449 & 1.174579 \\
    $(\theta_{12},\dcp)$
      & 0.367965 & 1.036355 & 0.334144 & 1.029589 \\
    $(\theta_{12},\Delta m^2_{21})$
      & 0.170760 & 1.007398 & 0.246009 & 1.015606 \\
    $(\theta_{12},\Delta m^2_{31})$
      & 0.000000 & 1.000000 & 0.002138 & 1.000001 \\
    $(\theta_{13},\theta_{23})$
      & 0.010323 & 1.000027 & 0.025188 & 1.000159 \\
    $(\theta_{13},\dcp)$
      & 0.970165 & 1.609731 & 0.957832 & 1.553604 \\
    $(\theta_{13},\Delta m^2_{21})$
      & 0.673299 & 1.149841 & 0.675910 & 1.151421 \\
    $(\theta_{13},\Delta m^2_{31})$
      & 0.223617 & 1.012824 & 0.261325 & 1.017682 \\
    $(\theta_{23},\dcp)$
      & 0.121832 & 1.003739 & 0.148237 & 1.005555 \\
    $(\theta_{23},\Delta m^2_{21})$
      & 0.751004 & 1.204603 & 0.764436 & 1.216028 \\
    $(\theta_{23},\Delta m^2_{31})$
      & 0.994647 & 1.812689 & 0.990864 & 1.762325 \\
    $(\dcp,\Delta m^2_{21})$
      & 0.429201 & 1.050857 & 0.404886 & 1.044732 \\
    $(\dcp,\Delta m^2_{31})$
      & 0.012321 & 1.000038 & 0.061022 & 1.000933 \\
    $(\Delta m^2_{21},\Delta m^2_{31})$
      & 0.000000 & 1.000000 & 0.006466 & 1.000010 \\
    \bottomrule
  \end{tabular}
\end{table*}

The scalar ratio is weight dependent.  For example, the vacuum
$(\theta_{13},\dcp)$ pair gives
\begin{equation}
  C_{\mathrm S}(\QFI)=2,
  \qquad
  C_{\mathrm H}(\QFI)=3.2194618,
  \qquad
  \frac{C_{\mathrm H}}{C_{\mathrm S}}=1.6097309.
  \label{eq:theta13_delta_metric_cost}
\end{equation}
For the identity weight in the declared native coordinate units, the same
physical submodel instead gives
\begin{equation}
  C_{\mathrm S}(\mathbb{I}_2)=6.5319925,
  \qquad
  C_{\mathrm H}(\mathbb{I}_2)=7.3786983,
  \qquad
  \frac{C_{\mathrm H}}{C_{\mathrm S}}=1.1296244.
  \label{eq:theta13_delta_identity_cost}
\end{equation}
The native identity weight is included only to demonstrate weight dependence;
because its numerical meaning changes when heterogeneous coordinate units are
changed, it is not an intrinsic scientific loss.  There is therefore no
unique scalar ``Holevo penalty'' without a declared cost matrix and unit
convention.

\subsection{Six-coordinate mixed spectral numerical primal--dual bracket}
\label{subsec:mixed_spectral_holevo}

A single pure qutrit cannot identify six coordinates, but the
energy-unresolved mixed state constructed in
Sec.~\ref{subsec:energy-label-results} has numerical state rank three and a
six-coordinate QFIM of numerical rank six at relative tolerance $10^{-10}$.
We consider
\begin{equation}
  \rho_{\mathrm{mix}}
  =
  \sum_{k=1}^{5}w_k
  |\psi(E_k)\rangle\langle\psi(E_k)|,
  \label{eq:mixed_spectral_state_attainability}
\end{equation}
with
\begin{align}
  E_k/\mathrm{GeV}
  &=
  (1.5,\,2.0,\,2.5,\,3.0,\,3.5),
  \nonumber\\
  w_k
  &=
  (0.1,\,0.2,\,0.4,\,0.2,\,0.1).
  \label{eq:mixed_spectral_configuration}
\end{align}
The baseline, matter configuration, source flavor, and oscillation benchmark
are those in Eqs.~\eqref{eq:identifiability-matter-configuration} and
\eqref{eq:identifiability-oscillation-benchmark}.  For
$W=\QFI_{\mathrm{mix}}$, the SLD cost is six, while the retained
tolerance-qualified numerical primal--dual Holevo bracket is
\begin{equation}
  11.205456676
  \leq
  C_{\mathrm H}(\QFI_{\mathrm{mix}})
  \leq
  11.205462701.
  \label{eq:mixed_spectral_holevo_interval}
\end{equation}
Equivalently,
\begin{equation}
  1.867576113
  \leq
  \frac{C_{\mathrm H}}{C_{\mathrm S}}
  \leq
  1.867577117.
  \label{eq:mixed_spectral_holevo_ratio}
\end{equation}
The complete numerical diagnostics are summarized in
Table~\ref{tab:mixed_spectral_holevo_certificate}.

\begin{table}[t]
  \centering
  \caption{
    Tolerance-qualified numerical primal--dual Holevo bracket for the
    six-coordinate energy-unresolved mixed spectral state with
    $W=\QFI_{\mathrm{mix}}$.  State and QFIM ranks are numerical ranks;
    the QFIM rank uses relative tolerance $10^{-10}$ after the declared
    coordinate scaling.  Primal and dual feasibility are assessed using the
    floating-point tolerances stated in the text.
  }
  \label{tab:mixed_spectral_holevo_certificate}
  \begin{tabular}{lr}
    \toprule
    Quantity & Value \\
    \midrule
    Numerical state rank & $3$ \\
    Numerical QFIM rank & $6$ \\
    $C_{\mathrm S}$ & $6.000000000$ \\
    Dual objective (lower endpoint at tolerance) & $11.205456676$ \\
    Primal objective (upper endpoint at tolerance) & $11.205462701$ \\
    Primal--dual gap & $6.02\times10^{-6}$ \\
    $C_{\mathrm D}$ & $11.312894780$ \\
    Local-unbiasedness residual & $2.62\times10^{-14}$ \\
    KKT residual & $1.08\times10^{-9}$ \\
    Minimum primal LMI eigenvalue & $6.67\times10^{-8}$ \\
    Minimum dual LMI eigenvalue & $1.28\times10^{-7}$ \\
    \bottomrule
  \end{tabular}
\end{table}

The endpoints satisfy the declared floating-point primal and dual feasibility
tests.  This is a numerical primal--dual bracket at the stated tolerances, not
an interval-arithmetic or exact-arithmetic certificate.  Throughout the
retained bracket, the Holevo cost is approximately $86.8\%$ larger than the
inverse SLD cost in this controlled mixed-state example.  Equivalently, the
SLD cost lies approximately $46.5\%$ below the numerically bracketed attainable
asymptotic cost.  The $D$-invariant upper expression is close to the numerical
Holevo bracket but is not exact for this model.

\subsection{Operational interpretation}
\label{subsec:attainability_interpretation}

Several qualifications are essential.

First, the results in Tables~\ref{tab:attainability_atlas} and
\ref{tab:pairwise_holevo_atlas} are local state-level bounds.  Pairwise
calculations fix all excluded oscillation coordinates; they do not profile
them as unknown nuisances.

Second, a singular five- or six-coordinate pure-state request has no joint
locally unbiased covariance and hence no Holevo cost for all requested
coordinates.  The value $C_{\mathrm H}/C_{\mathrm S}=2$ reported after
reduction applies only to four estimable local combinations.  Their individual
orientations depend on the declared coordinate scaling used to choose a
representative complement to the exact null space.

Third, $W=\QFI$ provides a coordinate-invariant diagnostic of geometric
incompatibility, but it is not a phenomenological loss function.  Forecasts
must specify which physical parameter combinations are scientifically
important and assign their weights in meaningful units.

Fourth, attainment of a pure-state Holevo value refers to an optimized quantum
measurement on the oscillated state.  It does not imply that ordinary flavor
detection attains that value.  The flavor POVM, energy reconstruction,
efficiencies, backgrounds, and nuisance constraints define a more restrictive
measurement channel.

Finally, the mixed spectral example is a controlled per-probe ensemble, not a
DUNE sensitivity forecast.  Its tolerance-qualified numerical primal--dual
Holevo bracket constrains the attainable asymptotic cost before imposing a
detector measurement and under the stated local regularity assumptions.  It
does so at the declared floating-point feasibility tolerances and is not an
exact-arithmetic or interval-arithmetic proof.  The next section makes that
measurement restriction explicit and propagates the information through the
reconstructed-event likelihood.

\section{From quantum measurements to detector likelihoods}
\label{sec:detector_likelihood}

The state QFIM and Holevo bounds of the preceding sections describe limits
before a detector measurement has been specified.  A neutrino experiment,
however, does not implement an arbitrary optimized POVM on the propagated
state.  It records interaction products, assigns reconstructed flavors and
energies, applies event selections, and fits binned counts in the presence of
backgrounds and systematic uncertainties.  These operations define a
measurement channel whose information can be much smaller than the state
QFIM, even when every oscillation coordinate remains numerically identifiable
at a declared tolerance.

Under the conditions stated in Sec.~\ref{sec:model}, we derive the hierarchy
\begin{equation}
  \QFI_{\mathrm{state}}
  \succeq
  \CFI_{\mathrm{flavor}}
  \succeq
  \CFI_{\mathrm{smear}}
  \succeq
  \CFI_{\mathrm{det}}
  \succeq
  \CFI_{\mathrm{prof}}.
  \label{eq:operational_information_hierarchy}
\end{equation}
Specifically, the source weights and classical processing maps are independent
of the physics estimands at fixed nuisance values, the physical POVM is fixed,
and every comparison uses the same exposure and per-exposure sample space.
Nuisance dependence of a detector map is included in the joint likelihood
before profiling.  The inequalities in
Eq.~\eqref{eq:operational_information_hierarchy} are Loewner-order statements,
not merely comparisons of traces or diagonal entries.

\subsection{Classical information of a fixed quantum measurement}
\label{subsec:fixed_povm_cfi}

Let $\{M_y\}$ be a parameter-independent physical POVM,
\begin{equation}
  M_y\succeq0,
  \qquad
  \sum_y M_y=\mathbb{I}.
  \label{eq:povm_definition}
\end{equation}
For the state $\rho(\bm{\lambda})$, the outcome probabilities and derivatives
are
\begin{align}
  p_y(\bm{\lambda})
  &=
  \Tr\!\left[\rho(\bm{\lambda})M_y\right],
  \label{eq:povm_probability}
  \\
  \partial_i p_y
  &=
  \Tr\!\left[(\partial_i\rho)M_y\right].
  \label{eq:povm_probability_derivative}
\end{align}
At a regular interior point with $p_y>0$, the measurement CFI is
\begin{equation}
  \CFI^{(M)}_{ij}
  =
  \sum_y
  \frac{\partial_i p_y\,\partial_j p_y}{p_y}.
  \label{eq:fixed_measurement_cfi}
\end{equation}
For every fixed POVM,
\begin{equation}
  \CFI^{(M)}\preceq\QFI,
  \label{eq:braunstein_caves_matrix_inequality}
\end{equation}
which is the matrix form of the Braunstein--Caves information inequality
\cite{BraunsteinCaves1994,Paris2009}.  Equality requires the measurement to
retain all relevant quantum score directions.  It is not guaranteed by the
mere existence of a separately optimal POVM for each coordinate or tangent
direction.

The physical experimental design is held fixed when taking
Eq.~\eqref{eq:povm_probability_derivative}.  If one instead considers a
family of physical measurements $M_y(\bm{\lambda})$ indexed directly by the
unknown parameter, then
\begin{equation}
  \partial_i p_y
  =
  \Tr\!\left[(\partial_i\rho)M_y\right]
  +
  \Tr\!\left[\rho\,\partial_iM_y\right].
  \label{eq:parameter_dependent_measurement}
\end{equation}
The second term belongs to a different controlled experiment and cannot be
included as if the true parameter were available to the experimenter without
cost.  A realizable adaptive scheme may approximate a locally optimal
measurement by using preliminary observations to localize the parameter; its
preliminary and final measurement stages must then be included in the complete
sampling model.  By contrast, a parameter-dependent matrix representation of
a fixed physical measurement is a passive moving-basis issue and must be
treated with the connection derived in Sec.~\ref{sec:geometry}, particularly
Sec.~\ref{subsec:moving-basis}.

For ideal flavor projection,
\begin{equation}
  M_\beta
  =
  |\nu_\beta\rangle\langle\nu_\beta|,
  \qquad
  \beta\in\{e,\mu,\tau\},
  \label{eq:flavor_povm}
\end{equation}
so that, at true-energy bin $k$,
\begin{align}
  p_{\beta k}
  &=
  |\langle\nu_\beta|\psi_k\rangle|^2,
  \label{eq:flavor_probability}
  \\
  \partial_i p_{\beta k}
  &=
  2\operatorname{Re}
  \left[
    \langle\nu_\beta|\partial_i\psi_k\rangle
    \langle\psi_k|\nu_\beta\rangle
  \right],
  \label{eq:flavor_probability_derivative}
  \\
  \CFI^{\mathrm{flavor},k}_{ij}
  &=
  \sum_\beta
  \frac{
    \partial_i p_{\beta k}\,
    \partial_j p_{\beta k}
  }{p_{\beta k}}.
  \label{eq:single_energy_flavor_cfi}
\end{align}

If the true-energy label is retained and $N_k$ is a
parameter-independent expected probe count, independent bins add:
\begin{align}
  \QFI_{\mathrm{state}}
  &=
  \sum_k N_k\QFI_k,
  \label{eq:count_weighted_state_qfi}
  \\
  \CFI_{\mathrm{flavor}}
  &=
  \sum_k N_k\CFI^{\mathrm{flavor},k}.
  \label{eq:count_weighted_flavor_cfi}
\end{align}
Equation~\eqref{eq:braunstein_caves_matrix_inequality} then implies
\begin{equation}
  \QFI_{\mathrm{state}}
  -
  \CFI_{\mathrm{flavor}}
  =
  \sum_k N_k
  \left(
    \QFI_k-\CFI^{\mathrm{flavor},k}
  \right)
  \succeq0.
  \label{eq:aggregate_qfi_cfi_inequality}
\end{equation}
If the $N_k$ depend on the estimands, their rate information must be added as
described in Sec.~\ref{sec:model}; it is not part of the conditional state
QFI in Eq.~\eqref{eq:count_weighted_state_qfi}.

\subsubsection{Zero-probability boundaries}
\label{subsubsec:zero_probability_boundaries}

Equation~\eqref{eq:fixed_measurement_cfi} is not defined by direct substitution
when an outcome probability vanishes.  Such a point must not be regularized by
silently replacing $p_y=0$ with an arbitrary pseudocount.  First derivatives
at the boundary generally do not determine the limiting Fisher information.
For example, along a local displacement $h$,
\begin{equation}
  p_y(h)=a h^2+\mathcal{O}(h^3),
  \qquad
  a>0,
  \label{eq:quadratic_boundary_probability}
\end{equation}
has $p_y(0)=\partial_h p_y(0)=0$, but
\begin{equation}
  \lim_{h\rightarrow0}
  \frac{[\partial_h p_y(h)]^2}{p_y(h)}
  =
  4a.
  \label{eq:quadratic_boundary_fisher_limit}
\end{equation}
In a multiparameter model this limit can depend on the direction of approach,
in which case the model is nonregular at that point.

We treat a zero-probability outcome as nonregular unless its structural absence
throughout a neighborhood is established; its derivatives must then vanish
throughout that neighborhood.  Otherwise, the model requires a declared
limiting path or an appropriate higher-order or nonlocal analysis.  The same
criterion applies below to zero-mean Poisson bins.

\subsection{Poissonization and the measurement-to-count identity}
\label{subsec:poissonization_identity}

Suppose the counts $n_y$ are independent Poisson variables with means
\begin{equation}
  \mu_y=Np_y,
  \label{eq:poissonized_categorical_mean}
\end{equation}
where $N$ is parameter independent.  At a regular point with $\mu_y>0$, their
Fisher information is
\begin{equation}
  \CFI^{\mathrm{Pois}}_{ij}
  =
  \sum_y
  \frac{\partial_i\mu_y\,\partial_j\mu_y}{\mu_y}
  =
  N\sum_y
  \frac{\partial_i p_y\,\partial_j p_y}{p_y}
  =
  N\CFI^{(M)}_{ij}.
  \label{eq:poisson_categorical_identity}
\end{equation}
Thus Poisson event counts reproduce the count-weighted categorical CFI exactly
when the overall expected rate is independent of the parameters of interest.

If $N=N(\bm{\lambda})$, normalization and shape information separate:
\begin{equation}
  \CFI^{\mathrm{Pois}}_{ij}
  =
  N\CFI^{(M)}_{ij}
  +
  \frac{\partial_iN\,\partial_jN}{N},
  \label{eq:rate_shape_fisher_decomposition}
\end{equation}
where the cross term vanishes because
$\sum_y\partial_i p_y=0$.  Fluxes, cross sections, exposures, and efficiencies
are independent of the oscillation parameters in the controlled benchmark
below, but their uncertainties enter as nuisance coordinates in the full
event model.

\subsection{Forward-folded reconstructed event means}
\label{subsec:forward_folding}

Let $c$ label a production and interaction channel, $k$ a true-energy bin,
$a$ a selected event class, and $b$ a reconstructed bin.  A general
forward-folded mean is
\begin{equation}
  \mu_{ab}(\bm{\lambda},\bm{\nu})
  =
  \sum_{c,k}
  R_{ab|ck}(\bm{\nu})\,
  \mathcal{X}_c\,
  \Phi_{ck}(\bm{\nu})\,
  \sigma_{ck}(\bm{\nu})\,
  \epsilon_{a|ck}(\bm{\nu})\,
  P_c(E_k;\bm{\lambda},\bm{\nu})
  +
  B_{ab}(\bm{\nu}),
  \label{eq:general_forward_folded_mean}
\end{equation}
where

\begin{itemize}
  \item $\bm{\lambda}$ denotes the oscillation parameters;
  \item $\bm{\nu}$ denotes nuisance parameters;
  \item $\mathcal{X}_c$ is the channel exposure;
  \item $\Phi_{ck}$ is the incident flux;
  \item $\sigma_{ck}$ is the interaction cross section;
  \item $\epsilon_{a|ck}$ is the selection and classification efficiency;
  \item $R_{ab|ck}$ is the true-to-reconstructed response conditional on
        selection into class $a$;
  \item $P_c$ is the oscillation probability; and
  \item $B_{ab}$ is the expected background.
\end{itemize}

When inefficiency is represented separately by $\epsilon_{a|ck}$, the
conditional response satisfies
\begin{equation}
  R_{ab|ck}\geq0,
  \qquad
  \sum_b R_{ab|ck}=1
  \label{eq:general_response_normalization}
\end{equation}
for every active $(a,c,k)$.  An equivalent convention may absorb inefficiency
into a column-substochastic response, but the two conventions must not be
applied simultaneously.

The response is a parameter-independent stochastic channel for the
physics-parameter comparison only when its entries do not vary with
$\bm{\lambda}$ at fixed nuisance coordinates.  Energy-scale, resolution, and
classification uncertainties must instead be included through $\bm{\nu}$ and
differentiated explicitly.

For the controlled benchmark, channel labels reduce to final flavor $f$, and
the event model is
\begin{equation}
  \mu_{fb}
  =
  \sum_k
  R_{bk}\,
  N_k\,
  \epsilon_{fk}\,
  p_{fk}
  +
  B_{fb}.
  \label{eq:controlled_forward_folded_mean}
\end{equation}
For fixed detector inputs, its oscillation-parameter Jacobian is
\begin{equation}
  J^{(\lambda)}_{fb,i}
  \equiv
  \partial_i\mu_{fb}
  =
  \sum_k
  R_{bk}\,
  N_k\,
  \epsilon_{fk}\,
  \partial_i p_{fk}.
  \label{eq:forward_folded_physics_jacobian}
\end{equation}
A nuisance derivative must act on every factor in
Eq.~\eqref{eq:general_forward_folded_mean} that the nuisance changes.

\subsection{Poisson likelihood and nuisance profiling}
\label{subsec:poisson_likelihood_profiling}

For independent reconstructed bins $r=(a,b)$, the constrained likelihood is
\begin{equation}
  \mathcal{L}(\bm{\lambda},\bm{\nu})
  =
  \prod_r
  \frac{
    \mu_r(\bm{\lambda},\bm{\nu})^{n_r}
    e^{-\mu_r(\bm{\lambda},\bm{\nu})}
  }{n_r!}
  \,
  \exp\!\left[
    -\frac{1}{2}
    (\bm{\nu}-\bm{\nu}_0)^{\mathsf T}
    \Pi
    (\bm{\nu}-\bm{\nu}_0)
  \right],
  \label{eq:constrained_poisson_likelihood}
\end{equation}
where $\Pi\succeq0$ is the Fisher information of the Gaussian auxiliary
constraints.  In a frequentist ensemble these constraints represent auxiliary
data and need not be interpreted as immutable Bayesian priors.

The Poisson deviance relative to the saturated model is
\begin{align}
  D(\bm{\lambda},\bm{\nu})
  &=
  2\sum_r
  \left[
    \mu_r-n_r
    +
    n_r\ln\!\left(\frac{n_r}{\mu_r}\right)
  \right]
  \nonumber\\
  &\quad+
  (\bm{\nu}-\bm{\nu}_0)^{\mathsf T}
  \Pi
  (\bm{\nu}-\bm{\nu}_0),
  \label{eq:poisson_deviance}
\end{align}
with $n_r\ln(n_r/\mu_r)=0$ when $n_r=0$.  If $n_r>0$ and
$\mu_r=0$, the deviance is infinite.

Let
\begin{equation}
  \bm{\vartheta}
  =
  (\bm{\lambda},\bm{\nu}).
  \label{eq:joint_parameter_vector}
\end{equation}
At a regular point with $\mu_r>0$, the joint expected Fisher matrix is
\begin{equation}
  \CFI^{\mathrm{full}}_{AB}
  =
  \sum_r
  \frac{
    \partial_A\mu_r\,
    \partial_B\mu_r
  }{\mu_r}
  +
  \Pi_{AB},
  \label{eq:full_poisson_fisher}
\end{equation}
where $\Pi_{AB}$ is zero outside the constrained nuisance block.  Partition
this matrix as
\begin{equation}
  \CFI^{\mathrm{full}}
  =
  \begin{pmatrix}
    \CFI_{\lambda\lambda}
    &
    \CFI_{\lambda\nu}
    \\
    \CFI_{\nu\lambda}
    &
    \CFI_{\nu\nu}
  \end{pmatrix}.
  \label{eq:block_fisher_matrix}
\end{equation}
If $\CFI_{\nu\nu}$ is invertible, the local information remaining after
nuisance profiling is the Schur complement
\begin{equation}
  \CFI_{\mathrm{prof}}
  =
  \CFI_{\lambda\lambda}
  -
  \CFI_{\lambda\nu}
  \CFI_{\nu\nu}^{-1}
  \CFI_{\nu\lambda}.
  \label{eq:profiled_fisher_schur}
\end{equation}
Therefore,
\begin{equation}
  \CFI_{\lambda\lambda}
  -
  \CFI_{\mathrm{prof}}
  =
  \CFI_{\lambda\nu}
  \CFI_{\nu\nu}^{-1}
  \CFI_{\nu\lambda}
  \succeq0.
  \label{eq:profiling_information_loss}
\end{equation}

If $\CFI_{\nu\nu}$ is singular, the generalized Schur complement is
\begin{equation}
  \CFI_{\mathrm{prof}}
  =
  \CFI_{\lambda\lambda}
  -
  \CFI_{\lambda\nu}
  \CFI_{\nu\nu}^{+}
  \CFI_{\nu\lambda},
  \label{eq:generalized_profiled_fisher_schur}
\end{equation}
provided that
\begin{equation}
  \ran(\CFI_{\nu\lambda})
  \subseteq
  \ran(\CFI_{\nu\nu}).
  \label{eq:profiled_fisher_range_condition}
\end{equation}
For a positive-semidefinite complete block Fisher matrix, this range condition
holds automatically.  The pseudoinverse profiles only the identifiable
nuisance subspace; it does not repair an unidentifiable joint model or make
nonestimable physics functions estimable.  The controlled benchmark has a
full-rank nuisance block after its auxiliary constraints, so
Eq.~\eqref{eq:profiled_fisher_schur} uses an ordinary inverse.

For the Asimov data set,
\begin{equation}
  n_r^{\mathrm A}
  =
  \mu_r(\bm{\lambda}_0,\bm{\nu}_0),
  \label{eq:asimov_counts}
\end{equation}
define the exactly profiled deviance
\begin{equation}
  D_{\mathrm{prof}}(\bm{\lambda})
  =
  \min_{\bm{\nu}}D(\bm{\lambda},\bm{\nu}).
  \label{eq:exact_profiled_deviance}
\end{equation}
At the truth and under the usual local regularity conditions,
\begin{equation}
  \left.
  \frac{\partial^2D_{\mathrm{prof}}}
       {\partial\lambda_i\partial\lambda_j}
  \right|_{\bm{\lambda}_0}
  =
  2\left(\CFI_{\mathrm{prof}}\right)_{ij}.
  \label{eq:profiled_deviance_schur_curvature}
\end{equation}
This establishes the local relation between the Asimov/profile-likelihood
curvature and the Schur-complement Fisher matrix
\cite{CowanCranmerGrossVitells2011}.  It does not imply that the likelihood is
globally quadratic or that automatic Wilks thresholds have correct coverage
in the presence of periodicity, physical boundaries, ordering changes,
weak identification, or disconnected degeneracies
\cite{Wilks1938,FeldmanCousins1998,NOvAProfiledFC2022}.

\subsection{Why each detector operation loses information}
\label{subsec:detector_data_processing}

At a regular point, let $K(z|y)$ be a parameter-independent Markov kernel and
let the output distribution be
\begin{equation}
  q_z
  =
  \sum_y K(z|y)p_y.
  \label{eq:classical_markov_channel}
\end{equation}
Define the input score
\begin{equation}
  \bm{s}_y
  =
  \nabla_{\bm{\lambda}}\ln p_y.
  \label{eq:input_score}
\end{equation}
The output score is the conditional mean
\begin{equation}
  \nabla_{\bm{\lambda}}\ln q_z
  =
  \mathbb{E}[\bm{s}_Y|Z=z].
  \label{eq:output_score_conditional_mean}
\end{equation}
The law of total covariance gives
\begin{align}
  \CFI_Y-\CFI_Z
  &=
  \mathbb{E}
  \left[
    \operatorname{Cov}(\bm{s}_Y|Z)
  \right]
  \succeq0.
  \label{eq:classical_fisher_data_processing}
\end{align}
Energy smearing, event merging, and deletion of a classical label are all
instances of Eq.~\eqref{eq:classical_fisher_data_processing} when their
kernels are independent of the physics estimands.

A known parameter-independent background also reduces Fisher information.
For signal means $s_r>0$ and backgrounds $b_r\geq0$,
\begin{align}
  \bm{u}^{\mathsf T}
  \left(
    \CFI_s-\CFI_{s+b}
  \right)
  \bm{u}
  &=
  \sum_r
  \left(
    \bm{u}^{\mathsf T}\nabla s_r
  \right)^2
  \left(
    \frac{1}{s_r}
    -
    \frac{1}{s_r+b_r}
  \right)
  \geq0.
  \label{eq:background_fisher_loss}
\end{align}
A zero-signal boundary requires the limiting treatment described in
Sec.~\ref{subsubsec:zero_probability_boundaries}.  Unobserved
parameter-independent efficiency thinning is likewise an information-losing
channel.  Finally, Eqs.~\eqref{eq:profiling_information_loss} and
\eqref{eq:generalized_profiled_fisher_schur} show that nuisance profiling
cannot increase local physics information.

Combining these facts with the fixed-measurement quantum information
inequality yields Eq.~\eqref{eq:operational_information_hierarchy}.  The
hierarchy assumes a fixed detector channel for the physics-parameter
comparison.  Detector uncertainties are included through nuisance derivatives
in the joint likelihood and are not ignored.

\subsection{Coordinate-invariant retention diagnostics}
\label{subsec:retention_diagnostics}

The diagonal entries of two information matrices cannot be compared
meaningfully when parameters have heterogeneous units and strong correlations.
We instead define the generalized information eigenvalues $r_\alpha$ of an
operational layer $\CFI$ relative to a positive-definite reference
$\QFI_{\mathrm{state}}$:
\begin{equation}
  \CFI\,\bm{v}_\alpha
  =
  r_\alpha
  \QFI_{\mathrm{state}}\bm{v}_\alpha.
  \label{eq:generalized_retention_eigenproblem}
\end{equation}
If $0\prec\CFI\preceq\QFI_{\mathrm{state}}$, then
\begin{equation}
  0<r_\alpha\leq1.
  \label{eq:retention_eigenvalue_range}
\end{equation}
The unordered set $\{r_\alpha\}$ is invariant under nonsingular
reparameterizations.

If the reference QFIM is singular, the comparison is first restricted to
$\ran(\QFI_{\mathrm{state}})$, as prescribed in Sec.~\ref{sec:model}.  A
direction identifiable in the reference but completely lost by the
operational layer appears as $r_\alpha=0$ and must not be removed by
restricting to $\ran(\CFI)$.

Three useful summaries are
\begin{align}
  r_{\min}
  &=
  \min_\alpha r_\alpha,
  \label{eq:minimum_information_retention}
  \\
  g
  &=
  \left(\prod_{\alpha=1}^{p}r_\alpha\right)^{1/p},
  \label{eq:geometric_information_retention}
  \\
  C_Q(\CFI)
  &=
  \operatorname{tr}
  \left(
    \QFI_{\mathrm{state}}\CFI^{-1}
  \right)
  =
  \sum_{\alpha=1}^{p}\frac{1}{r_\alpha}.
  \label{eq:qfi_metric_classical_cost}
\end{align}
The inverse in Eq.~\eqref{eq:qfi_metric_classical_cost} is taken on the
reference-identifiable space and exists only when $\CFI$ retains every such
direction.  The worst generalized standard-error inflation is
\begin{equation}
  \mathcal{I}_{\mathrm{worst}}
  =
  r_{\min}^{-1/2},
  \label{eq:worst_standard_error_inflation}
\end{equation}
while the geometric standard-error inflation is $g^{-1/2}$.  If an
operational matrix loses a reference-identifiable direction, at least one
$r_\alpha$ vanishes and the corresponding inflation and $C_Q$ are infinite.

Generalized eigenvectors need not agree between operational layers.  Sorted
entries from different rows must therefore not be interpreted as the
evolution of a fixed named parameter combination.  To localize each loss, we
also compute generalized eigenvalues of every layer relative to the
immediately preceding layer.

\subsection{Controlled state-to-event benchmark}
\label{subsec:controlled_operational_benchmark}

We validate the full chain using a controlled analytic experiment model with

\begin{itemize}
  \item 19 true-energy bins from $0.5$ to $5.0~\mathrm{GeV}$;
  \item $2.0\times10^4$ parameter-independent unoscillated probes;
  \item $L=1300~\mathrm{km}$;
  \item constant density
        $2.848~\mathrm{g\,cm^{-3}}$ and $Y_e=0.5$;
  \item an initial $\nu_\mu$ state;
  \item the oscillation benchmark of
        Eq.~\eqref{eq:identifiability-oscillation-benchmark};
  \item a column-stochastic Gaussian response with
        $\sigma_E/\mathrm{GeV}=0.12+0.10\sqrt{E/\mathrm{GeV}}$;
  \item flavor- and energy-dependent efficiencies;
  \item fixed reconstructed backgrounds; and
  \item six Gaussian-constrained nuisance coordinates.
\end{itemize}

The nuisance modes are a $5\%$ signal normalization, a $10\%$ RMS flux tilt,
a $2\%$ energy-scale shift, and independent $10\%$, $5\%$, and $20\%$
background normalizations for the electron-, muon-, and tau-like samples.
Each nuisance coordinate is expressed in units of its nominal
one-standard-deviation constraint, so the auxiliary-constraint Fisher matrix
is the identity.  For the local Fisher and curvature validation, these modes
and the physics derivatives enter the affine tangent-rate model
$\mu_r=\mu_r^0+\sum_iJ^{(\lambda)}_{ri}\delta\lambda_i+
\sum_a A_{ar}\nu_a$, with $J^{(\lambda)}$ and $A$ fixed at the benchmark.
This surrogate is sufficient for testing the local Schur-complement identity;
it is not used as a nonlinear finite-displacement detector model.

The ideal smeared event total is $20000$.  After efficiencies and backgrounds,
the expected total is $10745.55$, of which $2100$ events are background.
After the coordinate scaling in Eq.~\eqref{eq:default-scales}, all five
information layers have numerical rank six at relative eigenvalue tolerance
$10^{-10}$.

\begin{table*}[t]
  \centering
  \footnotesize
  \setlength{\tabcolsep}{4pt}
  \caption{
    Coordinate-invariant information retention in the controlled benchmark.
    Generalized eigenvalues are measured relative to the count-weighted state
    QFIM.  Numerical ranks use relative tolerance $10^{-10}$ after the
    coordinate scaling in Eq.~\eqref{eq:default-scales}.  The metric cost is
    defined in Eq.~\eqref{eq:qfi_metric_classical_cost}.  Large inflation
    factors diagnose weak operational directions despite numerical rank six.
  }
  \label{tab:operational_information_retention}
  \begin{tabular}{lrrrrrr}
    \toprule
    Layer
    & Rank
    & $r_{\min}$
    & $r_{\max}$
    & $g$
    & $r_{\min}^{-1/2}$
    & $C_Q$
    \\
    \midrule
    State QFI
      & 6
      & $1.000000$
      & $1.000000$
      & $1.000000$
      & $1.00$
      & $6.00$
      \\
    Resolved flavor CFI
      & 6
      & $6.5895\times10^{-6}$
      & $0.998651$
      & $0.077839$
      & $389.6$
      & $1.51770\times10^5$
      \\
    Energy-smeared Poisson FI
      & 6
      & $1.7508\times10^{-6}$
      & $0.995297$
      & $0.038735$
      & $755.7$
      & $5.71184\times10^5$
      \\
    Efficiency/background-degraded FI
      & 6
      & $7.4635\times10^{-7}$
      & $0.460340$
      & $0.019102$
      & $1157.5$
      & $1.33992\times10^6$
      \\
    Nuisance-profiled FI
      & 6
      & $7.3413\times10^{-7}$
      & $0.253254$
      & $0.011387$
      & $1167.1$
      & $1.36224\times10^6$
      \\
    \bottomrule
  \end{tabular}
\end{table*}

Flavor projection is the largest initial contraction in this controlled
example.  Its determinant information fraction relative to the state QFIM is
\begin{equation}
  \frac{\det\CFI_{\mathrm{flavor}}}
       {\det\QFI_{\mathrm{state}}}
  =
  2.2243\times10^{-7}.
  \label{eq:flavor_determinant_retention}
\end{equation}
After all detector operations and profiling,
\begin{equation}
  \frac{\det\CFI_{\mathrm{prof}}}
       {\det\QFI_{\mathrm{state}}}
  =
  2.1802\times10^{-12}.
  \label{eq:profiled_determinant_retention}
\end{equation}
\begin{table*}[t]
  \centering
  \footnotesize
  \renewcommand{\arraystretch}{1.16}
  \setlength{\tabcolsep}{5pt}
  \caption{
    Incremental information contractions between adjacent operational layers.
    Panel (a) reports coordinate-invariant generalized quantities using the
    earlier layer as the reference.  Panel (b) reports the minimum eigenvalue
    of the scaled matrix difference
    $\widetilde F_{\mathrm{before}}
      -\widetilde F_{\mathrm{after}}$,
    where the scaling is that of Eq.~\eqref{eq:default-scales}.
    The magnitude of this PSD-consistency margin is scale dependent, while
    its nonnegative sign checks the corresponding analytically established
    Loewner inequality.
  }
  \label{tab:incremental_operational_losses}

  \begin{tabular}{@{}lrrrr@{}}
    \toprule
    \multicolumn{5}{c}{
      \textbf{(a) Coordinate-invariant incremental contractions}
    }
    \\
    \midrule
    Transition
    & $r_{\min}$
    & $r_{\max}$
    & $g$
    & $\det F_{\mathrm{after}}/
       \det F_{\mathrm{before}}$
    \\
    \midrule
    State QFI $\rightarrow$ flavor CFI
      & $6.5895\times10^{-6}$
      & $0.998651$
      & $0.077839$
      & $2.2243\times10^{-7}$
    \\
    Flavor CFI $\rightarrow$ energy smearing
      & $0.214360$
      & $0.998094$
      & $0.497630$
      & $0.0151859$
    \\
    Smearing $\rightarrow$ efficiencies/backgrounds
      & $0.356622$
      & $0.675563$
      & $0.493131$
      & $0.0143804$
    \\
    Fixed detector $\rightarrow$ nuisance profile
      & $0.212180$
      & $0.999582$
      & $0.596139$
      & $0.0448835$
    \\
    \bottomrule
  \end{tabular}

  \vspace{0.9em}

  \begin{tabular}{@{}lr@{}}
    \toprule
    \multicolumn{2}{c}{
      \textbf{(b) Coordinate-scaled PSD-consistency margins}
    }
    \\
    \midrule
    Transition
    &
    $\lambda_{\min}\!\left(
      \widetilde F_{\mathrm{before}}
      -
      \widetilde F_{\mathrm{after}}
    \right)$
    \\
    \midrule
    State QFI $\rightarrow$ flavor CFI
      & $7.03079$
    \\
    Flavor CFI $\rightarrow$ energy smearing
      & $6.06379\times10^{-3}$
    \\
    Smearing $\rightarrow$ efficiencies/backgrounds
      & $1.49448\times10^{-3}$
    \\
    Fixed detector $\rightarrow$ nuisance profile
      & $1.70833\times10^{-6}$
    \\
    \bottomrule
  \end{tabular}
\end{table*}

Let $\operatorname{spec}_{\QFI}(F)$ denote the ordered generalized eigenvalue
spectrum of $F$ relative to $\QFI_{\mathrm{state}}$.  The absolute generalized
spectra are
\begin{align}
  \operatorname{spec}_{\QFI}
  (\CFI_{\mathrm{flavor}})
  &=
  \bigl(
    6.5895\times10^{-6},
    0.1745,
    0.2760,
    0.8115,
    0.8647,
    0.9987
  \bigr),
  \label{eq:flavor_relative_spectrum}
  \\
  \operatorname{spec}_{\QFI}
  (\CFI_{\mathrm{smear}})
  &=
  \bigl(
    1.7508\times10^{-6},
    0.05930,
    0.1228,
    0.4431,
    0.6009,
    0.9953
  \bigr),
  \label{eq:smeared_relative_spectrum}
  \\
  \operatorname{spec}_{\QFI}
  (\CFI_{\mathrm{det}})
  &=
  \bigl(
    7.4635\times10^{-7},
    0.02378,
    0.07929,
    0.2428,
    0.3089,
    0.4603
  \bigr),
  \label{eq:detector_relative_spectrum}
  \\
  \operatorname{spec}_{\QFI}
  (\CFI_{\mathrm{prof}})
  &=
  \bigl(
    7.3413\times10^{-7},
    0.02320,
    0.05224,
    0.06655,
    0.1454,
    0.2533
  \bigr).
  \label{eq:profiled_relative_spectrum}
\end{align}
These spectra show that numerical rank six coexists with an operational
direction more than six orders of magnitude weaker than its state-QFI
normalization.

\subsection{Numerical and likelihood validation}
\label{subsec:operational_validation}

All probabilities and event means used above lie strictly inside their regular
domains:
\begin{align}
  \min_{f,k}p_{fk}
  &=
  9.68205\times10^{-3},
  \nonumber\\
  \min_{f,k}N_kp_{fk}
  &=
  0.814793,
  \nonumber\\
  \min_{f,b}\mu_{fb}^{\mathrm{det}}
  &=
  2.77152.
  \label{eq:operational_minimum_probability_means}
\end{align}
Thus none of the reported numbers relies on a zero-probability or zero-mean
prescription.

The count-weighted flavor CFI and the independently Poissonized true-bin
flavor information agree at relative Frobenius error
\begin{equation}
  \frac{
    \left\|
      \CFI_{\mathrm{flavor}}
      -
      \CFI_{\mathrm{Poissonized}}
    \right\|_F
  }{
    \left\|\CFI_{\mathrm{flavor}}\right\|_F
  }
  =
  3.46\times10^{-16}.
  \label{eq:poissonization_validation_error}
\end{equation}

We also profile the Poisson deviance of this affine tangent-rate model exactly
over its nuisance coordinates along all six generalized eigendirections of
$\CFI_{\mathrm{prof}}$ relative to $\QFI_{\mathrm{state}}$.  Each direction
is normalized so that
\begin{equation}
  \bm{d}_\alpha^{\mathsf T}
  \CFI_{\mathrm{prof}}
  \bm{d}_\alpha=1.
  \label{eq:profile_curvature_direction_normalization}
\end{equation}
The numerical quantity
\begin{equation}
  \widehat{\mathcal{I}}_\alpha
  =
  \frac{1}{2}
  \frac{
    D_{\mathrm{prof}}(h\bm{d}_\alpha)
    -2D_{\mathrm{prof}}(0)
    +D_{\mathrm{prof}}(-h\bm{d}_\alpha)
  }{h^2}
  \label{eq:profiled_curvature_finite_difference}
\end{equation}
agrees with unity in every direction, with maximum relative error
\begin{equation}
  \max_\alpha
  |\widehat{\mathcal{I}}_\alpha-1|
  =
  3.82\times10^{-7}.
  \label{eq:profiled_curvature_validation_error}
\end{equation}
The maximum nuisance-score infinity norm after profiling is
$5.81\times10^{-8}$.

Finally, we apply a fixed invertible nonorthogonal reparameterization to all
six physics coordinates and an independent invertible transformation to the
six nuisance coordinates of this same controlled model.  The source spectrum,
propagation, response, efficiencies, backgrounds, expected event means, and
auxiliary likelihood are held fixed; only their coordinate representations
and the auxiliary-constraint tensor are transformed consistently.  This is
not the later public-DUNE calculation, which contains nine nuisance
parameters.

The same-model coordinate test gives
\begin{align}
  \max_\alpha
  \left|
    r_\alpha^{\mathrm{new}}
    -
    r_\alpha^{\mathrm{old}}
  \right|
  &<
  1.0\times10^{-14},
  \nonumber\\
  \frac{
    \left\|
      \CFI_{\mathrm{prof}}^{\mathrm{new}}
      -
      J^{\mathsf T}
      \CFI_{\mathrm{prof}}^{\mathrm{old}}
      J
    \right\|_F
  }{
    \left\|
      J^{\mathsf T}
      \CFI_{\mathrm{prof}}^{\mathrm{old}}
      J
    \right\|_F
  }
  &=
  2.41\times10^{-15}.
  \label{eq:profile_coordinate_covariance_validation}
\end{align}
Nuisance Schur complementation therefore commutes numerically with invertible
coordinate pullback in this benchmark, including the consistently transformed
auxiliary-constraint information.

\subsection{Interpretation and limitations}
\label{subsec:operational_interpretation}

The controlled analysis establishes four points.

First, the state QFIM, flavor CFI, and event Fisher matrix answer different
questions.  The state QFIM supplies the SLD information upper bound for the
declared state family; it need not be jointly attainable by one
multiparameter measurement.  The flavor CFI assumes a fixed ideal flavor
measurement, while the event Fisher matrix describes the reconstructed-count
likelihood.

Second, numerical full rank does not imply useful simultaneous precision.
All five layers in Table~\ref{tab:operational_information_retention} have
numerical rank six at relative tolerance $10^{-10}$, but the weakest profiled
direction retains only $7.34\times10^{-7}$ of its state-QFI normalization.
An inverse matrix without its relative spectrum therefore conceals the
dominant operational limitation.

Third, detector losses cannot be represented by a single universal efficiency
factor.  Flavor projection, smearing, backgrounds, selection efficiencies,
and nuisance profiling act anisotropically and rotate the generalized
information directions.  Their effects must be propagated through the
complete event Jacobian.

Fourth, the Fisher and Asimov results remain local.  They describe the
quadratic curvature at the declared parameter point and do not establish
global confidence regions, ordering discrimination, octant coverage, or
periodic-$\dcp$ coverage.  Those questions require the nonquadratic profiled
likelihood and pseudo-experiment calibration.

The numerical inputs in this section are deliberately analytic validation
functions.  They are not an official DUNE configuration and must not be
quoted as experimental sensitivity.  Their purpose is to prove and validate
the state-to-event information pipeline.  Section~\ref{sec:reassessment} next
applies this framework to the fixed-energy calculation of
Ref.~\cite{HuangEtAl2026}, after which the version-locked public-DUNE implementation is analyzed
separately.

\section{Reassessment and numerical validation}
\label{sec:reassessment}

This section reassesses the fixed-energy calculation of
Ref.~\cite{HuangEtAl2026} and records the numerical tests used throughout this
work.  The purpose is not to dispute the entries of the published QFIM: those
entries are reproducible.  Rather, we separate four logically distinct
questions that cannot be answered by the entries alone: whether the local
model is identifiable, whether a matrix bound is jointly attainable, whether
two state descriptions are related passively or actively, and whether the
state-level information is accessible through a specified detector
likelihood.

\subsection{Comparison protocol and coordinate scaling}
\label{subsec:reassessment-protocol}

We use the parameter order
\begin{equation}
  \bm{\lambda}
  =
  \bigl(
    \theta_{12},
    \theta_{13},
    \theta_{23},
    \dcp,
    \Delta m^2_{21},
    \Delta m^2_{31}
  \bigr)^{\mathsf T}
\end{equation}
and the same oscillation point and phase convention as
Ref.~\cite{HuangEtAl2026}.  What that reference describes as a
DUNE-labelled benchmark is, at the state level, a pure muon-flavor state
evaluated in vacuum at $L/E=520\,\mathrm{km/GeV}$.
It is therefore a monochromatic vacuum benchmark rather than a DUNE
reconstructed-event model.

Because the coordinates carry heterogeneous units, rank and conditioning are
diagnosed after the nonsingular pullback
\begin{equation}
  \widetilde{\QFI}
  =
  S^{\mathsf T}\QFI S,
  \qquad
  S
  =
  \operatorname{diag}
  \bigl(
    1,1,1,1,
    10^{-4}\,\mathrm{eV}^{2},
    10^{-3}\,\mathrm{eV}^{2}
  \bigr).
  \label{eq:reassessment-scaling}
\end{equation}
The scaling cannot change the exact mathematical rank.  It prevents a
floating-point tolerance from being determined by the arbitrary numerical
use of $\mathrm{eV}^{-4}$ for the two mass-splitting entries.  An eigenmode is
counted as numerically identifiable only when its scaled eigenvalue exceeds
\begin{equation}
  \tau_{\mathrm{rank}}
  =
  10^{-10}
  \max\!\left\{1,\|\widetilde{\QFI}\|_2\right\}
  =
  1.56\times10^{-9}.
  \label{eq:reassessment-rank-tolerance}
\end{equation}
Negative eigenvalues whose magnitudes lie below the corresponding PSD
tolerance are treated as floating-point roundoff, not as physical information
directions.  All residuals below are evaluated before rounding; matrices
reconstructed from published decimal entries are tested separately to
quantify rounding sensitivity.

\subsection{Reproduction of the fixed-energy QFIM}
\label{subsec:reassessment-reproduction}

Writing
\begin{equation}
  \QFI
  =
  \begin{pmatrix}
    A & B\\
    B^{\mathsf T} & C
  \end{pmatrix},
\end{equation}
our independent implementation gives
\begin{align}
  A={}&
  \begin{pmatrix}
    0.0141207& 0.0785022& 0.0127099& 0.0340327\\
    0.0785022& 7.51844&-1.31143& 0.00628930\\
    0.0127099&-1.31143&15.3671&-0.00608474\\
    0.0340327& 0.00628930&-0.00608474&0.156280
  \end{pmatrix},
  \nonumber\\[2mm]
  B={}&
  \begin{pmatrix}
    -16.0605&102.646\\
    1385.76&-3.48658\\
    252.751&-31.4911\\
    226.648&9.89083
  \end{pmatrix}
  \mathrm{eV}^{-2},
  \qquad
  \frac{C}{10^{6}\,\mathrm{eV}^{-4}}
  =
  \begin{pmatrix}
    1.70036&-1.36648\\
    -1.36648&1.72495
  \end{pmatrix}.
  \label{eq:reassessment-reproduced-blocks}
\end{align}
The relative Frobenius distances from the blocks printed in
Eqs.~(3.18), (3.21), and (3.22) of Ref.~\cite{HuangEtAl2026} are
$1.91\times10^{-3}$, $2.98\times10^{-3}$, and
$2.27\times10^{-3}$, respectively, consistent with their displayed
precision.  The diagonal entries in its Eq.~(3.15) are likewise reproduced.

The interpretation changes once the structural tangent geometry is retained.
The scaled spectrum is
\begin{equation}
  \operatorname{eig}(\widetilde{\QFI})
  =
  \bigl(
    -1.74\times10^{-15},
    \;5.69\times10^{-17},
    \;0.166828,
    \;1.74185,
    \;7.30872,
    \;15.5805
  \bigr).
  \label{eq:reassessment-exact-spectrum}
\end{equation}
The matrix therefore has numerical rank four and numerical nullity two at the
declared scaling and tolerance.  This is not an empirical near-singularity:
the pure-qutrit theorem of Sec.~\ref{sec:identifiability} requires the
structural rank to satisfy
\begin{equation}
  \rank\QFI\leq2(3-1)=4.
\end{equation}
The four positive modes show that the bound is saturated at the benchmark.

If $T$ denotes the real horizontal tangent map, direct construction gives
\begin{equation}
  \QFI=T^{\mathsf T}T,
  \qquad
  \rank T=4,
  \qquad
  \frac{
    \|\QFI-T^{\mathsf T}T\|_{\mathrm F}
  }{
    \|\QFI\|_{\mathrm F}
  }
  =
  2.25\times10^{-16}.
  \label{eq:reassessment-tangent-factorization}
\end{equation}
Thus the two null modes are required by the physical dimension of the
pure-qutrit state manifold rather than generated by numerical
ill-conditioning.

The four angle tangents already span the complete real qutrit tangent space at
the benchmark, and the principal angle block $A$ is nonsingular.  Positivity
and the rank-four constraint then imply that the mass-sector Schur complement
vanishes in the exact model.  Numerically,
\begin{equation}
  \left\|
    S_m^{\mathsf T}
    \bigl(
      C-B^{\mathsf T}A^{-1}B
    \bigr)
    S_m
  \right\|_{\mathrm F}
  =
  7.06\times10^{-16},
  \qquad
  S_m
  =
  \operatorname{diag}(10^{-4},10^{-3})\,\mathrm{eV}^{2}.
  \label{eq:reassessment-schur-zero}
\end{equation}

The displayed-precision matrix provides a separate rounding stress test.
Reconstructing the full matrix only from the decimals in
Ref.~\cite{HuangEtAl2026} yields
\begin{equation}
  \operatorname{eig}(\widetilde{\QFI}_{\mathrm{print}})
  =
  \bigl(
    -1.568\times10^{-4},
    -4.026\times10^{-5},
    0.166561,
    1.73704,
    7.31155,
    15.6121
  \bigr).
  \label{eq:reassessment-rounded-spectrum}
\end{equation}
As an ordinary numerical matrix it is nonsingular, but it is indefinite:
rounding has displaced the two structural null modes into two negative modes.
It is therefore not a valid positive-semidefinite rank-six QFIM.  Its ordinary
inverse has six negative diagonal entries, beginning with
$(\QFI_{\mathrm{print}}^{-1})_{11}=-1.82\times10^{4}$ in native coordinates.
Neither this rounded inverse nor an unqualified
Moore--Penrose inverse is a six-coordinate covariance bound.  A pseudoinverse
is meaningful only after the estimable tangent combinations have been
declared and the desired functions have been verified to annihilate the exact
kernel.

\subsection{Interpretation of the fixed-energy result}
\label{subsec:reassessment-interpretation}

The preceding reproduction shows that the published fixed-energy QFIM entries
are numerically reproducible, while their interpretation changes once
structural identifiability, multiparameter attainability, representation
covariance, and detector accessibility are treated explicitly.  Two points
merit additional emphasis.

First, for a regular identifiable model with nonsingular QFIM, $N$
independent preparations, and a locally unbiased estimator, the SLD matrix
Cramér--Rao relation is the Loewner-order inequality
\begin{equation}
  \operatorname{Cov}(\widehat{\bm\lambda})
  -
  \frac{1}{N}\QFI^{-1}
  \succeq0.
  \label{eq:reassessment-loewner}
\end{equation}
It is not an entrywise inequality, and in a multiparameter model the matrix
lower bound need not be jointly attainable.  In particular, an off-diagonal
entry of $\QFI^{-1}$ does not determine the sign of the covariance produced by
an arbitrary estimator.  Moreover, $1/\QFI_{ii}$ is the one-parameter bound
for the submodel in which all complementary coordinates are held fixed,
whereas $(\QFI^{-1})_{ii}$ is the corresponding diagonal entry of the
multiparameter SLD matrix bound when all declared coordinates are estimated.
Inverting $A$ and $C$ separately is therefore a pair of fixed-complement
submodel calculations, not a six-parameter analysis.

Even within the identifiable four-angle submodel, invertibility is not
attainability.  For the dimensionless metric weight $W=A$, the SLD cost is
four and the exact Holevo cost is eight.  The mass-only two-coordinate
submodel is weakly compatible to numerical precision, but that does not make
the direct sum $A\oplus C$ the QFIM of the original six-coordinate family
\cite{Holevo2011,Matsumoto2002}.

Second, a parameter-dependent change of coefficients must be classified as
passive or active.  Using the moving-frame convention of
Sec.~\ref{subsec:moving-basis}, a passive basis $B(\bm\lambda)$ satisfies
\begin{equation}
  \bm c
  =
  B^{\dagger}\lvert\psi\rangle,
  \qquad
  \Gamma_i
  =
  B^{\dagger}\partial_iB,
  \qquad
  \nabla_i\bm c
  =
  \partial_i\bm c+\Gamma_i\bm c
  =
  B^{\dagger}\partial_i\lvert\psi\rangle.
  \label{eq:reassessment-moving-basis}
\end{equation}
The covariant derivative leaves the quantum geometric tensor invariant; at
the comparison point its relative residual is $2.14\times10^{-16}$.  Omitting the
connection changes the tensor by $0.735345$ because the ordinary coefficient
derivative then describes the distinct actively encoded family
$B^\dagger(\bm\lambda)|\psi(\bm\lambda)\rangle$ in a fixed representation.
A state representation also does not determine a measurement: the flavor
POVM must be specified and transformed independently as described in
Sec.~\ref{subsec:moving-basis}.

For CP conjugation, define the parameter reflection and its Jacobian by
\begin{equation}
  f(\bm\lambda)
  =
  \bigl(
    \theta_{12},
    \theta_{13},
    \theta_{23},
    -\dcp,
    \Delta m^2_{21},
    \Delta m^2_{31}
  \bigr)^{\mathsf T},
  \qquad
  J_f
  =
  \operatorname{diag}(1,1,1,-1,1,1).
  \label{eq:reassessment-cp-map}
\end{equation}
The QFIM and curvature are covariant two-index tensors and obey
\begin{equation}
  \overline{\QFI}(\bm\lambda;V_{\mathrm{CC}})
  =
  J_f^{\mathsf T}
  \QFI\!\left(f(\bm\lambda);-V_{\mathrm{CC}}\right)
  J_f,
  \qquad
  \overline{\mathcal{D}}(\bm\lambda;V_{\mathrm{CC}})
  =
  J_f^{\mathsf T}
  \mathcal{D}\!\left(f(\bm\lambda);-V_{\mathrm{CC}}\right)
  J_f.
  \label{eq:reassessment-cp-pullback}
\end{equation}
The replacement $\dcp\to-\dcp$ is sufficient for comparing diagonal scalar
entries after evaluation at the reflected point, but it is not sufficient for
a mixed tensor component with exactly one $\dcp$ index.  In vacuum, for
example, direct antineutrino propagation gives
$\overline{\QFI}_{\theta_{23}\dcp}=+0.0135636$, which is the negative of the
unpulled neutrino component evaluated at
$-\dcp$.  In matter, omitting
$V_{\mathrm{CC}}\to-V_{\mathrm{CC}}$ gives scaled relative errors
$0.422527$ for $\QFI$ and $0.213949$ for $\mathcal{D}$; the complete
pullbacks agree with direct propagation at binary floating-point precision.

The spectral-rank-restoration statement can be made explicit within the same
vacuum model.  In this controlled comparison, every retained conditional QFIM
enters the aggregate with equal unit weight, so the reported matrices are sums
rather than fixed-total-exposure averages.  Summing the two retained labels at
$L/E=400,\,520\,\mathrm{km/GeV}$
gives numerical rank six at the tolerance of
Eq.~\eqref{eq:reassessment-rank-tolerance}, with smallest scaled eigenvalue
$4.00\times10^{-5}$.  Summing five labels at
$L/E=(300,400,520,700,900)\,\mathrm{km/GeV}$
increases the smallest scaled eigenvalue to $2.38\times10^{-3}$ and reduces
the condition number on the full support from $7.04\times10^5$ to
$1.77\times10^4$.  A common normalization of an aggregate rescales all of its
eigenvalues but leaves its rank and condition number unchanged.  The observed
restoration is the kernel-intersection mechanism derived in
Sec.~\ref{subsec:kernel-intersection}, not information created by
off-diagonal matrix entries.

Finally, multiplying the pure-state QFI at one representative energy by an
order-of-magnitude event count does not define a DUNE likelihood.  It assumes
identically prepared copies, retains the measurement-independent SLD
information benchmark, and includes no flux, response, background, or
nuisance model.  It also does not resolve whether one common measurement
attains the SLD matrix bound.  We therefore retain the calculation as a
fixed-energy quantum benchmark and reserve detector-level statements for the
version-locked public-configuration analysis in Sec.~\ref{sec:dune}.

\subsection{Independent numerical validation}
\label{subsec:numerical-validation}

The validation tolerances were fixed before the corresponding results were
interpreted.  Representation-covariance spot checks used centered five-point
steps of $10^{-5}$ for the four angular coordinates and
$10^{-7}\,\mathrm{eV}^{2}$ for both mass splittings.  The broader propagation
envelopes instead used $10^{-4}$ for the angles,
$10^{-7}\,\mathrm{eV}^{2}$ for $\Delta m_{21}^{2}$, and
$10^{-6}\,\mathrm{eV}^{2}$ for $\Delta m_{31}^{2}$, as specified in
Appendices~\ref{app:pmns-derivatives} and \ref{app:matter-frechet}.  The
derivative, covariance, data-processing, and profile-curvature checks pass;
the essential residuals are reported in those appendices and in
Appendices~\ref{app:dune-configuration} and \ref{app:dune-likelihood}.  The
minimum scaled eigenvalue of
$\QFI-\CFI_{\mathrm{flav}}$ is $-1.99\times10^{-15}$, more than a factor of
$500$ smaller in magnitude than the declared PSD tolerance and consistent
with floating-point roundoff.

The bound solvers receive separate validation.  For every three-coordinate
pure-state submodel at the vacuum and matter comparison points, the
tolerance-qualified numerical primal--dual
interval contains the exact metric-weighted cost
$C_{\mathrm H}=5$; the largest retained interval width is
$1.50\times10^{-6}$.  For the full-rank mixed state obtained by tracing out a
five-energy label, Eqs.~\eqref{eq:mixed_spectral_holevo_interval} and
\eqref{eq:mixed_spectral_holevo_ratio} give the retained bracket.  Its primal
and dual points satisfy the declared floating-point feasibility tolerances,
with KKT residual $1.08\times10^{-9}$.  It is a tolerance-qualified numerical
bracket, not an interval-arithmetic or exact-arithmetic certificate.

\subsection{Reproducibility and scope}
\label{subsec:reassessment-reproducibility}

The reproducibility bundle records the SHA-256 digest of the source PDF,
all coordinate scales, finite-difference steps, rank and PSD tolerances, and
the hashes of prerequisite result files.  Its independently generated result
manifests cover the representation-covariance, attainability, propagation,
measurement, event, profiling, symmetry, and rank calculations.  The
public-DUNE ancillary inputs are not redistributed; Appendix
\ref{app:dune-configuration} gives the versioned acquisition source and the
archive, manifest, and individual-file hashes needed to reacquire and verify
the exact inputs.

These checks establish the internal mathematical and numerical claims needed
for the present reassessment.  They do not turn the state-level SLD
information benchmark into an official experimental sensitivity, establish
joint attainability by a flavor detector, or validate a global Gaussian
approximation.  Section~\ref{sec:dune} therefore performs a separate
forward-model and likelihood validation against a checksum-locked public
DUNE configuration.

\section{Public-DUNE case study}
\label{sec:dune}

We now apply the operational hierarchy to a realistic broadband example.  The
calculation uses the DUNE Collaboration's public GLoBES configuration
accompanying Ref.~\cite{DUNEPublicConfig2021}, not the internal TDR likelihood.
The public release is explicitly a simplified summary whose sensitivities are
similar, but not identical, to the official DUNE results: it supplies detailed
far-detector flux, migration, efficiency, and channel inputs, but only a set of
normalization uncertainties rather than the full systematic model.  Every
result in this section is therefore a \emph{public-configuration benchmark}.
It is neither an official DUNE sensitivity nor a substitute for the
collaboration likelihood.

\subsection{Inputs, provenance, and independent implementation}
\label{subsec:dune-inputs}

The version-locked input is the ancillary configuration distributed with
arXiv:2103.04797v2.  The complete SHA-256 digest of the acquired archive is
\begin{equation}
  \texttt{
    574975ab4d1b7e77eb3e99dca0ae309050ff54145bcc380388fa805aef636824
  }.
  \label{eq:dune-archive-hash}
\end{equation}
All 64 extracted source files are hashed individually.  The SHA-256 digest of
the canonical manifest---formed from the sorted relative path and digest of
each source file---is
\begin{equation}
  \texttt{
    ba31245a2f832e6dbc2f47cb42e25df17a15292365fadf2d9d85858ffd28a054
  }.
  \label{eq:dune-manifest-hash}
\end{equation}
The spectrum, candidate-bank, likelihood, and continuous-profile stages
all independently produce the configuration fingerprint
\begin{equation}
  \texttt{
    cd16375b29ebdf65210d94658406a1f14aa16e0c0c83dca3e717083504b5de7f
  }.
  \label{eq:dune-configuration-fingerprint}
\end{equation}
This fingerprint agreement prevents numerical products generated from
different configuration versions from being combined silently.  The parsed
model has baseline $1284.9\,\mathrm{km}$, constant density
$2.848\,\mathrm{g\,cm^{-3}}$, a $40\,\mathrm{kt}$ far detector, 32 channels,
16 migration matrices, four analysis rules, and 264 selected reconstructed
bins.  Its nine declared normalization nuisances generate eight active
event-rate directions.  Appendix~\ref{app:dune-configuration} gives the full
configuration inventory.

The event engine is independent of GLoBES and implements the public AEDL subset
summarized in Appendix~\ref{app:dune-configuration}.  It uses the phase and
matter constants compiled into GLoBES 3.2.18
\cite{HuberEtAl2005,KoppEtAl2007}.  Repeated AEDL systematic names define one
shared nuisance coordinate; treating occurrences independently would define a
different likelihood.

The four rules are FHC and RHC electron appearance and muon disappearance.
They retain right- and wrong-sign signals, intrinsic-beam and neutral-current
backgrounds, and the eight declared tau-background channels.  The tau channels
are inactive because their supplied charged-current cross-section columns are
zero, so they generate neither events nor a nuisance-rate direction.  The
model therefore has nine declared normalization coordinates but eight active
directions.  Table~\ref{tab:dune-event-totals} gives the benchmark totals.

\begin{table}[t]
  \caption{
    Expected analysis-window events at the benchmark point.  These are
    predictions of the public configuration, not observed data.  The
    background totals include all declared background channels; the declared
    tau-background channels contribute zero events for the supplied tables.
    Totals are calculated from unrounded bin predictions and may therefore
    differ by (0.001) event from sums of the displayed rounded entries.
  }
  \label{tab:dune-event-totals}
  \centering
  \small
  \renewcommand{\arraystretch}{1.12}
  \begin{tabular}{@{}lrrr@{}}
    \toprule
    Rule & Signal & Background & Total\\
    \midrule
    FHC $e$ appearance & $2083.199$ & $788.822$ & $2872.021$\\
    RHC $e$ appearance & $407.129$ & $495.809$ & $902.939$\\
    FHC $\mu$ disappearance & $18812.948$ & $426.744$ & $19239.691$\\
    RHC $\mu$ disappearance & $10694.730$ & $254.886$ & $10949.616$\\
    \midrule
    All rules & $31998.006$ & $1966.260$ & $33964.266$\\
    \bottomrule
  \end{tabular}
\end{table}

\subsection{External forward-model and curvature validation}
\label{subsec:dune-validation}

Independent GLoBES 3.2.18 exporters provide spectra and the profiled
\texttt{chiMultiExp} likelihood, directly testing the parser and rate engine.

The reference exporter returns 80 reconstructed bins for each of four rules
and separately reports signal and background.  The resulting validation
vector therefore contains $4\times80\times2=640$ rule--bin--component entries.
The likelihood analysis subsequently applies
the declared analysis windows, retaining 66 bins per rule and hence
$4\times66=264$ reconstructed-count bins.

Let $\bm\mu^{\,\mathrm{ind}}$ and $\bm\mu^{\,\mathrm{GLoBES}}$ denote the
640-entry independent and reference vectors, and define
\begin{align}
  \epsilon_1
  &=
  \frac{
    \sum_j
    \left|
      \mu_j^{\,\mathrm{ind}}
      -
      \mu_j^{\,\mathrm{GLoBES}}
    \right|
  }{
    \sum_j
    \left|
      \mu_j^{\,\mathrm{GLoBES}}
    \right|
  },
  \label{eq:dune-relative-l1-error}
  \\
  \epsilon_2
  &=
  \frac{
    \left\|
      \bm\mu^{\,\mathrm{ind}}
      -
      \bm\mu^{\,\mathrm{GLoBES}}
    \right\|_2
  }{
    \left\|
      \bm\mu^{\,\mathrm{GLoBES}}
    \right\|_2
  }.
  \label{eq:dune-relative-l2-error}
\end{align}
All 640 entries agree to floating-point precision:
\begin{equation}
  \epsilon_{1}=3.82\times10^{-16},
  \qquad
  \epsilon_{2}=3.87\times10^{-16},
  \qquad
  \max_j
  \left|
    \mu_j^{\,\mathrm{ind}}
    -
    \mu_j^{\,\mathrm{GLoBES}}
  \right|
  =
  6.82\times10^{-13}\ \text{events}.
  \label{eq:dune-spectrum-validation}
\end{equation}

Analytic derivatives of every rule, bin, and component were separately
compared with centered five-point GLoBES derivatives.  The steps and full
validation protocol are documented in Appendices~\ref{app:dune-configuration}
and~\ref{app:dune-likelihood}; Table~\ref{tab:dune-validation} reports the
retained discrepancies, whose maximum is $2.32\times10^{-9}$ for $\dcp$.
The likelihood-curvature test uses a 73-point central Hessian: the generating
point, two axial points per coordinate, and four corner points per mixed
derivative.
At the generating point, the profiled GLoBES value is
$-1.15\times10^{-12}$, consistent with zero numerical noise.  One half of the
finite-difference Hessian agrees with the independently constructed
nine-declared-coordinate, eight-active-direction nuisance Schur complement to
relative $L_2$ error $3.26\times10^{-6}$.  Repeating the spectrum comparison
at the weakest selected point in each mass ordering gives a worst relative
$L_2$ error $5.32\times10^{-16}$.

\begin{table*}[t]
  \caption{
    Independent validation against GLoBES 3.2.18.  Derivative entries are
    relative $L_2$ discrepancies between analytic derivatives and five-point
    GLoBES differences.
  }
  \label{tab:dune-validation}
  \centering
  \small
  \renewcommand{\arraystretch}{1.12}
  \begin{tabular}{@{}lcl@{\hspace{2em}}lcl@{}}
    \toprule
    Diagnostic & Result & Criterion & Diagnostic & Result & Criterion\\
    \midrule
    Spectrum $L_2$ error
    & $3.87\times10^{-16}$ & $<10^{-12}$
    & $\partial_{\theta_{12}}\mu$
    & $6.09\times10^{-10}$ & $<10^{-7}$\\

    Spectrum maximum bin error
    & $6.82\times10^{-13}$ & $<10^{-9}$ events
    & $\partial_{\theta_{13}}\mu$
    & $3.71\times10^{-11}$ & $<10^{-7}$\\

    Profiled-Hessian error
    & $3.26\times10^{-6}$ & $<10^{-4}$
    & $\partial_{\theta_{23}}\mu$
    & $5.05\times10^{-11}$ & $<10^{-7}$\\

    Weak-point spectrum error
    & $5.32\times10^{-16}$ & $<10^{-12}$
    & $\partial_{\dcp}\mu$
    & $2.32\times10^{-9}$ & $<10^{-7}$\\

    Configuration fingerprints
    & identical & exact
    & $\partial_{\Delta m^2_{21}}\mu$
    & $3.32\times10^{-11}$ & $<10^{-7}$\\

    Fresh Fisher reproduction
    & $3.51\times10^{-16}$ & $<10^{-12}$
    & $\partial_{\Delta m^2_{31}}\mu$
    & $3.71\times10^{-12}$ & $<10^{-7}$\\
    \bottomrule
  \end{tabular}
\end{table*}

These tests close the local implementation loop: the event means, all six
physics derivatives, the sharing of the declared nuisance coordinates, the
inactive tau-background direction, and the curvature of the exact profiled
likelihood are verified against an external implementation of the same public
configuration.

\subsection{Local event information and nuisance retention}
\label{subsec:dune-local-information}

We use the dimensionless coordinate scaling of
Eq.~\eqref{eq:reassessment-scaling}, so that
\begin{equation}
  \widetilde{\CFI}
  =
  S^{\mathsf T}\CFI S.
\end{equation}
At the nominal normal-ordering benchmark, the fixed-systematic and
nuisance-profiled spectra are
\begin{align}
  \operatorname{eig}(\widetilde{\CFI}_{\mathrm{fix}})
  ={}&
  (0.5173,\,33.2228,\,172.4463,\,1.98470\times10^4,
    3.21838\times10^4,\,2.72308\times10^5),
  \nonumber\\
  \operatorname{eig}(\widetilde{\CFI}_{\mathrm{prof}})
  ={}&
  (0.5019,\,29.4604,\,159.5665,\,1.68413\times10^4,
    2.55608\times10^4,\,1.65817\times10^5).
  \label{eq:dune-fisher-spectra}
\end{align}
Both matrices have numerical rank six under the declared criterion
\begin{equation}
  q_a
  >
  \max\!\left\{
    10^{-10}q_{\max},
    10^{-14}
  \right\}.
  \label{eq:dune-numerical-rank-criterion}
\end{equation}

The generalized nuisance-retention eigenvalues, defined by
\begin{equation}
  \widetilde{\CFI}_{\mathrm{prof}}\bm v_a
  =
  r_a\widetilde{\CFI}_{\mathrm{fix}}\bm v_a,
  \label{eq:dune-retention-eigenproblem}
\end{equation}
are
\begin{equation}
  \bm r
  =
  (0.5717,\,0.6709,\,0.8758,\,0.9838,\,0.9902,\,0.9982).
  \label{eq:dune-retention-spectrum}
\end{equation}
Their product is $0.326662$, equal to
\begin{equation}
  \frac{
    \det\widetilde{\CFI}_{\mathrm{prof}}
  }{
    \det\widetilde{\CFI}_{\mathrm{fix}}
  }
\end{equation}
to relative error $4.21\times10^{-14}$.  Thus the eight active normalization
directions substantially degrade two generalized physics combinations while
leaving four relatively stable.  The ninth declared normalization coordinate,
associated with the inactive tau-background mode, contributes no event-rate
direction and does not change the profiled physics information.

For reference, direct inversion of the local profiled matrix gives the native
coordinate marginal standard deviations
\begin{equation}
  (0.9553,\,0.003439,\,0.007289,\,0.4907,
    9.284\times10^{-5},\,1.325\times10^{-4}),
  \label{eq:dune-local-standard-deviations}
\end{equation}
where the first four entries are in radians and the last two in
$\mathrm{eV}^{2}$.  These are local Gaussian curvature diagnostics, not
expected experimental errors: no external oscillation-parameter constraints
are included, the public systematics are simplified, and one tangent expansion
cannot represent mass-ordering, octant, or CP-periodic alternatives.

\subsection{Rank restoration, effective rank, and robustness}
\label{subsec:dune-robustness}

The local calculation was repeated for 226 documented records.  The
176-record physics envelope is a union of three scans, not their Cartesian
product:

\begin{enumerate}
  \item a 112-point ordering--octant--phase grid containing both orderings,
        seven values
        \begin{equation}
          \theta_{23}
          =
          (40,\,41.3,\,43.29,\,45,\,47.9,\,49.9,\,50)^\circ,
        \end{equation}
        and eight phases
        \begin{equation}
          \dcp
          =
          (0,\,45,\,90,\,135,\,180,\,225,\,270,\,315)^\circ;
        \end{equation}

  \item a 48-point refinement sampling $\dcp$ every $15^\circ$ from
        $0^\circ$ through $345^\circ$, at
        $\theta_{23}=49.9^\circ$ in normal ordering and
        $\theta_{23}=49.8^\circ$ in inverted ordering; and

  \item sixteen one-axis endpoint evaluations covering the lower and upper
        NuFIT~6.0 three-sigma endpoints of
        $\theta_{12}$, $\theta_{13}$, $\Delta m^2_{21}$, and the
        ordering-appropriate atmospheric mass splitting in each ordering
        \cite{NuFIT2024}.
\end{enumerate}

This construction gives $112+48+16=176$ physics-envelope records.  Every one
has numerical rank six under
Eq.~\eqref{eq:dune-numerical-rank-criterion}.  The remaining 50 records
comprise ten density variations, ten normalization-prior variations, ten
reconstructed-binning choices, eighteen rule-subset calculations, and two
additional CP-periodicity endpoints, giving
$176+10+10+10+18+2=226$ records in total.  Exact CP periodicity between $0$
and $2\pi$ is recovered
with relative eigenvalue error $6.86\times10^{-16}$.

Numerical rank nevertheless overstates practical identifiability.  The
weakest physics-envelope record occurs in inverted ordering at the low
$\Delta m^2_{21}$ endpoint.  Its profiled scaled eigenvalues are
\begin{equation}
  (0.04999,\,16.996,\,176.893,\,1.59355\times10^4,
    2.50788\times10^4,\,1.65037\times10^5).
  \label{eq:dune-weak-spectrum}
\end{equation}
It has numerical rank six at relative thresholds $10^{-10}$ and $10^{-8}$,
but effective rank five at the declared practical threshold $10^{-6}$.

Writing the local coordinate displacement as
$\delta\bm\lambda=S\,\delta\bm x$, the normalized weakest eigenvector is
approximately
\begin{equation}
  0.761\,x_{\theta_{12}}
  +0.556\,x_{\dcp}
  -0.326\,x_{\Delta m^2_{21}}
  -0.074\,x_{\Delta m^2_{31}},
  \label{eq:dune-weak-direction}
\end{equation}
with the omitted $x_{\theta_{13}}$ and $x_{\theta_{23}}$ components below
$4\times10^{-4}$ in magnitude.  At the same record, the local marginal widths
in $\theta_{12}$ and $\dcp$ are $3.41$ and $2.50$ radians, respectively.
A Gaussian approximation with widths comparable to the physical parameter
domains cannot be promoted to a global confidence region.

The origin of the additional local directions is directly testable by
coarsening reconstructed energy.  The group sizes in
Table~\ref{tab:dune-binning-rank} define nested, parameter-independent
coarsenings of adjacent reconstructed bins.  Data processing therefore
precludes an information gain under successive grouping.  The smallest
eigenvalue decreases while numerical rank six is retained through groups of
eight.  Replacing each rule by one total count leaves four observations and
gives numerical rank four in both orderings.  Thus the two additional local
directions are carried by spectral shape, not by the four total rates.

\begin{table}[t]
  \caption{
    Smallest profiled scaled eigenvalue and numerical rank after nested
    reconstructed-bin grouping at the selected stress points.  Ranks use
    Eq.~\eqref{eq:dune-numerical-rank-criterion}.
  }
  \label{tab:dune-binning-rank}
  \centering
  \small
  \renewcommand{\arraystretch}{1.12}
  \begin{tabular}{@{}crrcc@{}}
    \toprule
    Group size
    & NO $\lambda_{\min}$
    & IO $\lambda_{\min}$
    & NO rank
    & IO rank\\
    \midrule
    1   & $6.515\times10^{-2}$ & $4.999\times10^{-2}$ & 6 & 6\\
    2   & $5.357\times10^{-2}$ & $4.302\times10^{-2}$ & 6 & 6\\
    4   & $4.790\times10^{-2}$ & $3.443\times10^{-2}$ & 6 & 6\\
    8   & $2.863\times10^{-2}$ & $2.320\times10^{-2}$ & 6 & 6\\
    Rule totals & $\simeq0$ & $\simeq0$ & 4 & 4\\
    \bottomrule
  \end{tabular}
\end{table}

Each individual rule is also found to have numerical rank six because its 66
retained energy bins can supply six independent local derivative directions.
The number of bins permits, but does not by itself guarantee, this result.
The individual-rule matrices are not practically well conditioned:
condition numbers reach $1.27\times10^9$, and several rules have effective
rank four at threshold $10^{-6}$.  At the selected stress points, combining
all four rules raises the smallest eigenvalue above that of the best
individual rule by factors $42.4$ in normal ordering and $152.4$ in inverted
ordering.  Appearance and disappearance samples and both horn polarities are
therefore complementary in the weak directions, even though every individual
subset has a numerical sixth eigenvalue.

At the standard nuisance widths, the smallest generalized retained fraction
over the broad ordering--octant--phase grid is $0.519$.  Multiplying every
supplied normalization error by four lowers the worst fraction to $0.108$ but
does not remove the sixth numerical direction at the declared rank tolerance.
This conclusion applies only to the nine-declared-coordinate,
eight-active-direction normalization model in the public files; it is not a
robustness test of the full TDR covariance.

\subsection{Conditional nonlocal likelihood pilot}
\label{subsec:dune-nonlocal-pilot}

To expose failures of the local quadratic approximation, we evaluate the
exact profiled Poisson deviance on a finite candidate bank.  The likelihood is
Eq.~\eqref{eq:constrained_poisson_likelihood} specialized to the nine named
linear normalization pulls of the public configuration.  Its nuisance
Hessian is strictly positive definite on the positive-mean domain; every
fit has the unique interior profile found by the
positivity-preserving damped Newton iteration.  The inactive tau-background
pull equals its auxiliary center and contributes neither event-rate dependence
nor a penalty at the optimum.  Appendix~\ref{app:dune-likelihood} gives the
explicit objective, proof, and numerical stationarity checks.

The pilot bank contains both orderings, the seven $\theta_{23}$ values listed
above, and 25 CP-phase values: the $15^\circ$ grid from $0^\circ$ through
$345^\circ$ together with the exact $212^\circ$ generating value.  It
therefore contains $2\times7\times25=350$ candidates.  The coordinates
$\theta_{12}$, $\theta_{13}$,
$\Delta m^2_{21}$, and $\Delta m^2_{31}$ are fixed to ordering-specific
references.  The bank is thus a conditional diagnostic, not a
six-coordinate fit.

For a true phase $\delta_0$ represented in the bank, the grid
likelihood-ratio statistic is
\begin{equation}
  q_{\delta_0}
  =
  \min_{\substack{
    \bm\lambda\in\mathcal{B}\\
    \dcp=\delta_0
  }}
  \chi^2_{\mathrm{prof}}(\bm\lambda;\bm n,\bm c)
  -
  \min_{\bm\lambda\in\mathcal{B}}
  \chi^2_{\mathrm{prof}}(\bm\lambda;\bm n,\bm c),
  \label{eq:dune-grid-cp-statistic}
\end{equation}
where $\chi^2_{\mathrm{prof}}$ is defined in
Eq.~\eqref{eq:appF-profile-definition} and $\mathcal{B}$ denotes the full
350-point bank.  The restricted minimum
therefore profiles over both orderings and all retained $\theta_{23}$ values
on the grid.  The $\theta_{23}$ statistic is defined analogously by fixing
$\theta_{23}$ to its true grid value and profiling over ordering and $\dcp$.
Coverage is reported as the fraction of pseudoexperiments satisfying
\begin{equation}
  q_{\mathrm{true}}
  \leq
  (1.0000,\,2.7055,\,3.8415)
  \label{eq:dune-one-dof-thresholds}
\end{equation}
for the nominal $68.27\%$, $90\%$, and $95\%$ one-degree-of-freedom
thresholds.

Three hundred Poisson pseudo-data sets generated at the nominal truth are
each analyzed in two auxiliary-measurement ensembles.  In the first, the nine
Gaussian centers are conditioned to be zero.  In the second, they are
independently sampled from unit normal distributions around the true
zero-valued nuisance coordinates.  The same count realizations are used in
the paired comparison.  All $300\times2\times350=210000$ candidate fits
converge.  For the $68.27\%$, $90\%$, and $95\%$ thresholds, the CP-phase
inclusion fractions are $(0.8067,0.9533,0.9933)$ with fixed auxiliary centers
and $(0.7767,0.8967,0.9567)$ with fluctuated centers; the binomial Monte Carlo
standard errors range from $0.0047$ to $0.0240$.

The conditional fixed-center ensemble overcovers at every displayed level.
With fluctuated auxiliary centers, the $90\%$ and $95\%$ results are
compatible with nominal coverage at this pilot precision, whereas the
$68.27\%$ interval still overcovers by $3.91$ Monte Carlo standard errors.
The auxiliary-measurement ensemble is therefore a material part of the
sampling definition.

The $\theta_{23}$ result fails more dramatically: the seven-point grid selects
the exact truth as the best fit in 290 of 300 fixed-center trials and 284 of
300 fluctuated-center trials.  The grid likelihood-ratio statistic is
consequently zero for most pseudoexperiments, producing nominal $68.27\%$
coverages $0.9733$ and $0.9633$.  This is a resolution failure of the finite
grid, not evidence for unusually conservative continuous inference.
Continuous or adaptive minimization is required.  More generally, neutrino
fits with periodic phases, discrete orderings, boundaries, and nuisance
parameters require explicit coverage calibration rather than automatic use
of a Wilks approximation \cite{NOvAProfiledFC2022}.

\subsection{Expanded four-coordinate Asimov profile and claim boundary}
\label{subsec:dune-continuous-profile}

As an intermediate nonlocal check, we continuously minimize the Asimov
deviance over $(\theta_{13},\theta_{23},\dcp,\Delta m^2_{31})$ in each
ordering, using five candidate-bank seeds per ordering and exact
nuisance profiling at every physics evaluation.  The parameters are mapped to
a dimensionless unit hypercube and minimized with bounded continuous
optimization.  The solar coordinates $\theta_{12}$ and
$\Delta m^2_{21}$ remain fixed to their ordering-specific references.

The bounded domains are the NuFIT~6.0 one-dimensional three-sigma intervals
\cite{NuFIT2024}.  In normal ordering,
$\theta_{13}\in[8.19^\circ,8.89^\circ]$,
$\theta_{23}\in[41.3^\circ,49.9^\circ]$, and
$\Delta m^2_{31}\in[2.451,2.578]\times10^{-3}\,\mathrm{eV}^2$; in inverted
ordering, the corresponding intervals are
$[8.25^\circ,8.93^\circ]$, $[41.5^\circ,49.8^\circ]$, and
$[-2.4721,-2.3461]\times10^{-3}\,\mathrm{eV}^2$.  The latter is converted
from NuFIT's $\Delta m^2_{32}$ interval using
$\Delta m^2_{31}=\Delta m^2_{32}+\Delta m^2_{21}$.  Both orderings use the
complete CP period, with $\dcp=0$ and $2\pi$ physically identified.

The normal-ordering fit recovers the generating point with zero deviance and
no active boundary.  The inverted-ordering result is shown in
Table~\ref{tab:dune-continuous-fit}.

\begin{table*}[t]
  \caption{
    Expanded four-coordinate Asimov profile.  The inverted-ordering
    number remains conditional on fixed solar coordinates and the declared
    bounded parameter domain.
  }
  \label{tab:dune-continuous-fit}
  \centering
  \small
  \renewcommand{\arraystretch}{1.12}
  \begin{tabular}{@{}lrrrrrl@{}}
    \toprule
    Ordering
    & Deviance
    & $\theta_{13}$
    & $\theta_{23}$
    & $\dcp$
    & $\Delta m^2_{31}$
    & Boundary\\
    \midrule
    NO
    & $0$
    & $8.620^\circ$
    & $43.290^\circ$
    & $212.000^\circ$
    & $+2.5110\times10^{-3}\,\mathrm{eV}^2$
    & none\\

    IO
    & $464.920$
    & $8.930^\circ$
    & $47.441^\circ$
    & $276.109^\circ$
    & $-2.38691\times10^{-3}\,\mathrm{eV}^2$
    & upper $\theta_{13}$\\
    \bottomrule
  \end{tabular}
\end{table*}

The expanded profile lowers the best conditional inverted-ordering deviance
from $490.119$ in the finite bank to $464.920$, a change of $25.199$ or
$5.14\%$.  This difference cannot be attributed to grid spacing alone: the
bank varies only ordering, $\theta_{23}$, and $\dcp$, whereas the continuous
search also profiles $\theta_{13}$ and $\Delta m^2_{31}$.  The optimum reaches
the upper $\theta_{13}$ domain boundary and uses nuisance pulls as large as
$6.44$ standard deviations; the two solar coordinates remain fixed.  The
value is therefore conditional on the parameter domain, auxiliary constraints,
and profiled-coordinate set and must not be quoted as a mass-ordering
sensitivity.  Unlike the spectra and local likelihood curvature, the
nonlocal deviance at this boundary optimum has not been independently
cross-checked with GLoBES.  Moreover, the $6.44$-standard-deviation pull
probes the linear normalization model far outside its nominal unit-Gaussian
region.

The case study supports three scoped conclusions.  First, the independent
engine reproduces the checksum-locked public configuration at the spectrum,
derivative, and profiled-curvature levels, including the distinction between
nine declared normalization coordinates and eight active event-rate
directions.  Second, reconstructed spectral shape, horn polarity, and
appearance/disappearance complementarity produce numerical rank six at
relative tolerance $10^{-10}$ throughout the documented 176-point physics
envelope, although the weakest direction has effective rank five at threshold
$10^{-6}$.  Third, the nonlocal pilot demonstrates that finite-grid resolution
and the auxiliary-measurement ensemble materially alter the reported grid
coverage, while the expanded continuous profile remains conditional on its
bounded domain and fixed solar coordinates.  It does not provide
official DUNE sensitivity, global confidence intervals, or a
coverage-calibrated ordering test.  Those exclusions are part of the result
rather than qualifications to be removed in presentation.

\section{Discussion and limitations}
\label{sec:discussion}

The preceding sections connect four logically distinct statements: an exact
geometric theorem, a quantum multiparameter attainability calculation, a
local information analysis for reconstructed events, and a diagnostic
nonlocal likelihood study.  Keeping these statements separate is essential.
Neither full matrix rank nor a small local Cramér--Rao bound, by itself,
establishes stable six-parameter inference in a realizable experiment.  This
section summarizes what follows from the combined analysis, what depends on
declared conventions, and what remains necessary for global experimental
claims.

\subsection{Mathematical rank and practical identifiability}
\label{subsec:discussion-rank}

For a pure state in a three-dimensional Hilbert space, the exact
single-setting rank bound is geometric: the physical state lies in
$\mathbb{CP}^{2}$, whose real tangent dimension is four.  No choice of
coordinates can turn one pure-qutrit state at a fixed propagation
setting---fixed baseline, energy, source flavor, and matter profile---into a
regular six-parameter statistical model.  Broadband rank restoration does not
contradict this theorem.  Distinct retained energy or channel labels supply
distinct tangent maps, and the kernel of their positive-weight aggregate is
the intersection of the active conditional kernels.  The relevant question is
therefore not only whether each bin is singular, but whether its null
directions remain common to all statistically distinguishable bins.

Exact rank is nevertheless only a first identifiability criterion.  In the
controlled two-energy study, all 171 energy pairs have numerical rank six at
the declared relative tolerance $10^{-10}$, but only 140 pairs, or $81.9\%$,
have effective rank six at threshold $10^{-6}$, and only 78 pairs, or
$45.6\%$, do so at $10^{-4}$.  The weakest numerically full-rank pair has
condition number $4.75\times10^{7}$.  The public-DUNE case study shows the
same distinction in a realistic forward model: all 176 documented
physics-envelope records have numerical rank six at relative tolerance
$10^{-10}$, yet the weakest selected record has effective rank five at
threshold $10^{-6}$ and local angular widths of order radians.  Conversely,
collapsing the four reconstructed spectra to four rule totals gives numerical
rank four in both orderings.  Spectral shape therefore restores additional
local directions at the declared numerical tolerance, but the strength of the
weakest restored direction must be reported separately from its numerical
existence.

A numerical rank statement also requires a norm and coordinate convention.
Exact algebraic rank is invariant under every nonsingular reparameterization,
whereas a floating-point cutoff applied to coordinates carrying different
physical units is not.  For the same nominal public-DUNE fixed-systematic
information tensor, applying a relative $10^{-10}$ cutoff directly in the
native angular and $\mathrm{eV}^{2}$ coordinates reports numerical rank five
and condition number $1.43\times10^{9}$.  After the declared dimensionless
scaling, the same bilinear form has numerical rank six and condition number
$5.26\times10^{5}$.  This is not a physical change in the experiment; it is a
change in the norm used to decide which numerical directions are small.

Accordingly, this work distinguishes three levels:

\begin{enumerate}
  \item exact or structural rank, established by an analytic property of the
        statistical model;
  \item numerical rank, evaluated after a declared coordinate scaling and
        floating-point tolerance; and
  \item effective rank, evaluated at a separately declared scientific
        information threshold.
\end{enumerate}

Exact rank and exact estimability are coordinate invariant.  Numerical
eigenvalues, condition numbers, numerical pseudoinverses, and effective-rank
statements are meaningful only together with the scaling and tolerance used
to construct them.

\subsection{Scientific weights and multiparameter attainability}
\label{subsec:discussion-weights}

A scalar precision statement additionally requires a loss function.  For a
weight matrix $W$ that is positive definite on the identifiable tangent
space, the SLD benchmark and the Holevo bound constrain
\begin{equation}
  \operatorname{tr}(WV),
\end{equation}
not an unweighted notion of ``total precision.''  Here
$\operatorname{tr}$ denotes the parameter-space trace, in contrast to the
Hilbert-space trace $\Tr$ used in the quantum definitions.  The intrinsic
choice $W=\QFI$ is useful because it produces a dimensionless,
coordinate-covariant scalar cost and gives
\begin{equation}
  C_{\mathrm S}(\QFI)=p
\end{equation}
on a $p$-dimensional identifiable tangent.  It is not a unique scientific
objective.  A design prioritizing CP resolution, a mass splitting, or a
particular estimable combination defines a different weight and may rank
measurements differently.

Table~\ref{tab:discussion-weight-dependence} illustrates this dependence for
the exact two-coordinate $(\theta_{13},\dcp)$ pure-state model in vacuum at
$L/E=520\,\mathrm{km/GeV}$.  The incompatibility coefficient is fixed at
$\beta=0.9702$ in all rows, but the scalar Holevo penalty varies with the
weight.  In the declared scaled coordinates, the nonintrinsic weights are
\begin{equation}
  W_{\mathrm{equal}}=\mathbb{I}_2,
  \qquad
  W_{\dcp}=\operatorname{diag}(1,9),
  \qquad
  W_{\theta_{13}}=\operatorname{diag}(9,1).
  \label{eq:discussion-example-weights}
\end{equation}
A simultaneous nonorthogonal coordinate transformation of the derivatives and
the covariant weight preserves both scalar costs and their ratio to better
than $3\times10^{-16}$.  The variation in the table is therefore a change of
scientific loss, not a coordinate artifact.

\begin{table}[t]
  \caption{
    Dependence of the exact two-parameter attainable cost on the declared
    scientific weight.  The last two rows assign a nine-to-one relative
    priority in the scaled $(\theta_{13},\dcp)$ coordinates.
  }
  \label{tab:discussion-weight-dependence}
  \centering
  \small
  \renewcommand{\arraystretch}{1.12}
  \begin{tabular}{@{}lrrr@{}}
    \toprule
    Weight
    & $C_{\mathrm S}$
    & $C_{\mathrm H}$
    & $C_{\mathrm H}/C_{\mathrm S}$\\
    \midrule
    Intrinsic metric, $W=\QFI$
    & $2.0000$ & $3.2195$ & $1.6097$\\
    Equal scaled-coordinate loss
    & $6.5320$ & $7.3787$ & $1.1296$\\
    Ninefold $\dcp$ priority
    & $57.7238$ & $59.2602$ & $1.0266$\\
    Ninefold $\theta_{13}$ priority
    & $7.5961$ & $10.8087$ & $1.4229$\\
    \bottomrule
  \end{tabular}
\end{table}

The attainability question is also separate from invertibility.  For the
six-coordinate mixed spectral model whose QFIM has numerical rank six at the
declared scaling and relative tolerance $10^{-10}$, the tolerance-qualified
numerical primal--dual bracket
\begin{equation}
  1.8675761
  \leq
  \frac{C_{\mathrm H}}{C_{\mathrm S}}
  \leq
  1.8675771
  \label{eq:discussion-mixed-holevo-ratio}
\end{equation}
shows that the inverse SLD QFIM understates the attainable metric-weighted
joint cost within the reported numerical bracket.  This is consistent with
the general role of the Holevo bound in multiparameter quantum estimation
\cite{Holevo2011,AlbarelliEtAl2020,
AlbarelliFrielDatta2019,DemkowiczEtAl2020}.
The bracket is supported by primal and dual iterates satisfying the declared
floating-point feasibility and residual tolerances.  It is not an
interval-arithmetic or exact-arithmetic proof of the displayed decimal
endpoints.  It concerns the declared local quantum model, weight, and
asymptotic measurement class and does not construct a neutrino detector that
implements a measurement attaining the bracketed Holevo value.

That distinction is especially important for oscillation experiments.  The
SLD QFIM is a state-level information tensor that upper-bounds the CFI of
every fixed POVM in the Loewner order.  In a multiparameter model, however,
this matrix statement does not imply that one common measurement attains all
of its directions simultaneously.  Charged-current flavor tagging,
reconstructed energy, event selection, backgrounds, and nuisance constraints
define a much smaller operational class than the class considered by an
optimized state-level bound.

In the controlled state-to-event chain, numerical rank six at the declared
tolerance survives every layer, but the nuisance-profiled geometric-mean
information retention is only $1.14\times10^{-2}$, the weakest generalized
fraction is $7.34\times10^{-7}$, and the corresponding worst generalized
standard-error inflation relative to state QFI is $1.17\times10^{3}$.  These
controlled numbers are not DUNE forecasts.  They demonstrate why numerical
rank and state-QFI magnitude cannot substitute for measurement accessibility,
detector response, or nuisance treatment.

\subsection{Local information versus global inference}
\label{subsec:discussion-local-global}

Fisher information is the curvature of a regular likelihood at a specified
point.  It is therefore well suited to local identifiability, infinitesimal
design comparisons, and diagnosing nuisance degradation.  It does not encode
the global topology of a periodic phase, a discrete mass ordering,
octant-related alternatives, parameter boundaries, or separated likelihood
modes.  A local inverse can be particularly misleading when its Gaussian
width is comparable to a physical domain or when the optimum lies on a
boundary.

The conditional public-DUNE pilot makes this limitation quantitative.  With
the nine Gaussian auxiliary-measurement centers fluctuated, the acceptance
region obtained from the nominal one-degree-of-freedom $68.27\%$ threshold
contains the true CP phase in $0.7767\pm0.0240$ of the pseudoexperiments, a
$3.91$-Monte-Carlo-standard-error departure from the nominal value.  One of
these nine declared auxiliary coordinates is the
inactive tau-background mode.  Its center fluctuates in the joint auxiliary
ensemble, but because its event-rate direction vanishes identically, its
profiled optimum equals its auxiliary center and it does not affect the
profiled event likelihood.  The seven-point $\theta_{23}$ grid produces much
stronger overcoverage because it selects the exact truth as the best grid
point in most pseudoexperiments.  These effects are not properties of the
local Fisher matrix.  They arise from the declared auxiliary-measurement
ensemble and the finite candidate-bank optimization.

Continuous profiling improves the optimization but does not, by itself,
complete the inference.  The expanded four-coordinate Asimov profile lowers
the conditional wrong-ordering deviance by $25.199$, or $5.14\%$, relative to
the finite bank.  Because the bank varies two continuous physics coordinates
whereas the expanded search varies four, this change combines finer
optimization with a larger profiled parameter set and is not a direct measure
of grid-discretization error.  The optimum reaches the upper $\theta_{13}$
boundary, uses nuisance pulls as large as $6.44$ constraint standard
deviations, and still fixes the two solar coordinates.  The resulting deviance
is conditional on the parameter domain, auxiliary-constraint prescription,
and profiled-coordinate set and cannot be interpreted as a mass-ordering
sensitivity.  Neutrino analyses with bounded, periodic, discrete, and
constrained parameters require an explicitly defined Monte Carlo
construction, as illustrated by the profiled Feldman--Cousins procedure
developed by NOvA \cite{NOvAProfiledFC2022}.

For a global experimental claim, a subsequent analysis must replace the
finite bank by continuous or demonstrably converged adaptive profiling over
every materially correlated coordinate in both orderings.  Coverage should
then be evaluated at multiple regular and degenerate truth points, with the
auxiliary-measurement ensemble, parameter domains, ordering treatment,
boundary rules, and test statistic fixed before examining the
pseudoexperiments.  The number of pseudoexperiments at each truth should be
chosen prospectively to achieve a declared Monte Carlo precision for the
coverage quantities being reported.  A clean rerun from the frozen
configuration and random-number seeds is also required.  Until those
conditions are met, the local Fisher spectra and the present finite-bank pilot
should be interpreted as diagnostics rather than confidence regions.

\subsection{Scope of the public-DUNE benchmark}
\label{subsec:discussion-dune-scope}

The public-DUNE result is stronger than a schematic event-count scaling but
narrower than an official experimental analysis.  Its forward model is tied
to 64 individually hashed public files and independently reproduces GLoBES
spectra, all six physics derivatives, and the local profiled-likelihood
curvature.  It contains four appearance/disappearance and horn-polarity rules,
264 selected reconstructed bins, right-sign and wrong-sign channels,
efficiencies, migration matrices, and backgrounds.  The AEDL files declare
nine shared normalization coordinates, of which eight generate active
event-rate directions.  The ninth is attached to eight declared
tau-background channels whose rates vanish because the supplied tau
charged-current cross-section columns are identically zero.  These checks
establish that the conclusions follow from the declared public configuration
rather than from an undocumented digitization or a circular comparison
between two paths through the same rate engine.

The same provenance does not supply the internal DUNE likelihood.  The public
release is an approximation to the TDR analysis and explicitly uses simplified
normalization systematics \cite{DUNEPublicConfig2021}.  It does not expose the
full covariance of flux, interaction-model, detector-calibration,
near-detector, and cross-sample uncertainties used in a collaboration
analysis.  The constant-density approximation, external oscillation domains,
and Gaussian normalization constraints are therefore features of the
benchmark definition, not claims about the ultimate experimental treatment.

The robustness analysis is likewise broad but not exhaustive.  Its
176-record physics envelope is the documented union of a 112-point
ordering--octant--phase grid, a 48-point CP-phase refinement, and sixteen
one-axis NuFIT endpoint calculations.  The additional analysis records probe
density, reconstructed binning, normalization widths, rule subsets, and exact
CP periodicity \cite{NuFIT2024}.  This is an envelope stress test, not a scan
of the full correlated oscillation volume or a full systematic covariance.
The finding that all 176 physics records have numerical rank six at relative
tolerance $10^{-10}$ is consequently a documented local property of this
configuration under the declared dimensionless coordinate scaling.  It is not
proof of uniform global identifiability, official DUNE precision, or ordering
coverage.

\subsection{Relation to recent neutrino quantum-estimation studies}
\label{subsec:discussion-literature}

Quantum estimation was applied to two-flavor neutrino oscillations by
Nogueira \emph{et al.}, who compared state QFI with the Fisher information of
mass and flavor measurements for the mixing angle
\cite{NogueiraEtAl2017}.  More recently, Ignoti \emph{et al.} compared state
QFI with ideal flavor-measurement information for $\dcp$ and examined the
information gain near the second oscillation maximum
\cite{IgnotiEtAl2025}.  Yadav, Subba, and Shi studied single-parameter QFI
profiles for $\theta_{23}$, $\dcp$, and $\Delta m^2_{31}$ in long-baseline
three-flavor oscillations \cite{YadavSingleQFI2026}.  These studies establish
important scalar and fixed-measurement precedents; neither single-parameter
neutrino QFI nor the distinction between QFI and flavor-measurement
information is claimed here as new.

Chundawat and Li compared the information content of coherent reactor
antineutrino evolution with the incoherent mixed-state description relevant
to solar neutrinos for $\theta_{12}$ and $\Delta m^2_{21}$
\cite{ChundawatLi2026}.  Frugiuele \emph{et al.} computed state QFI and ideal
flavor-projection CFI for the standard oscillation parameters using
accelerator muon-neutrino, accelerator muon-antineutrino, and reactor
electron-antineutrino states in vacuum \cite{FrugiueleEtAl2026}.  Their
analysis explicitly demonstrates parameter-dependent differences between
state information and flavor-accessible information.  The broad observation
that flavor projection may fail to attain state QFI is therefore not a
novelty claim of the present paper.

At the event level, Chundawat, Delgadillo, and Li compare intrinsic state
information about $\dcp$ with Poisson Fisher information obtained from
reconstructed T2K and NO$\nu$A event spectra
\cite{ChundawatEtAl2026}.  Their analysis is direct prior art for connecting
state information with experimentally reconstructed CP information.  The
extension developed here is joint and structural: it treats all six
oscillation coordinates, tests spectral kernel intersections, includes
nuisance Schur complementation, validates exact likelihood curvature, and
distinguishes numerical rank from effective information strength.

For multiparameter geometry, Huang \emph{et al.} analyze the six-coordinate
fixed-energy QFIM, including off-diagonal entries and
parameter-dependent basis transformations \cite{HuangEtAl2026}.  Yadav,
Subba, and Shi analyze the three-coordinate QFIM for
$(\theta_{23},\dcp,\Delta m^2_{31})$ and use fidelity and QFIM diagnostics to
study quantum states associated with probability-degenerate parameter points
\cite{YadavDegeneracy2026}.  Our reassessment reproduces the displayed
fixed-energy calculation of Ref.~\cite{HuangEtAl2026} but changes its
interpretation through the structural pure-qutrit rank bound, the
moving-frame connection, the CP tensor pullback, and explicit attainable
Holevo costs.  Neither the use of a multiparameter neutrino QFIM nor the
observation that probability-degenerate points may correspond to distinct
quantum states is claimed here as new.

The scoped contribution of this work is instead the validated integration of
five elements in one reproducible framework:

\begin{enumerate}
  \item representation-covariant quantum geometry;
  \item the structural pure-state rank bound and exact positive-weight
        spectral kernel-intersection criterion;
  \item weighted multiparameter attainability using exact pure-state Holevo
        results and tolerance-qualified numerical primal--dual brackets;
  \item measurement- and detector-level information loss; and
  \item an independently validated, nuisance-profiled public-configuration
        likelihood.
\end{enumerate}

This is a contribution based on the combined framework and the verified
connections among its layers, not a priority claim for each ingredient in
isolation.

\subsection{Remaining extensions}
\label{subsec:discussion-extensions}

Three extensions would materially change the scope of the conclusions.
First, a full experimental analysis requires a collaboration-grade systematic
model, including correlated shape uncertainties, interaction-model
uncertainties, detector-calibration effects, and near-to-far constraints.
Second, global inference requires continuous multi-coordinate fits and
calibrated multi-truth confidence constructions rather than Fisher inverses or
a finite candidate bank.  Third, closing the quantum-to-detector gap requires
a physical analysis of measurements implementable through neutrino production
and detection, including possible decoherence, wave-packet separation, and
open-system effects where they are experimentally relevant.  The state-QFI
and Holevo results should otherwise be read as quantum benchmarks, not as
proposed detector performance.

Within those boundaries, the present results are sufficient for the paper's
central methodological conclusion: a reliable precision analysis must specify
the state family, estimable functions, coordinate scales, scientific weight,
admissible measurement, detector response, nuisance model, and sampling
construction.  Omitting any of these layers can turn an exact local matrix
calculation into an unjustified experimental claim.

\section{Conclusions}
\label{sec:conclusions}

We have developed a measurement-explicit framework for applying quantum
estimation theory to three-flavor neutrino oscillations.  Its central
methodological point is that three logically different questions must be kept
separate: what local information is encoded in the propagated quantum state,
what part of that information is accessible through a specified admissible
measurement, and what information remains in reconstructed event data after
detector response and nuisance profiling.  The SLD QFIM is therefore not, by
itself, an experimental forecast.  It is a local state-level information
tensor that upper-bounds the CFI of every fixed POVM, while its relevance to
simultaneous precision depends on identifiability, multiparameter
attainability, and the complete experimental measurement chain.

At the geometric level, we showed that a passive parameter-dependent change of
basis requires the associated connection term and leaves the projected quantum
geometric tensor invariant.  Interpreting the same parameter-dependent unitary
as an additional physical encoding instead defines a different state family
and therefore a different statistical model.  We also derived the tensorial CP
pullback, including the reflection Jacobian on every parameter index and the
matter-potential reversal required when relating neutrino and antineutrino
propagation in matter.  These results remove representation artifacts before
any precision bound is interpreted.

At one fixed propagation setting---fixed baseline, energy, source flavor, and
matter profile---the propagated three-flavor pure state is a qutrit and
therefore has at most four real local tangent directions.  A QFIM written in
six oscillation coordinates is consequently singular, and its ordinary inverse
cannot represent six simultaneous precision bounds.  A Moore--Penrose inverse
is meaningful only after the estimable parameter functions have been verified
to annihilate the exact kernel and a loss function has been specified on the
identifiable tangent space.

Spectral aggregation changes the statistical model.  For resolved components
with positive parameter-independent active weights, the kernel of the
aggregate information is exactly the intersection of the conditional kernels.
Broadband data can therefore restore local rank when the conditional null
directions rotate with energy or channel.  Retaining a spectral label and
tracing it out are nevertheless different statistical experiments related by
data processing.  Moreover, restored rank does not guarantee useful precision:
numerical rank must be reported with a declared coordinate scaling and
floating-point tolerance, while practical information strength requires
scaled eigenvalues, condition numbers, or an explicitly defined effective-rank
threshold.

Identifiability alone does not guarantee joint attainability.  The imaginary
part of the quantum geometric tensor diagnoses local SLD incompatibility.  Our
exact pure-state Holevo results and tolerance-qualified numerical
primal--dual brackets for mixed states show that an invertible QFIM can
substantially understate the attainable weighted multiparameter cost.  The
numerical brackets are supported by primal and dual iterates satisfying the
declared floating-point feasibility and residual tolerances; they are not
interval-arithmetic or exact-arithmetic proofs of their displayed decimal
endpoints.  The relevant scalar bound must specify the identifiable tangent
space, weight matrix, local-unbiasedness conditions, and asymptotic measurement
class.  Likewise, under the declared parameter-independent measurement and
detector maps, flavor projection, energy smearing, inefficiency, backgrounds,
Poisson sampling, and nuisance profiling separate state $\QFI$, flavor
$\CFI$, and reconstructed-event information.  These layers cannot be
interchanged without changing the statistical experiment.

The independent calculation reproduces the source result at its stated monochromatic vacuum
benchmark but changes its interpretation.  The pure-qutrit theorem imposes
structural rank at most four, and the full-precision calculation saturates that
bound numerically, with rank four and nullity two at the declared scaling and
tolerance.  The matrix reconstructed from the printed decimal entries is
indefinite, and its ordinary inverse is not a valid simultaneous
six-parameter Cramér--Rao bound.  Separate inversions of the angle and mass
blocks describe fixed-complement submodels rather than a regular
six-coordinate model, and invertibility of either block does not establish
multiparameter attainability.

The independent public-DUNE implementation then supplies a deliberately
scoped experimental case study.  It reproduces all 640 signal/background
reference entries from the checksum-locked public configuration with relative
$L_2$ error $3.9\times10^{-16}$ and reproduces the nuisance-profiled local
GLoBES likelihood curvature with relative error $3.3\times10^{-6}$.  The
configuration contains nine declared shared normalization coordinates but
only eight active event-rate directions: the ninth is attached to eight
declared tau-background channels whose predicted rates vanish for the supplied
cross-section tables.  Across the documented 176-record physics envelope, the
264-bin reconstructed-event information has numerical rank six at relative
tolerance $10^{-10}$ after the declared dimensionless coordinate scaling,
whereas replacing the four spectra by four rule totals gives numerical rank
four.  At the weakest selected record, however, the event information has
effective rank five at threshold $10^{-6}$ and local angular widths of order
radians.  The result therefore demonstrates numerical spectral rank
restoration at the declared tolerance and horn-polarity and
appearance/disappearance complementarity within the simplified public
configuration, not official DUNE precision or a surrogate for the
collaboration's full likelihood and systematic covariance.

Finally, the conditional nonlocal likelihood study exposes effects that no
local Fisher matrix can determine.  Its finite candidate bank exhibits
coarse-grid overcoverage, and the reported coverage depends materially on
whether the Gaussian auxiliary-measurement centers are conditioned or
fluctuated.  The expanded four-coordinate Asimov profile finds a lower
wrong-ordering deviance than the two-coordinate bank, but that change combines
finer optimization with two additional profiled coordinates.  Its optimum
reaches the upper $\theta_{13}$ boundary, and the result remains conditional on
the bounded domain, auxiliary constraints, and fixed solar coordinates.
These calculations are
diagnostic stress tests, not coverage-calibrated ordering significances or
global confidence constructions.

Global experimental claims would require continuous or demonstrably converged
adaptive profiling over all materially correlated coordinates in both
orderings, multiple regular and degenerate truth points, a prospectively
defined auxiliary-measurement ensemble and boundary treatment, and a
predeclared confidence construction tested with enough pseudoexperiments to
reach a stated Monte Carlo precision.  They would additionally require a
collaboration-grade systematic model with correlated shape uncertainties and
near-to-far constraints.  Until those requirements are met, the Fisher
spectra, conditional candidate bank, and expanded four-coordinate Asimov
profile should be interpreted as local and diagnostic quantities.

Taken together, the results establish a reproducible route from
representation-invariant quantum geometry to experimentally accessible local
information while making the failure modes explicit.  The individual
ingredients have important precedents; the contribution here is their
validated integration into a single workflow that distinguishes exact
theorems and analytic results, tolerance-qualified numerical computations,
scoped public-configuration evidence, and diagnostic simulation.  That
workflow provides a disciplined basis for comparing quantum information
bounds, measurement choices, spectral complementarity, and detector
likelihoods in neutrino oscillations and in other multiparameter
quantum-sensing problems.

\appendix

\section{PMNS conventions and analytic derivatives}
\label{app:pmns-derivatives}

This appendix fixes the phase, index, and differentiation conventions used in
all state-, measurement-, and event-level calculations.  The
oscillation-coordinate vector is
\begin{equation}
 \bm{\lambda}
 =
 \bigl(
   \theta_{12},
   \theta_{13},
   \theta_{23},
   \dcp,
   \Delta m_{21}^{2},
   \Delta m_{31}^{2}
 \bigr)^{\mathsf T},
 \label{eq:app-coordinate-order}
\end{equation}
where the angles and $\dcp$ are in radians and the mass-squared splittings
are in $\mathrm{eV}^{2}$.  Flavor indices
$\alpha,\beta\in\{e,\mu,\tau\}$ label rows and mass indices
$i,j\in\{1,2,3\}$ label columns.  Majorana phases are omitted because they
cancel from oscillation amplitudes.

\subsection{Mixing-matrix convention}

Writing $s_{ij}=\sin\theta_{ij}$ and
$c_{ij}=\cos\theta_{ij}$, we use the ordered factorization
\begin{equation}
 U(\bm{\lambda})=R_{23}U_{13}(\dcp)R_{12},
 \label{eq:app-pmns-factorization}
\end{equation}
with
\begin{align}
 R_{12}&=
 \begin{pmatrix}
  c_{12}&s_{12}&0\\
  -s_{12}&c_{12}&0\\
  0&0&1
 \end{pmatrix},
 &
 U_{13}&=
 \begin{pmatrix}
  c_{13}&0&s_{13}e^{-i\dcp}\\
  0&1&0\\
  -s_{13}e^{i\dcp}&0&c_{13}
 \end{pmatrix},
 \nonumber\\[3pt]
 R_{23}&=
 \begin{pmatrix}
  1&0&0\\
  0&c_{23}&s_{23}\\
  0&-s_{23}&c_{23}
 \end{pmatrix}.
 \label{eq:app-pmns-factors}
\end{align}
Thus, for example,
$U_{e3}=s_{13}e^{-i\dcp}$,
$U_{\mu3}=s_{23}c_{13}$, and
\begin{equation}
 U_{\mu1}=-s_{12}c_{23}
             -c_{12}s_{23}s_{13}e^{i\dcp},
 \qquad
 U_{\tau2}=-c_{12}s_{23}
             -s_{12}c_{23}s_{13}e^{i\dcp}.
 \label{eq:app-pmns-entry-checks}
\end{equation}
These entries make the row, column, and CP-phase conventions unambiguous.

The absolute mass origin is removed as an overall propagation phase.  We set
\begin{equation}
 M^{2}=\operatorname{diag}
 \bigl(0,\Delta m_{21}^{2},\Delta m_{31}^{2}\bigr),
 \label{eq:app-mass-convention}
\end{equation}
so $\Delta m_{31}^{2}>0$ denotes normal ordering and
$\Delta m_{31}^{2}<0$ denotes inverted ordering in the coordinate convention
used here.

\subsection{Exact derivatives of the PMNS matrix}

Let $U_a\equiv\partial U/\partial a$.  Differentiating the factors in
Eq.~\eqref{eq:app-pmns-factorization} gives
\begin{align}
 R'_{12}&=
 \begin{pmatrix}
  -s_{12}&c_{12}&0\\
  -c_{12}&-s_{12}&0\\
  0&0&0
 \end{pmatrix},
 &
 R'_{23}&=
 \begin{pmatrix}
  0&0&0\\
  0&-s_{23}&c_{23}\\
  0&-c_{23}&-s_{23}
 \end{pmatrix},
 \nonumber\\[3pt]
 U_{13,\theta_{13}}&=
 \begin{pmatrix}
  -s_{13}&0&c_{13}e^{-i\dcp}\\
  0&0&0\\
  -c_{13}e^{i\dcp}&0&-s_{13}
 \end{pmatrix},
 &
 U_{13,\dcp}&=
 \begin{pmatrix}
  0&0&-i s_{13}e^{-i\dcp}\\
  0&0&0\\
  -i s_{13}e^{i\dcp}&0&0
 \end{pmatrix}.
 \label{eq:app-factor-derivatives}
\end{align}
The four nonzero mixing-matrix derivatives are therefore
\begin{align}
 U_{\theta_{12}}&=R_{23}U_{13}R'_{12},
 &
 U_{\theta_{13}}&=R_{23}U_{13,\theta_{13}}R_{12},
 \nonumber\\
 U_{\theta_{23}}&=R'_{23}U_{13}R_{12},
 &
 U_{\dcp}&=R_{23}U_{13,\dcp}R_{12},
 \label{eq:app-pmns-derivatives}
\end{align}
while
\begin{equation}
 U_{\Delta m_{21}^{2}}=U_{\Delta m_{31}^{2}}=0.
 \label{eq:app-pmns-mass-derivatives}
\end{equation}
Unitarity supplies a useful analytic and numerical check:
\begin{equation}
 U^{\dagger}U=\mathbb{I},
 \qquad
 U^{\dagger}U_a+U_a^{\dagger}U=0.
 \label{eq:app-tangent-unitarity}
\end{equation}
Consequently,
\begin{equation}
 \Gamma_a
 \equiv
 U^{\dagger}U_a
 \label{eq:app-pmns-moving-connection}
\end{equation}
is anti-Hermitian and is the moving-basis connection that appears when the
PMNS basis is treated as a passive parameter-dependent representation.

In vacuum, the antineutrino convention is
\begin{equation}
 \overline U(\bm{\lambda})
 =
 U(\bm{\lambda})^{*}
 =
 U(\mathsf{R}\bm{\lambda}),
 \qquad
 \mathsf{R}
 \equiv
 J_{\mathrm{CP}}
 =
 \operatorname{diag}(1,1,1,-1,1,1),
 \label{eq:app-cp-matrix-map}
\end{equation}
where $\mathsf R$ is the Jacobian of the coordinate reflection
$\dcp\mapsto-\dcp$.  Differentiating the composed map gives the chain-rule
pullback
\begin{equation}
 \partial_i\overline U(\bm{\lambda})
 =
 \sum_j
 \mathsf{R}_{ji}
 U_j(\mathsf{R}\bm{\lambda}).
 \label{eq:app-cp-derivative-map}
\end{equation}
In particular, the $\dcp$ derivative acquires the additional minus sign.
Constant-density antineutrino propagation also reverses the matter potential;
that extension is treated separately in
Appendix~\ref{app:matter-frechet}.

\subsection{Vacuum amplitudes and probability derivatives}

For $L/E$ in $\mathrm{km/GeV}$, define
\begin{equation}
 \kappa=2(1.267)\frac{L}{E},
 \qquad
 D=\operatorname{diag}\!\left(
  1,
  e^{-i\kappa\Delta m_{21}^{2}},
  e^{-i\kappa\Delta m_{31}^{2}}
 \right).
 \label{eq:app-vacuum-phase}
\end{equation}
The flavor-space propagator and the state produced from source flavor
$\alpha$ are
\begin{equation}
 S=UDU^{\dagger},
 \qquad
 \lvert\psi^{(\alpha)}\rangle=S\lvert\nu_{\alpha}\rangle,
 \qquad
 \mathcal{A}_{\beta\alpha}=S_{\beta\alpha}
 =\sum_i U_{\beta i}D_iU^{*}_{\alpha i}.
 \label{eq:app-vacuum-amplitude}
\end{equation}
For $a\in\{\theta_{12},\theta_{13},\theta_{23},\dcp\}$, the exact
amplitude derivative is
\begin{equation}
 \partial_a\mathcal{A}_{\beta\alpha}
 =
 \sum_i\left[
  (U_a)_{\beta i}D_iU^{*}_{\alpha i}
  +U_{\beta i}D_i(U_a)^{*}_{\alpha i}
 \right].
 \label{eq:app-vacuum-angle-derivative}
\end{equation}
For $j=2,3$, the mass-splitting derivatives are instead
\begin{equation}
 \frac{\partial\mathcal{A}_{\beta\alpha}}
      {\partial(\Delta m_{j1}^{2})}
 =
 -i\kappa\,
 U_{\beta j}e^{-i\kappa\Delta m_{j1}^{2}}U^{*}_{\alpha j}.
 \label{eq:app-vacuum-mass-derivative}
\end{equation}
For antineutrinos one replaces $U$ by $U^{*}$, so that
$\overline S=U^{*}DU^{\mathsf{T}}$; the vacuum mass phases themselves are
unchanged.

Flavor probabilities and their derivatives with respect to any of the six
coordinates follow without numerical differentiation:
\begin{equation}
 P_{\alpha\to\beta}
 =
 \lvert\mathcal{A}_{\beta\alpha}\rvert^{2},
 \qquad
 \partial_i P_{\alpha\to\beta}
 =
 2\,\operatorname{Re}\!\left[
   \mathcal{A}_{\beta\alpha}^{*}
   \partial_i\mathcal{A}_{\beta\alpha}
 \right].
 \label{eq:app-probability-derivative}
\end{equation}
Equations~\eqref{eq:app-tangent-unitarity} and
\eqref{eq:app-probability-derivative} imply the conservation checks
\begin{equation}
2\operatorname{Re}
 \langle\psi^{(\alpha)}
 \vert\partial_i\psi^{(\alpha)}\rangle
 =0,
 \qquad
 \sum_{\beta}P_{\alpha\to\beta}=1,
 \qquad
 \sum_{\beta}\partial_iP_{\alpha\to\beta}=0.
 \label{eq:app-normalization-identities}
\end{equation}

\subsection{Numerical check}

Fourth-order centered five-point differences test the analytic PMNS, vacuum
state, and probability derivatives over 162 PMNS points and 108 propagation
configurations spanning both orderings, all source flavors, both polarities,
and three values of $L/E$.  The angular step is $10^{-4}$; the
mass-splitting steps are $10^{-7}\,\mathrm{eV}^2$ and
$10^{-6}\,\mathrm{eV}^2$ for $\Delta m^2_{21}$ and
$\Delta m^2_{31}$.  The largest guarded discrepancies are
$1.92\times10^{-12}$ for PMNS derivatives,
$1.17\times10^{-11}$ for propagated states, and
$3.64\times10^{-10}$ for probabilities.  PMNS unitarity, tangent unitarity,
state normalization, probability conservation, CP reflection, and
$2\pi$-periodicity all agree within $9.79\times10^{-16}$, except for the
finite-difference normalization-tangent residual
$4.55\times10^{-13}$.  Constant-density derivatives are checked separately
in Appendix~\ref{app:matter-frechet}.

\section{Support regularity and numerical Holevo certification}
\label{app:holevo-certification}

This appendix records the support and floating-point criteria that supplement
the Holevo definitions and semidefinite formulation in
Sec.~\ref{sec:attainability}.  The calculation is performed at a
support-regular point: the rank of $\rho$ is locally constant and, with
$P_0$ the projector onto $\ker\rho$,
\begin{equation}
  P_0(\partial_i\rho)P_0=0.
  \label{eq:app-support-regularity}
\end{equation}
In an eigenbasis $\rho=\sum_a p_a|a\rangle\langle a|$, the SLD is evaluated
from
\begin{equation}
  \langle a|L_i|b\rangle
  =
  \frac{2\langle a|\partial_i\rho|b\rangle}{p_a+p_b}
  \quad (p_a+p_b>0),
  \label{eq:app-sld-spectral-form}
\end{equation}
with unconstrained kernel--kernel entries set to zero.  Those entries do not
affect the SLD information or Holevo cost under
Eq.~\eqref{eq:app-support-regularity}.  Rank-changing points require a
separate nonregular analysis.

For a singular requested QFIM, no joint bound is assigned to the original
coordinate vector.  A numerical representative of the estimable quotient is
selected only after declaring a nonsingular scale matrix $S$.  If
$S^{\mathsf T}\QFI S=V_+\Lambda_+V_+^{\mathsf T}+V_0\Lambda_0V_0^{\mathsf T}$,
the retained Jacobian is $J_+=SV_+$, where $\Lambda_+$ contains only
eigenvalues above the declared relative threshold.  The reduced derivatives
and information are
\begin{equation}
  \partial_{\eta_a}\rho
  =
  \sum_i(J_+)_{ia}\partial_i\rho,
  \qquad
  \QFI_\eta
  =
  J_+^{\mathsf T}\QFI J_+
  =
  \Lambda_+.
  \label{eq:app-identifiable-jacobian}
\end{equation}
This chooses a numerical complement to the exact kernel; its orientation may
depend on $S$ and the tolerance, although the exact quotient tangent does
not.

For exactly feasible primal and dual conic points, weak duality gives
$C_{\mathrm{dual}}\leq C_{\mathrm H}\leq C_{\mathrm{primal}}$, which may be
tightened by $C_{\mathrm S}\leq C_{\mathrm H}\leq C_{\mathrm D}$.  No
floating-point objective is described as an exact bound without an exactly
feasible correction or interval-arithmetic verification.  Numerical bracket
eligibility is assessed after QFIM whitening: the maximum
local-unbiasedness and dual-affine residuals must not exceed $10^{-8}$;
the normalized primal and dual cone margins
$\lambda_{\min}(M)/\max\{\|M\|_2,1\}$ must be at least
$-10^{-10}$ and $-10^{-9}$, respectively; and
\[
  C_{\mathrm{primal}}-C_{\mathrm{dual}}
  \geq
  -10^{-9}\max\{|C_{\mathrm{primal}}|,1\}.
\]
QFIM-rank and state-support selection use relative thresholds $10^{-10}$
and $10^{-12}$.  The KKT residual is recorded as a convergence diagnostic,
not as an independent feasibility condition.

Across 200 three-coordinate solver attempts, 187 meet these criteria; the
best feasible attempt is retained for each model.  All 40 retained brackets
contain the independently derived pure-qutrit value $C_{\mathrm H}=5$, with
maximum width $1.50\times10^{-6}$, maximum local-unbiasedness residual
$9.66\times10^{-15}$, maximum dual-affine residual $2.57\times10^{-16}$,
and maximum KKT residual $2.61\times10^{-8}$.  The mixed-spectral solve
reported in Table~\ref{tab:mixed_spectral_holevo_certificate} has bracket
width $6.02\times10^{-6}$, local-unbiasedness residual
$2.62\times10^{-14}$, and KKT residual $1.08\times10^{-9}$.  These are
tolerance-qualified numerical brackets, not exact-arithmetic certificates.

\section{Constant-density matter evolution and Fr\'echet derivatives}
\label{app:matter-frechet}

This appendix gives the constant-density propagation and derivative formulas
used throughout the analytic and controlled numerical calculations.  The
construction preserves the PMNS and mass-splitting conventions of
Appendix~\ref{app:pmns-derivatives}, differentiates the matrix exponential
without assuming commutativity, and implements the CP pullback of
Sec.~\ref{sec:geometry} with the required reversal of the matter potential.
The checksum-locked public-DUNE comparison instead uses the separately
declared numerical constants compiled into GLoBES 3.2.18, as documented in
Appendix~\ref{app:dune-configuration}; the two numerical-unit conventions are
not mixed within a calculation.  Baseline, energy, density, and electron
fraction are treated as declared experimental controls unless they are
explicitly promoted to nuisance coordinates below.

\subsection{Effective mass-squared Hamiltonian and unit convention}

Let
\begin{equation}
  \bm\lambda
  =
  \bigl(
    \theta_{12},
    \theta_{13},
    \theta_{23},
    \dcp,
    \Delta m_{21}^{2},
    \Delta m_{31}^{2}
  \bigr)^{\mathsf T},
  \qquad
  M_{0}^{2}
  =
  \operatorname{diag}
  \bigl(
    0,
    \Delta m_{21}^{2},
    \Delta m_{31}^{2}
  \bigr),
  \label{eq:appD-parameter-vector}
\end{equation}
where the unobservable common mass-squared offset has been removed.  We use a
polarity label $s=+1$ for neutrinos and $s=-1$ for antineutrinos and
define
\begin{equation}
  U_s(\bm\lambda)
  =
  \begin{cases}
    U(\bm\lambda), & s=+1,\\
    U(\bm\lambda)^{*}, & s=-1.
  \end{cases}
  \label{eq:appD-polarity-mixing}
\end{equation}

For constant electron density, the flavor-basis Hamiltonian is
\begin{align}
  H_s
  &=
  \frac{1}{2E}\,\mathsf M_s^{2},
  \\
  \mathsf M_s^{2}
  &=
  U_s M_0^{2}U_s^{\dagger}
  +s\,a\,\Pi_e,
  \qquad
  \Pi_e
  =
  \operatorname{diag}(1,0,0),
  \label{eq:appD-effective-mass}
\end{align}
with the positive physical matter-potential magnitude
\begin{equation}
  a=2E V_{\mathrm{CC}}.
\end{equation}
For clarity, define the dimensionless numerical controls
\begin{equation}
  \widehat E
  \equiv
  \frac{E}{\mathrm{GeV}},
  \qquad
  \widehat\rho
  \equiv
  \frac{\rho}{\mathrm{g\,cm}^{-3}}.
  \label{eq:appD-dimensionless-matter-controls}
\end{equation}
Then
\begin{equation}
  a
  =
  c_a\,Y_e\,\widehat\rho\,\widehat E\,
  \mathrm{eV}^{2},
  \qquad
  c_a=1.526\times10^{-4}.
  \label{eq:appD-matter-potential}
\end{equation}
Thus physical antineutrino propagation requires both
$U\mapsto U^{*}$ and reversal of the signed matter contribution,
$a\mapsto-a$.  Applying only the first operation does not give the
CP-conjugate matter Hamiltonian.

For later CP identities, it is useful to extend the notation algebraically.
Whenever a second matter argument $q\in\mathbb{R}$ is displayed explicitly,
we define
\begin{equation}
  \mathsf M_s^{2}(\bm\lambda;q)
  \equiv
  U_s(\bm\lambda)M_0^{2}U_s(\bm\lambda)^{\dagger}
  +s\,q\,\Pi_e.
  \label{eq:appD-signed-matter-extension}
\end{equation}
The physical model uses $q=a>0$.  Negative $q$ in a pullback formula is a
signed algebraic continuation used to express the reversal of the matter
potential; it is not a second physical definition of the positive magnitude
$a$.

For $L$ in km, $E$ in GeV, and mass-squared quantities in
$\mathrm{eV}^{2}$, we write
\begin{equation}
  \kappa(L,E)
  =
  c_{\phi}\frac{L}{E},
  \qquad
  c_{\phi}=2(1.267),
  \qquad
  A_s=-i\kappa\mathsf M_s^{2},
  \label{eq:appD-generator}
\end{equation}
so that
\begin{equation}
  S_s(L,E;\bm\lambda,\rho,Y_e)
  =
  e^{A_s},
  \qquad
  \lvert\psi_{s,\alpha}\rangle
  =
  S_s\lvert\nu_\alpha\rangle.
  \label{eq:appD-propagator}
\end{equation}
The factor $c_{\phi}=2(1.267)$ is the amplitude-phase coefficient; it gives
the usual $1.267\,\Delta m^2L/E$ argument after an oscillation probability is
formed.  Equations~\eqref{eq:appD-matter-potential} and
\eqref{eq:appD-generator} define the rounded paper-wide convention used for
the analytic and controlled benchmarks.  The public-DUNE external comparison
uses the distinct GLoBES constants declared in
Eq.~\eqref{eq:appE-globes-constants}.

\subsection{Exact Hamiltonian derivatives}

All derivatives with respect to the six oscillation coordinates are taken at
fixed baseline, energy, density, and electron fraction.  For any mixing
coordinate
$i\in\{\theta_{12},\theta_{13},\theta_{23},\dcp\}$,
\begin{equation}
  \partial_i\mathsf M_s^{2}
  =
  (\partial_iU_s)M_0^{2}U_s^{\dagger}
  +
  U_sM_0^{2}(\partial_iU_s)^{\dagger}.
  \label{eq:appD-angle-hamiltonian-derivatives}
\end{equation}
For antineutrinos,
\begin{equation}
  \partial_iU_s=(\partial_iU)^{*}.
\end{equation}
The mass-splitting directions are
\begin{align}
  \partial_{\Delta m_{21}^{2}}\mathsf M_s^{2}
  &=
  U_s\operatorname{diag}(0,1,0)U_s^{\dagger},
  \\
  \partial_{\Delta m_{31}^{2}}\mathsf M_s^{2}
  &=
  U_s\operatorname{diag}(0,0,1)U_s^{\dagger}.
  \label{eq:appD-mass-hamiltonian-derivatives}
\end{align}
Every matrix in Eqs.~\eqref{eq:appD-angle-hamiltonian-derivatives} and
\eqref{eq:appD-mass-hamiltonian-derivatives} is Hermitian, as required for a
tangent to a Hermitian Hamiltonian family.  Because $\kappa$ is fixed under
these six derivatives,
\begin{equation}
  \partial_iA_s
  =
  -i\kappa\,\partial_i\mathsf M_s^{2}.
  \label{eq:appD-generator-derivatives}
\end{equation}

Although $\rho$ and $Y_e$ are not part of the standard six-coordinate
model, they can be promoted to nuisance coordinates without numerical
differentiation.  At fixed $L$ and $E$, differentiation with respect to
the dimensional density coordinate $\rho$, measured in
$\mathrm{g\,cm}^{-3}$, and the dimensionless electron fraction $Y_e$
gives
\begin{align}
  \partial_\rho\mathsf M_s^{2}
  &=
  s\,c_aY_e\widehat E\,
  \frac{\mathrm{eV}^{2}}{\mathrm{g\,cm}^{-3}}\,
  \Pi_e,
  &
  \partial_{Y_e}\mathsf M_s^{2}
  &=
  s\,c_a\widehat\rho\,\widehat E\,
  \mathrm{eV}^{2}\,
  \Pi_e.
  \label{eq:appD-matter-control-derivatives}
\end{align}
These expressions are equivalent to differentiating
Eq.~\eqref{eq:appD-matter-potential} in the declared numerical-unit
convention.  These directions can therefore be included in the same
Fr\'echet calculation and, if desired, in a subsequent nuisance Schur
complement.

\subsection{Fr\'echet derivative of the propagator}

For general noncommuting square matrices $A$ and $E$ of the same
dimension, the Fr\'echet derivative of the matrix exponential is
\cite{Higham2008,AlMohyHigham2009}
\begin{equation}
  \mathcal L_{\exp}(A,E)
  \equiv
  \left.
  \frac{\mathrm d}{\mathrm d\epsilon}
  e^{A+\epsilon E}
  \right|_{\epsilon=0}
  =
  \int_{0}^{1}
  e^{(1-t)A}E e^{tA}\,
  \mathrm dt.
  \label{eq:appD-frechet-integral}
\end{equation}
Consequently,
\begin{equation}
  \partial_iS_s
  =
  \mathcal L_{\exp}(A_s,\partial_iA_s),
  \qquad
  \lvert\partial_i\psi_{s,\alpha}\rangle
  =
  \mathcal L_{\exp}(A_s,\partial_iA_s)
  \lvert\nu_\alpha\rangle.
  \label{eq:appD-state-frechet}
\end{equation}
No replacement of
$\mathcal L_{\exp}(A_s,\partial_iA_s)$ by
$e^{A_s}\partial_iA_s$ is made: that simplification is valid only when
\begin{equation}
  [A_s,\partial_iA_s]=0.
\end{equation}

An algebraically independent representation, used for numerical verification, is
the block-exponential identity
\begin{equation}
  \exp\!\begin{pmatrix}
    A&E\\
    0&A
  \end{pmatrix}
  =
  \begin{pmatrix}
    e^A&\mathcal L_{\exp}(A,E)\\
    0&e^A
  \end{pmatrix}.
  \label{eq:appD-block-frechet}
\end{equation}
For the $3\times3$ matrices used here, the left-hand side is a
$6\times6$ matrix exponential.  Its upper-right $3\times3$ block extracts
the derivative and therefore checks the production Fr\'echet routine without
calling that routine again.

For completeness, the same derivative has a continuous divided-difference
form.  Let
\begin{equation}
  V^{\dagger}\mathsf M_s^{2}V
  =
  \operatorname{diag}(\mu_1,\mu_2,\mu_3),
  \qquad
  B_i
  =
  V^{\dagger}(\partial_i\mathsf M_s^{2})V.
  \label{eq:appD-matter-eigendecomposition}
\end{equation}
Then
\begin{equation}
  V^{\dagger}(\partial_iS_s)V
  =
  F\mathbin{\circ}B_i,
  \label{eq:appD-divided-difference}
\end{equation}
where $\circ$ denotes the Hadamard product and
\begin{equation}
  F_{rs}
  =
  \begin{cases}
  \displaystyle
  \frac{
    e^{-i\kappa\mu_r}-e^{-i\kappa\mu_s}
  }{
    \mu_r-\mu_s
  },
  & r\ne s,\\[9pt]
  -i\kappa e^{-i\kappa\mu_r},
  & r=s.
  \end{cases}
  \label{eq:appD-divided-difference-kernel}
\end{equation}
The off-diagonal expression tends continuously to the diagonal one as
$\mu_s\to\mu_r$; an exact or near degeneracy therefore does not invalidate
the derivative.  The production calculation uses
Eq.~\eqref{eq:appD-state-frechet} rather than differentiating eigenvectors,
thereby avoiding arbitrary phases and basis choices inside degenerate
eigenspaces.

\subsection{Unitarity, probabilities, and limiting identities}

Hermiticity of $\mathsf M_s^{2}$ makes $A_s$ anti-Hermitian, and hence
\begin{equation}
  S_s^{\dagger}S_s=\mathbb{I},
  \qquad
  (\partial_iS_s)^{\dagger}S_s
  +
  S_s^{\dagger}(\partial_iS_s)
  =
  0.
  \label{eq:appD-unitarity-tangent}
\end{equation}
For every source flavor this implies
\begin{equation}
  \langle\psi_{s,\alpha}\vert\psi_{s,\alpha}\rangle
  =
  1,
  \qquad
  2\operatorname{Re}
  \langle\psi_{s,\alpha}\vert
  \partial_i\psi_{s,\alpha}\rangle
  =
  0.
  \label{eq:appD-state-tangent}
\end{equation}
The flavor probabilities and their exact derivatives are
\begin{align}
  P_{\alpha\to\beta}^{(s)}
  &=
  \left|
    \langle\nu_\beta\vert\psi_{s,\alpha}\rangle
  \right|^2,
  \\
  \partial_iP_{\alpha\to\beta}^{(s)}
  &=
  2\operatorname{Re}\!\left[
    \langle\psi_{s,\alpha}\vert\nu_\beta\rangle
    \langle\nu_\beta\vert
    \partial_i\psi_{s,\alpha}\rangle
  \right],
  \label{eq:appD-probability-derivative}
\end{align}
and obey
\begin{equation}
  \sum_\beta P_{\alpha\to\beta}^{(s)}
  =
  1,
  \qquad
  \sum_\beta\partial_iP_{\alpha\to\beta}^{(s)}
  =
  0.
  \label{eq:appD-probability-sum}
\end{equation}
At $\rho=0$, or equivalently $Y_e=0$, the matter term in
Eq.~\eqref{eq:appD-effective-mass} vanishes and the generator reduces to the
vacuum generator of Appendix~\ref{app:pmns-derivatives}.  Both the state and
every oscillation-parameter derivative must therefore agree with the analytic
vacuum expressions, not merely the probabilities.

The same construction extends directly to a piecewise-constant profile.  If
the path contains $N$ ordered layers and $S_k=e^{A_k}$, then
\begin{align}
  S_{N:1}
  &=
  S_N S_{N-1}\cdots S_1,
  \\
  \partial_iS_{N:1}
  &=
  \sum_{k=1}^{N}
  S_N\cdots S_{k+1}\,
  \mathcal L_{\exp}(A_k,\partial_iA_k)\,
  S_{k-1}\cdots S_1.
  \label{eq:appD-layered-profile}
\end{align}
Equation~\eqref{eq:appD-layered-profile} is an exact product rule.  The
numerical results reported below use one constant-density layer.

\subsection{CP reflection in matter}

In the adopted PMNS convention,
\begin{equation}
  U(\mathsf R\bm\lambda)
  =
  U(\bm\lambda)^{*},
  \qquad
  \mathsf R
  =
  J_{\mathrm{CP}}
  =
  \operatorname{diag}(1,1,1,-1,1,1).
  \label{eq:appD-cp-reflection}
\end{equation}
Using the signed matter argument defined in
Eq.~\eqref{eq:appD-signed-matter-extension}, and identifying the neutrino and
antineutrino flavor Hilbert spaces through their canonical flavor-coordinate
bases, the propagation-level CP identity is
\begin{equation}
  \mathsf M_{-}^{2}(\bm\lambda;a)
  =
  \mathsf M_{+}^{2}(\mathsf R\bm\lambda;-a),
  \qquad
  \lvert\psi_{-,\alpha}(\bm\lambda;a)\rangle
  =
  \lvert\psi_{+,\alpha}(\mathsf R\bm\lambda;-a)\rangle.
  \label{eq:appD-matter-cp-state}
\end{equation}
Here $a>0$ is the physical matter-potential magnitude on the antineutrino
side, whereas $-a$ is the signed argument representing the reversed
potential in the neutrino family.

Differentiating the composed family gives, with no sum on $i$,
\begin{equation}
  \lvert\partial_i\psi_{-,\alpha}(\bm\lambda;a)\rangle
  =
  \mathsf R_{ii}
  \left.
  \lvert
    \partial'_i
    \psi_{+,\alpha}(\bm\lambda';-a)
  \rangle
  \right|_{\bm\lambda'=\mathsf R\bm\lambda}.
  \label{eq:appD-matter-cp-derivative}
\end{equation}
Thus the projected quantum geometric tensor $\mathcal G$, the QFIM
$\QFI$, and the curvature matrix $\mathcal D$ satisfy
\begin{align}
  \overline{\mathcal G}(\bm\lambda;a)
  &=
  \mathsf R^{\mathsf T}
  \mathcal G(\mathsf R\bm\lambda;-a)
  \mathsf R,
  \\
  \overline{\QFI}(\bm\lambda;a)
  &=
  \mathsf R^{\mathsf T}
  \QFI(\mathsf R\bm\lambda;-a)
  \mathsf R,
  \\
  \overline{\mathcal D}(\bm\lambda;a)
  &=
  \mathsf R^{\mathsf T}
  \mathcal D(\mathsf R\bm\lambda;-a)
  \mathsf R.
  \label{eq:appD-matter-cp-tensors}
\end{align}
The sign reversal of entries carrying exactly one $\dcp$ index comes from
the coordinate Jacobian $\mathsf R$.  The separate replacement
$a\to-a$ comes from the physical matter Hamiltonian.  These operations are
logically distinct and both are required.

\subsection{Numerical check}

The constant-density implementation is tested at six parameter points, all
three source flavors, both polarities, and four propagation settings:
$(L/\mathrm{km},E/\mathrm{GeV},\rho/(\mathrm{g\,cm}^{-3}),Y_e)$ equal to
$(295,0.6,2.600,0.50)$, $(810,2.0,2.840,0.50)$,
$(1300,2.5,2.848,0.50)$, and $(1300,5.0,3.200,0.48)$.
Fourth-order finite differences use angular steps $10^{-4}$,
mass-splitting steps $10^{-7}$ and $10^{-6}\,\mathrm{eV}^2$, a density
step $10^{-3}\,\mathrm{g\,cm}^{-3}$, and an electron-fraction step
$10^{-4}$.  The largest discrepancies are
$3.09\times10^{-14}$ against the independent block-exponential derivative,
$1.08\times10^{-11}$ for state derivatives, and
$1.52\times10^{-10}$ for probability derivatives.  The zero-density vacuum
limit agrees within $8.13\times10^{-16}$, and the matter-CP state,
derivative, QFIM, and curvature pullbacks agree to binary floating-point
precision.  As a negative control, omitting the matter-sign reversal produces
relative errors as large as $1.394$, $1.813$, and $1.777$ for the state,
QFIM, and curvature, respectively.

\section{Public-DUNE parser, channels, systematics, and source hashes}
\label{app:dune-configuration}

This appendix documents the exact public DUNE configuration used in the case
study, the independent parser that converts its AEDL definitions into the
forward event model, and the provenance checks that bind every public-DUNE
numerical result to exact source bytes.  The inputs are the ancillary files
accompanying \emph{Experiment Simulation Configurations Approximating DUNE
TDR}, version 2 \cite{DUNEPublicConfig2021}, and the external validation uses
GLoBES 3.2.18 \cite{HuberEtAl2005,KoppEtAl2007}.  The DUNE release describes
these files as a simplified summary for phenomenological studies: they contain
far-detector response information and simple normalization uncertainties, but
not the complete TDR near-detector and systematic treatment.  Accordingly,
every result in this work is labeled a result for the public configuration and
not an official DUNE sensitivity.

\subsection{Acquisition and byte-level provenance}
\label{app:dune-provenance}

The acquisition script downloads the versioned arXiv source archive
\begin{center}
  \small
  \nolinkurl{https://arxiv.org/e-print/2103.04797v2}
\end{center}
and verifies its SHA-256 digest before extraction:
\begin{center}
  \small
  \nolinkurl{574975ab4d1b7e77eb3e99dca0ae309050ff54145bcc380388fa805aef636824}
\end{center}
Extraction is rejected if an archive member has an absolute path, contains a
parent-directory component, or is a symbolic link, hard link, or device.  The
analysis then uses only the subdirectory
\texttt{anc/dune\_globes}.  The reproducibility bundle does not redistribute
the raw ancillary files; it contains the verified acquisition script and the
complete hash manifest needed to reacquire and check them.

Let $\mathcal F$ be the 64 regular files in the configuration subdirectory
and $h(f)$ the hexadecimal SHA-256 digest of file $f$.  In addition to
retaining every individual digest, we define a canonical manifest fingerprint
\begin{equation}
  h_{\mathrm{man}}
  =
  \operatorname{SHA256}\!\left(
    \mathop{\Vert}_{f\in\operatorname{sort}(\mathcal F)}
    \left[
      \operatorname{path}(f)
      \mathbin{\Vert}\mathtt{0x00}
      \mathbin{\Vert}h(f)
      \mathbin{\Vert}\mathtt{0x0a}
    \right]
  \right),
  \label{eq:appE-manifest-definition}
\end{equation}
where the files are ordered lexicographically by relative POSIX path, the paths
and hexadecimal digests are encoded in UTF-8, and
$\mathbin{\Vert}$ denotes byte concatenation.  Thus each serialized manifest
entry consists of the relative path, one null byte, the lowercase hexadecimal
digest, and one newline byte.  For the retained configuration,
\begin{center}
  \small
  $\displaystyle h_{\mathrm{man}}={}$%
  \nolinkurl{ba31245a2f832e6dbc2f47cb42e25df17a15292365fadf2d9d85858ffd28a054}
\end{center}
The archive digest agrees among the acquisition script and both provenance
records.  All 64 independently recomputed file digests agree with the
manifest used in the GLoBES validation.  The file inventory
is summarized in Table~\ref{tab:appE-file-inventory}; the machine-readable
supplement gives all relative paths, sizes, and hashes.

\begin{table}[t]
  \centering
  \caption{
  Files in the checksum-locked \texttt{anc/dune\_globes} subdirectory.
  Include files are counted with the data directory that owns them, except for
  the three top-level AEDL files.
  }
  \label{tab:appE-file-inventory}
  \begin{tabular}{@{}lr@{}}
    \toprule
    Source category & Number of files \\
    \midrule
    Top-level AEDL configuration and includes & 3 \\
    Flux definitions and tables                & 3 \\
    Cross-section definitions and tables       & 3 \\
    Smearing definitions and matrices          & 17 \\
    Efficiency definitions and vectors         & 38 \\
    \midrule
    Total                                       & 64 \\
    \bottomrule
  \end{tabular}
\end{table}

\subsection{Strict AEDL-subset parser}
\label{app:dune-parser}

The production reader is intentionally not a general AEDL interpreter and does
not call or link against GLoBES.  It supports only the constructs exercised by
the checksum-locked DUNE files:
\begin{enumerate}
  \item C-style block and line comments, scalar constants, scalar assignments,
        and vector assignments;
  \item included FHC and RHC flux definitions and their seven-column tables;
  \item the charged-current and neutral-current cross-section tables;
  \item compact manual-smearing rows, including their declared true-bin index
        ranges;
  \item nested efficiency includes and post-smearing efficiency vectors;
  \item channel environments, including horn polarity, initial and final
        flavors, interaction type, and the optional \texttt{NOSC} flag; and
  \item rule environments, analysis windows, channel membership, systematic
        names, and their fractional normalization errors.
\end{enumerate}
The parser fails closed if a required assignment is missing, a table is ragged
or nonfinite, energy coordinates are not ordered, a smearing range is invalid,
a channel reference is unresolved, or a parsed object count differs from the
frozen contract.  In particular, it requires two flux definitions, two
cross-section tables, 16 smearing matrices, 30 efficiency vectors, 32
channels, and the four rules in their declared order.

The public-DUNE path uses the constants compiled into GLoBES 3.2.18 rather
than the rounded paper-wide vacuum coefficient:
\begin{align}
  c_{\phi}^{\mathrm{GLoBES}}
  &=
  \frac{1}{2(1.97327\times10^{-10})(10^{9})}
  =
  2.533865107157155,
  \\
  c_a^{\mathrm{GLoBES}}
  &=
  2(10^9)(7.63247\times10^{-14})
  =
  1.526494\times10^{-4}.
  \label{eq:appE-globes-constants}
\end{align}
Thus its propagation amplitude is
\begin{equation}
  \exp\!\left[
    -i c_{\phi}^{\mathrm{GLoBES}}
    \frac{L}{E}\mathsf M^2
  \right],
\end{equation}
while the effective matter term is
\begin{equation}
  a
  =
  c_a^{\mathrm{GLoBES}}Y_e\rho E
\end{equation}
in the GLoBES numerical-unit convention.  Keeping these constants separate
from the default $2(1.267)$ convention prevents a hidden convention
adjustment in the external comparison.

\subsection{Beam, binning, and detector response}
\label{app:dune-response}

The parsed scalar configuration is
\begin{equation}
  L=1284.9~\mathrm{km},
  \qquad
  \rho=2.848~\mathrm{g\,cm^{-3}},
  \qquad
  M_{\mathrm{fid}}=40~\mathrm{kt}.
  \label{eq:appE-dune-scalars}
\end{equation}
Both the FHC and RHC definitions use $6.5$ years, an exposure factor of
$11\times10^{20}$ protons on target per year, and the supplied far-detector
normalization $1.017718\times10^{17}$.  Each flux table has 501 energy rows.
The cross sections are interpolated linearly in $\log_{10}E$ according to
the GLoBES table convention, whereas the flux is interpolated linearly in
$E$.  Values outside the tabulated support are set to zero.

The detector definition contains 99 true-energy sampling bins and 80
reconstructed-energy bins, both spanning $0$--$110~\mathrm{GeV}$.  The
four rules use the declared reconstructed-energy window
\begin{equation}
  0.5
  \leq
  E_{\mathrm{rec}}
  \leq
  18~\mathrm{GeV}.
\end{equation}
Because selection is applied to bin centers, this retains 66 reconstructed
bins per rule and gives the joint $4\times66=264$-bin event vector.

For channel $c$, reconstructed bin $b$, and true-energy sampling bin $t$,
the independent rate engine evaluates
\begin{align}
  \mu_{cb}(\bm\lambda)
  ={}&
  \epsilon_{cb}
  \sum_t
  R^{(c)}_{bt}\,
  \Phi_{c}(E_t)\,
  \sigma_c(E_t)\,
  P_c(E_t;\bm\lambda)\,
  M_{\mathrm{fid}}\,
  \Delta E_t.
  \label{eq:appE-channel-rate}
\end{align}
Here $R^{(c)}_{bt}$ is the supplied manual true-to-reconstructed migration
matrix, and $\epsilon_{cb}$ is applied after smearing, as specified by the
AEDL channel.  The quantity $\Phi_c$ is the effective flux--exposure factor:
it includes the tabulated flux, exposure, supplied far-detector normalization,
and declared $L^{-2}$ factor.  For charged-current channels, $P_c$ is the
appropriate constant-density flavor-transition probability.  For
neutral-current channels marked \texttt{NOSC}, the supplied configuration
sets
\begin{equation}
  P_c=1.
\end{equation}

Every migration matrix has shape $80\times99$, contains only finite
nonnegative entries, and has active true-energy column sums equal to unity to
within $1.40\times10^{-5}$, the precision of the supplied decimal tables.
All 30 post-smearing efficiency vectors have length 80 and entries in
$[0,1]$.

\subsection{Channels and analysis rules}
\label{app:dune-channels}

Each channel fixes a beam definition, neutrino or antineutrino polarity,
initial and final flavor, interaction table, migration matrix, and efficiency
vector.  All references resolve, and each of the 32 channels belongs to
exactly one rule.  The channel families are given in
Table~\ref{tab:appE-channel-families}.  Each multiplicity-four family contains
both FHC and RHC definitions and both neutrino and antineutrino polarities.

\begin{table}[t]
  \centering
  \caption{
  Semantic partition of the 32 public-DUNE channels.  Each displayed
  neutrino transition also represents the corresponding CP-conjugate
  antineutrino channels included in the same multiplicity-four family.
  ``NOSC'' is the explicit no-oscillation prescription in the supplied
  neutral-current definitions.
  }
  \label{tab:appE-channel-families}
  \begin{tabular}{@{}llclc@{}}
    \toprule
    Sample & Role & Transition & Interaction & Multiplicity \\
    \midrule
    Appearance    & signal          & $\nu_\mu\to\nu_e$     & CC & 4 \\
    Appearance    & intrinsic beam  & $\nu_e\to\nu_e$       & CC & 4 \\
    Appearance    & muon-flavor     & $\nu_\mu\to\nu_\mu$  & CC & 4 \\
    Appearance    & tau-flavor      & $\nu_\mu\to\nu_\tau$ & CC & 4 \\
    Appearance    & neutral current & NOSC                     & NC & 4 \\
    Disappearance & signal          & $\nu_\mu\to\nu_\mu$  & CC & 4 \\
    Disappearance & tau-flavor      & $\nu_\mu\to\nu_\tau$ & CC & 4 \\
    Disappearance & neutral current & NOSC                     & NC & 4 \\
    \midrule
    Total         &                 &                          &    & 32 \\
    \bottomrule
  \end{tabular}
\end{table}

Equivalently, the declared channel set contains 16 FHC and 16 RHC definitions,
16 neutrino and 16 antineutrino polarities, 24 charged-current and eight
neutral-current channels, 24 oscillated and eight \texttt{NOSC} channels, and
eight signal plus 24 background channels.  As shown below, the eight declared
tau-background channels have zero rates for the supplied cross-section tables;
they remain part of these channel counts but do not contribute nonzero
predicted events.  The rule composition is frozen in
Table~\ref{tab:appE-rules}.

\begin{table}[t]
  \centering
  \caption{
  Public-DUNE analysis rules.  Signal and background entries count declared
  channels.  The selected-bin count follows from applying the common energy
  window to reconstructed-bin centers.
  }
  \label{tab:appE-rules}
  \begin{tabular}{@{}llllr@{}}
    \toprule
    AEDL rule & Beam/sample & Signal & Background & Selected bins \\
    \midrule
    \texttt{nue\_app}     & FHC appearance    & 2 & 8 & 66 \\
    \texttt{nuebar\_app}  & RHC appearance    & 2 & 8 & 66 \\
    \texttt{numu\_dis}    & FHC disappearance & 2 & 4 & 66 \\
    \texttt{numubar\_dis} & RHC disappearance & 2 & 4 & 66 \\
    \bottomrule
  \end{tabular}
\end{table}

Rule spectra are formed only after the full channel response has been applied:
\begin{equation}
  \mu_{rb}(\bm\lambda)
  =
  \sum_{c\in\mathcal C_r}
  \mu_{cb}(\bm\lambda),
  \label{eq:appE-rule-rate}
\end{equation}
where $\mathcal C_r$ contains the signal and background channels declared
for rule $r$.  Keeping the rule and reconstructed-bin labels gives the
264-bin model used in the Fisher and likelihood calculations; replacing each
rule by one total count is a distinct coarse-graining operation.

\subsection{Named normalization nuisances and correlations}
\label{app:dune-systematics}

Nine declared systematic names encode the public configuration's normalization
errors.  Let $g(c)$ map a channel occurrence to its AEDL systematic name, and
let $s_{g(c)}$ be the corresponding fractional error.  The linear rate model
is
\begin{equation}
  \mu_{rb}(\bm\lambda,\bm\eta)
  =
  \sum_{c\in\mathcal C_r}
  \left[
    1+s_{g(c)}\eta_{g(c)}
  \right]
  \mu_{cb}(\bm\lambda),
  \qquad
  \eta_k\sim\mathcal N(0,1).
  \label{eq:appE-nuisance-rate}
\end{equation}
A repeated name denotes one shared nuisance coordinate, not a new coordinate
at each occurrence.  The same pull therefore acts coherently on every channel
carrying that name, including occurrences in different rules and horn
polarities.  The complete correlation graph is summarized in
Table~\ref{tab:appE-systematics}.

\begin{table}[t]
  \centering
  \caption{
  Named normalization uncertainties in the public configuration.
  ``Occurrences'' counts channel assignments.  The active column indicates
  whether the resulting event-rate direction is nonzero for the supplied
  tables.
  }
  \label{tab:appE-systematics}
  \begin{tabular}{@{}lrrlc@{}}
    \toprule
    Name & Fraction & Occurrences & Rules & Active \\
    \midrule
    \texttt{err\_nue\_sig}     & 0.02 & 2 & FHC app.     & yes \\
    \texttt{err\_nue\_sigbar}  & 0.02 & 2 & RHC app.     & yes \\
    \texttt{err\_numu\_sig}    & 0.05 & 2 & FHC dis.     & yes \\
    \texttt{err\_numu\_sigbar} & 0.05 & 2 & RHC dis.     & yes \\
    \texttt{err\_nue\_bg}      & 0.05 & 2 & FHC app.     & yes \\
    \texttt{err\_nue\_bgbar}   & 0.05 & 2 & RHC app.     & yes \\
    \texttt{err\_numu\_bg}     & 0.05 & 8 & FHC/RHC app. & yes \\
    \texttt{err\_nc\_bgdis}    & 0.10 & 4 & FHC/RHC dis. & yes \\
    \texttt{err\_nutau\_bg}    & 0.20 & 8 & all rules    & no  \\
    \bottomrule
  \end{tabular}
\end{table}

\subsubsection{Declared but inactive tau-background mode}

The AEDL files declare eight
$\nu_\tau$ or $\bar\nu_\tau$ background channels and attach the common
$20\%$ uncertainty
\texttt{err\_nutau\_bg}.  However, the two tau-flavor columns of the supplied
charged-current cross-section table are identically zero.  Equation
\eqref{eq:appE-channel-rate} therefore gives
\begin{equation}
  \mu_{cb}=0
  \quad
  \text{for every declared tau-background channel},
  \qquad
  \partial_{\eta_{\mathrm{nutau}}}\bm\mu
  =
  \bm0.
  \label{eq:appE-inactive-tau-mode}
\end{equation}
Thus there are nine declared nuisance variables but eight active event-rate
directions in this public package.  Retaining the ninth coordinate is
harmless: its unit-Gaussian penalty is minimized uniquely at
$\eta_{\mathrm{nutau}}=0$, and it does not alter the profiled likelihood or
Fisher matrix.  Nevertheless, the distinction between a declared channel and
a nonzero event contribution is essential.  The tau-background definitions
must not be described as nonzero predicted backgrounds for this
configuration.

\subsubsection{Correlation negative control}

To test the repeated-name interpretation, we deliberately replaced the nine
named coordinates by 32 occurrence-wise independent Gaussian pulls.  Only 24
of those occurrence directions are active because the eight tau-channel rates
vanish.  Let
$\CFI_{\mathrm{prof}}^{\mathrm{named}}$ denote the correctly profiled Fisher
matrix with shared named nuisances and
$\CFI_{\mathrm{prof}}^{\mathrm{occ}}$ the deliberately incorrect
occurrence-wise result.  With the declared nonsingular coordinate scale matrix
$S$, define
\begin{equation}
  \widetilde{\CFI}_{\mathrm{prof}}^{\,x}
  =
  S^{\mathsf T}
  \CFI_{\mathrm{prof}}^{x}
  S,
  \qquad
  x\in\{\mathrm{named},\mathrm{occ}\}.
  \label{eq:appE-correlation-control-scaling}
\end{equation}
At the benchmark, the incorrect construction changes the scaled
profiled Fisher matrix by
\begin{equation}
  \frac{
    \left\|
      \widetilde{\CFI}_{\mathrm{prof}}^{\,\mathrm{occ}}
      -
      \widetilde{\CFI}_{\mathrm{prof}}^{\,\mathrm{named}}
    \right\|_{\mathrm F}
  }{
    \left\|
      \widetilde{\CFI}_{\mathrm{prof}}^{\,\mathrm{named}}
    \right\|_{\mathrm F}
  }
  =
  3.6593922\times10^{-2},
  \label{eq:appE-wrong-correlation-error}
\end{equation}
and gives
\begin{equation}
  \frac{
    \det\CFI_{\mathrm{prof}}^{\mathrm{occ}}
  }{
    \det\CFI_{\mathrm{prof}}^{\mathrm{named}}
  }
  =
  0.9793036.
  \label{eq:appE-wrong-correlation-determinant}
\end{equation}
The determinant ratio is unchanged if the same nonsingular coordinate scaling
is applied to both matrices.  The discrepancy is much larger than the
numerical validation tolerances and demonstrates that occurrence-wise
profiling defines a different statistical model.

\subsection{Configuration and external-engine validation}
\label{app:dune-parser-validation}

The validation covers the archive hash
chain, all individual file hashes, parser dimensions, bin geometry, response
normalization, channel references, rule membership, systematic errors,
correlation structure, and the inactive tau direction.  Independently
recomputed fixed and profiled Fisher matrices reproduce the stored reference
matrices with guarded error
$2.34\times10^{-19}$.

Small C exporters compiled against GLoBES 3.2.18 provide the independent
external comparison.  Table~\ref{tab:appE-validation} summarizes the retained,
provenance-locked results.  Spectra are compared for signal and background
separately across four rules and 80 reconstructed bins.  Derivatives use
fourth-order centered five-point differences in all six oscillation
coordinates.  For the likelihood comparison, we evaluate the profiled
\texttt{chiMultiExp} statistic.  One half of its numerical Hessian is then
compared with the independently assembled nuisance Schur complement.

\begin{table}[t]
  \centering
  \caption{
  Provenance, parser, and external-engine validation.  The manifest hash is
  reported in the text because it is a categorical byte-level check.
  }
  \label{tab:appE-validation}
  \begin{tabular}{@{}p{0.72\linewidth}r@{}}
    \toprule
    Diagnostic & Result \\
    \midrule
    Source files / hash mismatches
      & $64/0$ \\
    Maximum active smearing-column normalization error
      & $1.40\times10^{-5}$ \\
    Independent versus stored fixed/profiled Fisher matrices
      & $2.34\times10^{-19}$ \\
    Independent GLoBES spectrum relative $L_2$ error
      & $3.87\times10^{-16}$ \\
    Largest GLoBES five-point derivative relative $L_2$ error
      & $2.32\times10^{-9}$ \\
    Profiled GLoBES Hessian versus Schur Fisher relative $L_2$ error
      & $3.26\times10^{-6}$ \\
    Incorrect occurrence-wise nuisance-model Fisher error
      & $3.659\times10^{-2}$ \\
    \bottomrule
  \end{tabular}
\end{table}

These checks validate the source-to-event implementation used in the public
case study: exact source bytes, parser semantics, response ordering,
channel/rule aggregation, and the declared nuisance correlations are all
fixed and independently testable.  They do not enlarge the scope of the input
model.  In particular, the configuration still contains simplified
normalization systematics, no explicit near-detector likelihood, and no basis
for labeling the resulting precision numbers as official DUNE sensitivity.

\section{Exact likelihood profiling and local curvature}
\label{app:dune-likelihood}

This appendix gives the public-configuration likelihood, proves uniqueness of
the normalization-nuisance profile, and derives its local Fisher curvature.
The Asimov convention follows Ref.~\cite{CowanCranmerGrossVitells2011}.
Coverage calibration remains separate because periodic phases, discrete
ordering, boundaries, degeneracies, and constrained nuisances can violate the
regularity conditions of a naive Wilks approximation
\cite{FeldmanCousins1998,NOvAProfiledFC2022}.

\subsection{Poisson likelihood with auxiliary constraints}
\label{app:dune-exact-likelihood}

Let $n_b$ be the observed count in reconstructed analysis bin $b$, and let
$\bm\lambda$ denote the oscillation parameters.  The nine AEDL normalization
coordinates identified in Appendix~\ref{app:dune-systematics} are written as
standardized pulls $\bm\xi$.  If $\bm c$ denotes the measured centers of
their unit-Gaussian auxiliary constraints, the rate model is
\begin{equation}
  \mu_b(\bm\lambda,\bm\xi)
  =
  \mu_b^0(\bm\lambda)
  +
  \sum_{k=1}^{9}
  \xi_k A_{kb}(\bm\lambda),
  \label{eq:appF-linear-rate-model}
\end{equation}
where $A_{kb}$ already includes the fractional AEDL error and the sum of all
channel occurrences carrying systematic name $k$.  Thus repeated AEDL names
remain one coherent nuisance coordinate.

Up to terms independent of the prediction, twice the negative joint
log-likelihood ratio is
\begin{align}
  \chi^2(\bm\lambda,\bm\xi;\bm n,\bm c)
  ={}&
  2\sum_b
  \left[
    \mu_b-n_b
    +
    n_b\ln\!\left(\frac{n_b}{\mu_b}\right)
  \right]
  \nonumber\\
  &+
  (\bm\xi-\bm c)^{\mathsf T}
  (\bm\xi-\bm c),
  \label{eq:appF-joint-deviance}
\end{align}
with $0\ln(0/\mu_b)\equiv0$ and the fit restricted to $\mu_b>0$ for every
bin.  The physics profile is
\begin{equation}
  \chi^2_{\mathrm{prof}}(\bm\lambda;\bm n,\bm c)
  =
  \min_{\bm\xi:\,\bm\mu>0}
  \chi^2(\bm\lambda,\bm\xi;\bm n,\bm c).
  \label{eq:appF-profile-definition}
\end{equation}

The conditional auxiliary-center ensemble fixes $\bm c=\bm0$.  In the joint
auxiliary-measurement ensemble, a new
$\bm c\sim\mathcal N(\bm0,\mathbb I_9)$ is drawn for each pseudo-experiment.
These are different repeated-sampling
prescriptions and need not have identical coverage.  The draw formally
includes the center of the inactive tau-background coordinate; as shown
below, that coordinate profiles to its drawn center and therefore has no
effect on the event-rate likelihood or the profile statistic.

\subsection{Strict convexity and the inactive tau coordinate}
\label{app:dune-nuisance-convexity}

Write $\mathsf A$ for the $9\times264$ matrix with entries $A_{kb}$.
Using componentwise division in the score, one half of the nuisance gradient
and Hessian are
\begin{align}
  \frac{1}{2}\nabla_{\bm\xi}\chi^2
  &=
  \mathsf A
  \left(
    \bm1-\frac{\bm n}{\bm\mu}
  \right)
  +
  \bm\xi-\bm c,
  \label{eq:appF-nuisance-score}
  \\
  \frac{1}{2}
  \nabla^2_{\bm\xi\bm\xi}\chi^2
  &=
  \mathbb I_9
  +
  \mathsf A
  \operatorname{diag}\!\left(
    \frac{n_b}{\mu_b^2}
  \right)
  \mathsf A^{\mathsf T}
  \succeq
  \mathbb I_9.
  \label{eq:appF-nuisance-hessian}
\end{align}
The positive-mean domain is the intersection of the affine half-spaces
$\mu_b(\bm\lambda,\bm\xi)>0$ and is therefore convex.  The Gaussian term
makes the Hessian strictly positive definite throughout that domain.
Consequently, any interior stationary solution is the unique global nuisance
minimum over the positive-mean domain.

We obtain this solution with a positivity-preserving damped Newton iteration.
The Newton step is shortened before any predicted mean can cross zero and is
then backtracked until the objective satisfies an Armijo decrease condition.

As established in Appendix~\ref{app:dune-systematics}, the row associated with
\texttt{err\_nutau\_bg} is identically zero for the supplied public tables.
The corresponding score equation reduces to
\begin{equation}
  \widehat\xi_{\mathrm{nutau}}
  =
  c_{\mathrm{nutau}}.
  \label{eq:appF-inactive-tau-pull}
\end{equation}
It contributes a unit eigenvalue to the nuisance Hessian but no event-rate
direction and no likelihood penalty at its optimum.  Therefore the exact
likelihood has nine declared nuisance coordinates and eight active event-rate
directions in both the fixed-center and fluctuated-center ensembles.

At the Asimov generating point,
$\bm n=\bm\mu^0(\bm\lambda_\star)$ and $\bm c=\bm0$.  The calculation gives
zero deviance and zero pulls; the nuisance-gradient infinity norm is
$1.02\times10^{-14}$.
The smallest eigenvalue of the nuisance Hessian in
Eq.~\eqref{eq:appF-nuisance-hessian} is unity to machine precision.

\subsection{Local equivalence to the profiled Fisher matrix}
\label{app:dune-profile-curvature}

Let
\begin{equation}
  \mathsf D_{ib}
  =
  \left.
  \frac{\partial\mu_b^0}{\partial\lambda_i}
  \right|_{\bm\lambda_\star},
  \qquad
  \mathsf W
  =
  \operatorname{diag}\!\left(
    \frac{1}{\mu_b^\star}
  \right).
  \label{eq:appF-rate-jacobian}
\end{equation}
The joint expected-information blocks at the Asimov point are
\begin{align}
  \mathsf F_{\lambda\lambda}
  &=
  \mathsf D\mathsf W\mathsf D^{\mathsf T},
  &
  \mathsf F_{\lambda\xi}
  &=
  \mathsf D\mathsf W\mathsf A^{\mathsf T},
  \\
  \mathsf F_{\xi\xi}
  &=
  \mathbb I_9
  +
  \mathsf A\mathsf W\mathsf A^{\mathsf T}.
  \label{eq:appF-joint-fisher-blocks}
\end{align}
Exact nuisance profiling gives the Schur complement
\begin{equation}
  \mathsf F_{\mathrm{prof}}
  =
  \mathsf F_{\lambda\lambda}
  -
  \mathsf F_{\lambda\xi}
  \mathsf F_{\xi\xi}^{-1}
  \mathsf F_{\xi\lambda},
  \label{eq:appF-profiled-schur}
\end{equation}
and the local likelihood identity
\begin{equation}
  \left.
  \frac{1}{2}
  \nabla^2_{\bm\lambda\bm\lambda}
  \chi^2_{\mathrm{prof}}
  \right|_{\bm\lambda_\star}
  =
  \mathsf F_{\mathrm{prof}}.
  \label{eq:appF-hessian-fisher-identity}
\end{equation}

We validate Eq.~\eqref{eq:appF-hessian-fisher-identity} twice.  First, define
the scale vector and its diagonal scale matrix by
\begin{equation}
  \bm s
  =
  \left(
    1,
    1,
    1,
    1,
    10^{-4}\,\mathrm{eV}^{2},
    10^{-3}\,\mathrm{eV}^{2}
  \right)^{\mathsf T},
  \qquad
  S
  =
  \operatorname{diag}(\bm s),
  \label{eq:appF-curvature-scale-vector}
\end{equation}
and introduce dimensionless local coordinates through
\begin{equation}
  \bm\lambda
  =
  \bm\lambda_\star+S\bm x.
  \label{eq:appF-curvature-scaled-coordinates}
\end{equation}
A 73-point central Hessian of the independently minimized profile is then
evaluated using the common dimensionless coordinate step
\begin{equation}
  h_{\mathrm{curv}}
  =
  3.0\times10^{-5}.
  \label{eq:appF-curvature-step}
\end{equation}
For six coordinates, the central construction uses the generating point, the
12 one-coordinate displacements, and the 60 mixed-coordinate displacements,
giving $1+12+60=73$ profile evaluations.  The resulting Hessian agrees with
the analytic Schur complement, transformed to the same scaled coordinates, to
relative $L_2$ error $7.75\times10^{-9}$.  All 73 nuisance fits converge,
and the largest nuisance-gradient norm is $6.39\times10^{-7}$.

Second, the separately compiled GLoBES 3.2.18
\texttt{chiMultiExp} profile gives relative error $3.26\times10^{-6}$.
The latter comparison independently checks the spectrum, the repeated-name
correlations, and the likelihood normalization.

\subsection{Scope of the nonlocal calculations}

The finite-bank statistic, paired pseudo-experiment ensemble, and expanded
four-coordinate profile are defined and reported once in
Sec.~\ref{sec:dune}.  The exact likelihood and convex nuisance profile above
establish the statistical objective used in those calculations; they do not
supply a calibrated global confidence construction.  Such a construction
would require continuously or adaptively converged profiles, a declared
external-constraint model, and Monte Carlo calibration at multiple regular
and degenerate truths in both orderings
\cite{NOvAProfiledFC2022}.

\bibliographystyle{apsrev4-2}
\bibliography{references}

\end{document}